\newcommand{\continmax}{\textsc{ConTinEst}\xspace}
\newcommand{\influmax}{\textsc{Influmax}\xspace}
\newcommand{\netrate}{\textsc{NetRate}\xspace}
\newcommand{\spm}{\textsc{SP1M}\xspace}
\newcommand{\pmia}{\textsc{PMIA}\xspace}
\newtheorem{claim}{Claim}
\newcommand{\Graph}{\Gcal}
\newcommand{\Node}{\Vcal}
\newcommand{\Edge}{\Ecal}
\newcommand{\Item}{\Lcal}
\newcommand{\Ground}{\Zcal}
\newcommand{\nGround}{N}
\newcommand{\Ind}{\Ical}
\newcommand{\budget}{b}
\newcommand{\attention}{u}
\newcommand{\Greedy}{G}
\newcommand{\g}{g}
\newcommand{\Optimal}{O}
\newcommand{\cur}{c}
\newcommand{\budgetmax}{\textsc{BudgetMax}\xspace}
\newcommand{\gdegree}{{GreedyDegree}\xspace}
\begin{document}

\title{Scalable Influence Maximization for Multiple Products in \\ Continuous-Time Diffusion Networks}
\author{\name Nan Du \email dunan@google.com \\
  \addr Google Research, 
        1600 Amphitheatre Pkwy, Mountain View, CA 94043	
        \AND
  \name Yingyu Liang \email yingyul@cs.princeton.edu \\
  \addr Department of Computer Science, Princeton University, 
        Princeton, NJ 08540
  \AND
  \name Maria-Florina Balcan \email ninamf@cs.cmu.edu \\
  \addr School of Computer Science, 
        Carnegie Mellon University, 
        Pittsburgh, PA 15213
  \AND  
  \name Manuel Gomez-Rodriguez \email manuelgr@mpi-sws.org \\
  \addr MPI for Software Systems, 
            Kaiserslautern, Germany 67663
  \AND
  \name Hongyuan Zha \email zha@cc.gatech.edu\\
  \name Le Song \email lsong@cc.gatech.edu\\
  \addr College of Computing, 
        Georgia Institute of Technology, 
        Atlanta, GA 30332
       }

\editor{}

\maketitle

\begin{abstract}

A typical viral marketing model identifies influential users in a social network to maximize a single product adoption assuming unlimited user attention, campaign budgets, and time. In reality, multiple products need campaigns, users have limited attention, convincing users incurs costs, and advertisers have limited budgets and expect the adoptions to be maximized soon. Facing these user, monetary, and timing constraints, we formulate the problem as a submodular maximization task in a continuous-time diffusion model under the intersection of one matroid and multiple knapsack constraints. We propose a randomized algorithm estimating the user influence\footnote{Partial results in the paper on influence estimation have been published in a conference paper: Nan Du, Le Song, Manuel Gomez-Rodriguez, and Hongyuan Zha. Scalable influence estimation in continuous time diffusion networks. In Advances in Neural Information Processing Systems 26, 2013.} in a network ($|\Vcal|$ nodes, $|\Ecal|$ edges) to an accuracy of $\epsilon$ with $n=\Ocal(1/\epsilon^2)$ randomizations and $\tilde\Ocal(n|\Ecal|+n|\Vcal|)$ computations. By exploiting the influence estimation algorithm as a subroutine, we develop an adaptive threshold greedy algorithm achieving an approximation factor $k_a/(2+2 k)$ of the optimal when $k_a$ out of the $k$ knapsack constraints are active. Extensive experiments on networks of millions of nodes demonstrate that the proposed algorithms achieve the state-of-the-art in terms of effectiveness and scalability.

\end{abstract}

\begin{keywords}
Influence Maximization, Influence Estimation, Continuous-time Diffusion Model, Matroid, Knapsack
\end{keywords}

\section{Introduction}

Online social networks play an important role in the promotion of new products, the spread of news, the success of political campaigns, and the diffusion of technological innovations. In these contexts, the influence maximization problem (or viral marketing problem) typically has the following flavor: identify a set of influential users in a social network, who, when convinced to adopt a product, shall influence other users in the network and trigger a large cascade of adoptions. This problem has been studied extensively in the literature from both the modeling and the algorithmic aspects~\citep{RicDom02, KemKleTar03,  Leskovecetal07, CheWanYan09, CheWanWan2010, CheYuaYifZha2010, CheColCumKeetal2011, CheLuZha2012, LenBonCas10,GoyLuLak2011a,GoyLuLak2011b,GomSch12},
where it has been typically assumed that the host (\eg, the owner of an online social platform) faces a single product, endless user attention, unlimited budgets and unbounded time. However, in reality, the host often encounters a much more constrained scenario:
\begin{itemize}
  \item {\bf Multiple-Item Constraints:} multiple products can spread simultaneously among the same set of social entities. These products may have different characteristics, such as their revenues and speed of spread.
  \item {\bf Timing Constraints:} the advertisers expect the influence to occur within a certain time window, and different products may have different timing requirements.
  \item {\bf User Constraints:} users of the social network, each of which can be a potential source, would like to be exposed to only a small number of ads. Furthermore, users may be grouped by their geographical locations, and advertisers may have a target population they want to reach.
  \item {\bf Product Constraints:} seeking initial adopters entails a cost to the advertiser, who needs to pay to the host and often has a limited amount of money.
\end{itemize}
For example, Facebook (\ie, the host) needs to allocate ads for various products with different characteristics, \eg, clothes, books, or cosmetics. 
While some products, such as clothes, aim at influencing within a short time window, some others, such as books, may allow for longer periods. 
Moreover, Facebook limits the number of ads in each user'{}s side-bar (typically it shows less than five) and, as a consequence, it cannot assign all ads to a few highly influential users.
%
%
Finally, each advertiser has a limited budget to pay for ads on Facebook and thus each ad can only be displayed to some subset of users.
%
In our work, we incorporate these myriads of practical and important requirements into consideration in the influence maximization problem.

We account for the multi-product and timing constraints by applying product-specific continuous-time diffusion models. 
Here, we opt for continuous-time diffusion models instead of discrete-time models, which have been mostly used in previous work~\citep{KemKleTar03,CheWanYan09, CheWanWan2010, CheYuaYifZha2010,CheColCumKeetal2011, CheLuZha2012,BorBraChaLuc12}. This is because artificially discretizing the time axis into bins introduces additional errors. One can adjust the additional tuning parameters, like the bin size, to balance the tradeoff between the error and the computational cost, but the parameters are not easy to choose optimally. Extensive experimental comparisons on both synthetic and real-world data have shown that discrete-time models provide less accurate influence estimation than their continuous-time counterparts~\citep{GomBalSch11, GomSch12, GomLesSch2013a, DuSonZhaGom13, DuSonWooZha13}. 

%
%
However, maximizing influence based on continuous-time diffusion models also entails additional challenges. 
First, evaluating the objective function of the influence maximization problem (\ie, the influence estimation problem) in this setting is a difficult graphical model inference problem, \ie, computing the marginal density of continuous variables in loopy graphical models. The exact answer can be computed only for very special cases. For example,~\cite{GomSch12} have shown that the problem can be solved exactly when the transmission functions are exponential densities, by using continuous time Markov processes theory. However, the computational complexity of such approach, in general, scales exponentially with the size and density of the network. Moreover, extending the approach to deal with arbitrary transmission functions would require additional nontrivial approximations which would increase even more the computational complexity.
Second, it is unclear how to scale up influence estimation and maximization algorithms based on continuous-time diffusion models to millions of nodes. Especially in the maximization case, the influence estimation procedure needs to be called many times for different subsets of selected nodes. \emph{Thus, our first goal is to design a scalable algorithm which can perform influence 
estimation in the regime of networks with millions of nodes.}

We account for the user and product constraints by restricting the feasible domain over which the maximization is performed. 
We first show that the overall influence function of multiple products is a submodular function and then realize that the user and product constraints correspond to constraints over the ground set of this submodular function. 
To the best of our knowledge, previous work has not considered both user and product constraints simultaneously over general unknown different diffusion networks with non-uniform costs.  
In particular, \citep{SamAniNis2010} first tried to model both the product and user constraints only with uniform costs and infinite time window, which essentially reduces to a special case of our formulations. Similarly, ~\citep{LuBonAmiLak13} considered the allocation problem of multiple products which may have competitions within the infinite time window. Besides, they all assume that multiple products spread within the same network. In contrast, our formulations generally allow products to have different diffusion networks, which can be unknown in practice. 
~\cite{SomKakInaKaw14} studied the influence maximization problem for one product subject to one knapsack constraint over a known bipartite graph between marketing channels and potential customers;
~\cite{LenBonCas10} and~\cite{SunCheLiuWanetal11} considered user constraints but disregarded product constraints during the initial assignment;
and, ~\cite{NarNan12} studied the cross-sell phenomenon (the selling of the first product raises the chance of selling the second) and included monetary constraints for all the products. 
However, no user constraints were considered, and the cost of each user was still uniform for each product. 
\emph{Thus, our second goal is to design an efficient submodular maximization algorithm which can take into account both user and product constraints simultaneously.}

Overall, this article includes the following major contributions:
\begin{itemize}
\item Unlike prior work that considers an a priori described simplistic discrete-time diffusion model, we first {\bf learn} the diffusion networks from data by using continuous-time diffusion models. 
This allows us to address the timing constraints in a principled way.
\item We provide a novel formulation of the influence estimation problem in the continuous-time diffusion model from the perspective of probabilistic graphical models, which allows heterogeneous diffusion dynamics over the edges.
\item We propose an efficient randomized algorithm for continuous-time influence estimation, which can scale up to millions of nodes and estimate the influence of each node to an accuracy of $\epsilon$ using  $n=\Ocal(1/\epsilon^2)$ randomizations.
\item We formulate the influence maximization problem with the aforementioned constraints as a submodular maximization under the intersection of matroid constraints and knapsack constraints. The submodular function we use is based on the actual diffusion model learned from the data for the time window constraint. This novel formulation provides us a firm theoretical foundation for designing greedy algorithms with theoretical guarantees.
\item We develop an efficient adaptive-threshold greedy algorithm which is linear in the number of products and proportional to $\widetilde \Ocal(|\Node|+|\Edge^*|)$, where $|\Node|$ is the number of nodes (users) and $|\Edge^*|$ is the number of edges in the largest diffusion network. 
We then prove that this algorithm is guaranteed to find a solution with an overall influence of at least $\frac{k_a}{2+2k}$ of the optimal value, when $k_a$ out of the $k$ knapsack constraints are active. 
This improves over the best known approximation factor achieved by polynomial time algorithms in the combinatorial optimization literature. 
Moreover, whenever advertising each product to each user entails the same cost, the constraints reduce to an intersection of matroids, and we obtain an approximation factor of $1/3$, which is optimal for such optimization.

\item We evaluate our algorithms over large synthetic and real-world datasets and show that our proposed methods significantly improve over previous state-of-the-arts in terms of both the accuracy of the estimated influence and the quality of the selected nodes in maximizing the influence over independently hold-out real testing data.

\end{itemize}

In the remainder of the paper, we will first tackle the influence estimation problem in section~\ref{sec:influestimation}. We then formulate different realistic constraints for the influence maximization in section~\ref{sec:modelconstraints} and present the adaptive-thresholding greedy algorithm with its theoretical analysis in section \ref{sec:influmaximization}; we investigate the performance of the proposed algorithms in both synthetic and real-world datasets in section~\ref{sec:exp}; and finally we conclude in section~\ref{sec:cl}.

\section{Influence Estimation\label{sec:influestimation}}

We start by revisiting the continuous-time diffusion model by~\cite{GomBalSch11} and then explicitly formulate the influence estimation problem from the perspective of probabilistic graphical models.
Because the efficient inference of the influence value for each node is highly non-trivial, we further develop a scalable influence estimation algorithm which is able to handle networks of millions of nodes. 
The influence estimation procedure will be a key building block for our later influence maximization algorithm.

\subsection{Continuous-Time Diffusion Networks}
%

The continuous-time diffusion model associates each edge with a transmission function, that is, a density over the transmission time along the edge, in contrast to previous discrete-time models which associate each edge with a fixed infection probability~\citep{KemKleTar03}. Moreover, it also differs from discrete-time models in the sense that events in a cascade are not generated iteratively in rounds, but event timings are sampled directly from the transmission function in the continuous-time model.

{\bf Continuous-Time Independent Cascade Model.} Given a \emph{directed} contact network, $\Gcal = (\Vcal,\Ecal)$, we use the independent cascade model for modeling a diffusion process~\citep{KemKleTar03,GomBalSch11}. The process begins with a set of infected source nodes, $\Acal$, initially adopting certain contagion (idea, meme or product) at time zero. The contagion is transmitted from the sources along their out-going edges to their direct neighbors. Each transmission through an edge entails \emph{random} waiting times, $\tau$, drawn from different independent pairwise waiting time distributions(one per edge). Then, the infected neighbors transmit the contagion to their respective neighbors, and the process continues. We assume that an infected node remains infected for the entire diffusion process. Thus, if a node $i$ is infected by multiple neighbors, only the neighbor that first infects node $i$ will be the \emph{true} parent. As a result, although the contact network can be an arbitrary directed network, each diffusion process induces a Directed Acyclic Graph (DAG).

{\bf Heterogeneous Transmission Functions.}
Formally, the pairwise transmission function $f_{ji}(t_i|t_j)$ for a directed edge $j\rightarrow i$ is the conditional density of node $i$ getting infected at time $t_i$ given that node $j$ was infected at time $t_j$. We assume
it is shift invariant: $f_{ji}(t_i|t_j) = f_{ji}(\tau_{ji})$, where $\tau_{ji}:=t_i - t_j$, and causal: $f_{ji}(\tau_{ji}) = 0$ if $\tau_{ji} < 0$. Both parametric transmission functions, such as the exponential and Rayleigh
function~\citep{GomBalSch11}, and nonparametric functions~\citep{DuSonSmoYua12} can be used and estimated from cascade data.

{\bf Shortest-Path Property.} The independent cascade model has a useful property we will use later: given a sample of transmission times of all edges, the time $t_i$ taken to infect a node $i$ is the length of the shortest path in $\Gcal$ from the sources to node $i$, where the edge weights correspond to the associated transmission times.

\subsection{Probabilistic Graphical Model for Continuous-Time Diffusion Networks\label{sec:graphicalmodel}}
The continuous-time independent cascade model is essentially a directed graphical model for a set of \emph{dependent} random variables, that is, the infection times $t_i$ of the nodes, where the conditional independence structure is supported on the contact network $\Gcal$. Although the original contact graph $\Gcal$ can contain directed loops, each diffusion process (or a cascade) induces a directed acyclic graph (DAG). For those cascades consistent with a particular DAG, we can model the joint density of $t_i$ using a directed graphical model:
\begin{align}
	p\rbr{\{t_i\}_{i\in\Vcal}} = \prod\nolimits_{i \in \Vcal} p\rbr{t_i | \{t_j\}_{j \in \pi_i}},
	\label{eq:dag_factorization}
\end{align}
where each $\pi_i$ denotes the collection of parents of node $i$ in the induced DAG, and each term $p(t_i | \{t_j\}_{j \in \pi_i})$ corresponds to a conditional density of $t_i$ given the infection times of the parents of node $i$. This is true because given the infection times of node $i$'{}s parents, $t_i$ is independent of other infection times, satisfying the local Markov property of a directed graphical model. We note that the independent cascade model only specifies explicitly the pairwise transmission function of each directed edge, but does not directly define the conditional density $p(t_i | \{t_j\}_{j \in \pi_i})$.

However, these conditional densities can be derived from the pairwise transmission functions based on the Independent-Infection property:
\begin{align}
	p\rbr{t_i | \{t_j\}_{j \in \pi_i}} = \sum\nolimits_{j\in\pi_i}f_{ji}(t_i|t_j)\prod\nolimits_{l\in\pi_{i}, l\neq j}S(t_i|t_l),
	\label{eq:cpt_transmission}
\end{align}
which is the sum of the likelihoods that node $i$ is infected by each parent node $j$. More precisely, each term in the summation can be interpreted as the likelihood $f_{ji}(t_i|t_j)$ of node $i$ being infected at $t_i$ by node $j$ multiplied by the probability $S(t_i|t_l)$ that it has survived from the infection of each other parent node $l\neq j$ until time $t_i$.

Perhaps surprisingly, the factorization in Equation~\eq{eq:dag_factorization} is the same factorization that can be used for an arbitrary induced DAG consistent with the contact network $\Gcal$. In this case, we only need to replace the definition of $\pi_i$ (the parent of node $i$ in the DAG)  to the set of neighbors of node $i$ with an edge pointing to node $i$ in $\Gcal$. This is not immediately obvious from Equation~\eq{eq:dag_factorization}, since the contact network $\Gcal$ can contain directed loops which seems to be in conflict with the conditional independence semantics of directed graphical models. The reason why it is possible to do so is as follows: any fixed set of infection times, $t_1,\ldots,t_d$, induces an ordering of the infection times. If $t_i \leq t_j$ for an edge $j\rightarrow i$ in $\Gcal$, $h_{ji}(t_i|t_j)=0$, and the corresponding term in Equation~\eq{eq:cpt_transmission} is zeroed out, making the conditional density consistent with the semantics of directed graphical models.

Instead of directly modeling the infection times $t_i$, we can focus on the set of mutually \emph{independent} random transmission times $\tau_{ji} = t_i - t_j$. Interestingly, by switching from a node-centric view to an edge-centric view, we obtain a fully factorized joint density of the set of transmission times
\begin{align}
  p\rbr{\{\tau_{ji}\}_{(j,i)\in \Ecal}} = \prod\nolimits_{(j,i)\in\Ecal} f_{ji}(\tau_{ji}),
\end{align}
Based on the Shortest-Path property of the independent cascade model, each variable $t_i$ can be viewed as a transformation from the collection of variables $\{\tau_{ji}\}_{(j,i)\in \Ecal}$. More specifically, let $\Qcal_i$ be the collection of directed paths in $\Gcal$ from the source nodes to node
$i$, where each path $q\in \Qcal_i$ contains a sequence of directed edges $(j,l)$. Assuming all source nodes are infected at time zero, then we obtain variable $t_i$ via
\begin{align}
  t_i = g_i\rbr{\{\tau_{ji}\}_{(j,i)\in \Ecal}|\Acal} = \min_{q \in \Qcal_i} \sum\nolimits_{(j,l)\in q} \tau_{jl},
\end{align}
where the transformation $g_i(\cdot|\Acal)$ is the value of the shortest-path minimization. As a special case, we can now compute the probability of node $i$ infected before $T$ using a set of independent variables:
\begin{align}
  \Pr\cbr{t_i \leq T|\Acal} = \Pr\cbr{g_i\rbr{\{\tau_{ji}\}_{(j,i)\in \Ecal}|\Acal} \leq T}.
  \label{eq:equivalence}
\end{align}

The significance of the relation is that it allows us to transform a problem involving a sequence of dependent variables $\{t_i\}_{i\in\Vcal}$ to one with independent variables $\{\tau_{ji}\}_{(j,i)\in\Ecal}$.
Furthermore, the two perspectives are connected via the shortest path algorithm in weighted directed graph, a standard well-studied operation in graph analysis.

\subsection{Influence Estimation Problem in Continuous-Time Diffusion Networks\label{sec:influenceestimation}}
Intuitively, given a time window, the wider the spread of infection, the more influential the set of sources. We adopt the definition of influence as the expected number of infected nodes given a set of source nodes and a time window, as in previous work~\citep{GomSch12}. More formally, consider a set of source nodes $\Acal \subseteq \Vcal, |\Acal|\leq C$ which get infected at time zero. Then, given a time window $T$, a node $i$ is infected
within the time window if $t_i\leq T$. The expected number of infected nodes (or the influence) given the set of transmission functions $\cbr{f_{ji}}_{(j,i)\in\Ecal}$ can be computed as
\begin{align}
  \sigma(\Acal,T)
  = \EE\left[\sum\nolimits_{i\in \Vcal}\II\cbr{t_i\leq T|\Acal}\right]
  = \sum\nolimits_{i \in \Vcal} \Pr\cbr{t_i \leq T|\Acal},
  \label{eq:influence}
\end{align}
where $\II\cbr{\cdot}$ is the indicator function 
and the expectation is taken over the the set of \emph{dependent}
variables $\{t_i\}_{i\in\Vcal}$. By construction, $\sigma(\Acal, T)$ is a non-negative, monotonic nondecreasing submodular function in the set of source nodes shown by~\cite{GomSch12}.

Essentially, the influence estimation problem in Equation~\eq{eq:influence} is an inference problem for graphical models, where the probability of event $t_i \leq T$ given sources in $\Acal$ can be obtained by summing out the possible configuration of other variables $\{t_j\}_{j\neq i}$. That is
\begin{align}
\label{eq:influence_computation}
\Pr\{t_i \leq T|\Acal\} = \int_{0}^{\infty} \cdots \int_{t_i=0}^{T}\cdots \int_{0}^{\infty} \rbr{\prod\nolimits_{j\in \Vcal} p\rbr{t_j | \{t_l\}_{l\in\pi_j}} } \rbr{\prod\nolimits_{j \in \Vcal} dt_j},
\end{align}
which is, in general, a very challenging problem. First, the corresponding directed graphical models can contain nodes with high in-degree and high out-degree. For example, in Twitter, a user can follow dozens of other users, and another user can have hundreds of ``followers''. The tree-width corresponding to this directed graphical model can be very high, and we need to perform integration for functions involving many continuous variables. Second, the integral in general can not be evaluated analytically for heterogeneous transmission functions, which means that we need to resort to numerical integration by discretizing the domain $[0, \infty)$. If we use $N$ levels of discretization for each variable, we would need to enumerate $\Ocal(N^{|\pi_i|})$ entries, exponential in the number of parents.

Only in very special cases, can one derive the closed-form equation for computing $\Pr\{t_i \leq T|\Acal\}$. For instance,~\cite{GomSch12} proposed an approach for exponential transmission functions, where the special properties of exponential density are used to map the problem into a continuous time Markov process problem, and the computation can be carried out via a matrix exponential. However, without further heuristic approximation, the computational complexity of the algorithm is exponential in the size and density of the network. The intrinsic complexity of the problem entails the utilization of approximation algorithms, such as mean field algorithms or message passing algorithms. We will design an efficient randomized (or sampling) algorithm in the next section.

\subsection{Efficient Influence Estimation in Continuous-Time Diffusion Networks\label{sec:randomized}}
Our first key observation is that we can transform the influence estimation problem in Equation~\eq{eq:influence} into a problem with \emph{independent} variables. With the relation in Equation~\eq{eq:equivalence}, we can derive the influence function as
\begin{align}
 \label{eq:key1}
  \sigma(\Acal,T)
  = &\sum\nolimits_{i \in \Vcal} \Pr\cbr{g_i\rbr{\{\tau_{ji}\}_{(j,i)\in\Ecal}|\Acal}\leq T}\nonumber\\
  = &\EE \sbr{\sum\nolimits_{i\in \Vcal} \II\cbr{g_i\rbr{\{\tau_{ji}\}_{(j,i)\in\Ecal}|\Acal}\leq T}},
\end{align}
where the expectation is with respect to the set of independent variables $\{\tau_{ji}\}_{(j,i)\in\Ecal}$. This equivalent formulation suggests a naive sampling (NS) algorithm for approximating $\sigma(\Acal,T)$: draw
$n$ samples of $\{\tau_{ji}\}_{(j,i)\in\Ecal}$, run a shortest path algorithm for each sample, and finally average the results (see Appendix~\ref{app:ns} for more details). However, this naive sampling approach has a computational complexity of $O(n C |\Vcal||\Ecal| + n C |\Vcal|^2\log|\Vcal|)$ due to the repeated calling of the shortest path algorithm. This is quadratic to the network size, and hence not scalable to millions of nodes.

Our second key observation is that for each sample $\{\tau_{ji}\}_{(j,i)\in\Ecal}$, we are only interested in the neighborhood size of the source nodes, \ie, the summation $\sum_{i\in\Vcal} \II\cbr{\cdot}$ in
Equation~\eq{eq:key1}, rather than in the individual shortest paths. Fortunately, the neighborhood size estimation problem has been studied in the theoretical computer science literature. Here, we adapt a
very efficient randomized algorithm by~\cite{Cohen1997} to our influence estimation problem.
This randomized algorithm has a computational complexity of $O(|\Ecal|\log|\Vcal|+|\Vcal|\log^2|\Vcal|)$ and it estimates the neighborhood sizes for \emph{all} possible single source node locations. Since it needs to
run once for each sample of $\{\tau_{ji}\}_{(j,i)\in\Ecal}$, we obtain an overall influence estimation algorithm with $O(n |\Ecal|\log|\Vcal|+n |\Vcal|\log^2|\Vcal|)$ computation, nearly linear in network size.
Next we will revisit Cohen's algorithm for neighborhood estimation.

\subsubsection{Randomized Algorithm for Single-Source Neighborhood-Size Estimation}

Given a fixed set of edge transmission times $\{\tau_{ji}\}_{(j,i)\in\Ecal}$ and a source node $s$, infected at time zero, the neighborhood $\Ncal(s,T)$ of a source node $s$ given a time window $T$ is the set of nodes within
distance $T$ from $s$,~\ie,~
\begin{align}
  \Ncal(s,T) = \cbr{i~\big|~g_i \rbr{\{\tau_{ji}\}_{(j,i)\in\Ecal}} \leq T,~i\in \Vcal}.
  \label{eq:neighbor}
\end{align}
Instead of estimating $\Ncal(s,T)$ directly, the algorithm will assign an exponentially distributed random label $r_i$ to each network node $i$. Then, it makes use of the fact that the minimum of a set of exponential random variables $\{r_i\}_{i \in \Ncal(s,T)}$ is still an exponential random variable, but with its parameter being equal to the total number of variables, that is, 
if each $r_i \sim \exp(-r_i)$, then the smallest label within distance $T$ from source $s$, $r_\ast:=\min_{i\in \Ncal(s,T)} r_i$, will distribute as $r_\ast \sim \exp\cbr{-|\Ncal(s,T)| r_\ast}$. Suppose we randomize over the labeling $m$ times and obtain $m$ such least labels, $\{r_\ast^u\}_{u=1}^m$. Then the neighborhood size can be estimated as
\begin{align}
  |\Ncal(s,T)| \approx \frac{m-1}{\sum_{u=1}^m r_\ast^u}.
  \label{eq:estimate_size}
\end{align}
which is shown by~\cite{Cohen1997} to be an unbiased estimator of $|\Ncal(s,T)|$.
This is an elegant relation since it allows us to transform the counting problem in~\eq{eq:neighbor} to a problem of finding the minimum random label $r_\ast$. The key question is whether we can compute the least label $r_\ast$ efficiently, given random labels $\{r_i\}_{i \in \Vcal}$ and any source node $s$.

\cite{Cohen1997} designed a modified Dijkstra'{}s algorithm (Algorithm~\ref{a1}) to construct a data structure $r_\ast(s)$, called least label list, for each node $s$ to support such query. Essentially, the
algorithm starts with the node $i$ with the smallest label $r_i$, and then it traverses in breadth-first search fashion along the reverse direction of the graph edges to find all reachable nodes. For each reachable node
$s$, the distance $d_\ast$ between $i$ and $s$, and $r_i$ are added to the end of $r_\ast(s)$. Then the algorithm moves to the node $i'$ with the second smallest label $r_{i'}$, and similarly find all reachable nodes.
For each reachable node $s$, the algorithm will compare the current distance $d_\ast$ between $i'$ and $s$ with the last recorded distance in $r_\ast(s)$. If the current distance is smaller, then the current $d_\ast$
and $r_{i'}$ are added to the end of $r_\ast(s)$. Then the algorithm move to the node with the third smallest label and so on. The algorithm is summarized in Algorithm~\ref{a1} in Appendix~\ref{app:leastlabellist}.

Algorithm~\ref{a1} returns a list $r_\ast(s)$ per node $s\in \Vcal$, which contains information about distance to the smallest reachable labels from $s$. In particular, each list contains pairs of distance and random labels,
$(d,r)$, and these pairs are ordered as
\begin{align}
  \infty > \;&d_{(1)} > d_{(2)} > \ldots > d_{(|r_\ast(s)|)} = 0 \\
  &r_{(1)} < r_{(2)} < \ldots < r_{(|r_\ast(s)|)},
\end{align}
where $\{\cdot\}_{(l)}$ denotes the $l$-th element in the list. (see Appendix \ref{app:leastlabellist} for an example).

If we want to query the smallest reachable random label $r_\ast$ for a given source $s$ and a time $T$, we only need to perform a binary search on the list for node $s$:
\begin{align}
  r_\ast = r_{(l)},~\text{where}~d_{(l-1)} > T \geq d_{(l)}.
\end{align}
Finally, to estimate $\abr{\Ncal(s,T)}$, we generate $m$~\iid~collections of random labels, run Algorithm~\ref{a1} on each collection, and obtain $m$ values $\cbr{r_\ast^u}_{u=1}^m$, which we use in Equation~\eq{eq:estimate_size}
to estimate $|\Ncal(i,T)|$.

The computational complexity of Algorithm~\ref{a1} is $O(|\Ecal|\log|\Vcal| + |\Vcal|\log^2|\Vcal|)$, with expected size of each $r_\ast(s)$ being $O(\log|\Vcal|)$. Then the expected time for querying $r_\ast$
is $O(\log\log|\Vcal|)$ using binary search. Since we need to generate $m$ set of random labels and run Algorithm~\ref{a1} $m$ times, the overall computational complexity for estimating the single-source
neighborhood size for all $s\in\Vcal$ is $O(m|\Ecal|\log|\Vcal| + m|\Vcal|\log^2|\Vcal| + m |\Vcal|\log\log|\Vcal|)$. For large-scale network, and when $m\ll \min\{|\Vcal|,|\Ecal|\}$, this randomized algorithm can
be much more efficient than approaches based on directly calculating the shortest paths.

\subsubsection{Constructing Estimation for Multiple-Source Neighborhood Size}

When we have a set of sources, $\Acal$, its neighborhood is the union of the neighborhoods of its cons\-ti\-tuent sources
\begin{align}
  \Ncal(\Acal,T) = \bigcup\nolimits_{i \in \Acal} \Ncal(i,T).
\end{align}
This is true because each source independently infects its downstream nodes. Furthermore, to calculate the least label list $r_\ast$ corresponding to $\Ncal(\Acal,T)$, we can simply reuse the least label list $r_\ast(i)$
of each individual source $i \in \Acal$. More formally,
\begin{align}
  r_\ast = \min\nolimits_{i \in \Acal}~ \min\nolimits_{j \in \Ncal(i,T)} r_j,
\end{align}
where the inner minimization can be carried out by querying $r_\ast(i)$. Similarly, after we obtain $m$ samples of $r_\ast$, we can estimate $|\Ncal(\Acal,T)|$ using Equation~\eq{eq:estimate_size}.
Importantly, very little additional work is needed when we want to calculate $r_\ast$ for a set of sources $\Acal$, and we can reuse work done for a single source. This is very different from a naive sampling approach where the sampling
process needs to be done completely anew if we increase the source set. In contrast, using the randomized algorithm, only an additional constant-time minimization over $|\Acal|$ numbers is needed.

\subsubsection{Overall Algorithm}\label{sec:infEstAlg}
So far, we have achieved efficient neighborhood size estimation of $|\Ncal(\Acal, T)|$ with respect to a given set of transmission times $\{\tau_{ji}\}_{(j,i)\in\Ecal}$. Next, we will estimate the influence by averaging over multiple sets of samples for $\{\tau_{ji}\}_{(j,i)\in\Ecal}$. More specifically, the relation from~\eq{eq:key1}
\begin{align}
  \sigma(\Acal,T)
  = \EE_{\{\tau_{ji}\}_{(j,i)\in\Ecal}} \sbr{|\Ncal(\Acal,T)|}
  = \EE_{\{\tau_{ji}\}} \EE_{\{r^1,\ldots,r^m\}|\{\tau_{ji}\}}\sbr{\frac{m-1}{\sum_{u=1}^m r_\ast^u}},
\end{align}
suggests the following overall algorithm : \\[2mm]

\hspace{-8mm}
\fbox{
  \parbox{\columnwidth}{
    \vspace{1mm}
    Continuous-Time Influence Estimation (\continmax):
    \begin{itemize}
      \item[1.] Sample $n$ sets of random transmission times
      $
        \{\tau_{ij}^l\}_{(j,i)\in\Ecal}~\sim~\prod\nolimits_{(j,i) \in \Ecal} f_{ji}(\tau_{ji}).
      $
      \item[2.] Given a set of $\{\tau_{ij}^l\}_{(j,i)\in\Ecal}$, sample $m$ sets of random labels
      $
        \{r_i^u\}_{i\in \Vcal}~\sim~\prod\nolimits_{i \in \Vcal} \exp(-r_i).
      $
      \item[3.] Estimate $\sigma(\Acal,T)$ by sample averages
      $
        \sigma(\Acal,T)
        \approx \frac{1}{n} \sum_{l=1}^n \rbr{(m-1)/\sum_{u_l=1}^m r_\ast^{u_l}}.
      $
    \end{itemize}
  }
}
\\[2mm]

What is even more important is that the number of random labels, $m$, does not need to be very large. Since the estimator for $|\Ncal(A,T)|$ is unbiased~\citep{Cohen1997}, essentially the outer-loop of averaging over $n$ samples of random transmission times further reduces the variance of the estimator in a rate of $\Ocal(1/n)$. In practice, we can use a very small $m$ (\eg, $5$ or $10$) and still achieve good results, which is also confirmed by our later experiments. Compared to~\citep{CheWanYan09}, the novel application of  Cohen's algorithm arises for estimating influence for multiple sources, which drastically reduces the computation by cleverly using the least-label list from single source. Moreover, we have the following theoretical guarantee (see Appendix~\ref{app:proof1} for the proof).

\begin{theorem}
  \vspace{-2mm}
  Draw the following number of samples for the set of random transmission times
  \begin{align}
    n \geq \frac{C \Lambda}{\epsilon^2} \log\rbr{\frac{2 |\Vcal|}{\alpha}}
  \end{align}
  where $\Lambda:= \max_{\Acal:\abr{\Acal}\leq C} 2\sigma(\Acal,T)^2 / (m-2) + 2Var(\abr{\Ncal(\Acal,T)})(m-1)/(m-2) + 2 a \epsilon /3$ and $\abr{\Ncal(\Acal,T)}\leq |\Vcal|$, and for each set of random transmission times, draw $m$ sets of random labels. Then
  $
    \abr{\widehat{\sigma}(\Acal,T) - \sigma(\Acal,T)} \leq \epsilon
  $
  uniformly for all $\Acal$ with $\abr{\Acal}\leq C$, with probability at least $1 - \alpha$.
  \vspace{-2mm}
\end{theorem}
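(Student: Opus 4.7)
The plan is to fix an arbitrary source set $\Acal$ with $\abr{\Acal}\le C$, establish a Bernstein-type concentration bound for $\widehat{\sigma}(\Acal,T)$ around $\sigma(\Acal,T)$, and then union-bound over the $\binom{\abr{\Vcal}}{C}\le \abr{\Vcal}^C$ admissible source sets. Let $X_l := (m-1)\big/\sum_{u=1}^{m} r_\ast^{u,l}$ denote the $l$-th single-sample estimator, so that $\widehat{\sigma}(\Acal,T) = \tfrac{1}{n}\sum_{l=1}^n X_l$. Because the outer transmission-time samples $\{\tau_{ji}^l\}_{(j,i)\in\Ecal}$ and, conditioned on these, the inner label samples $\{r_i^{u,l}\}_{i\in\Vcal}$ are drawn independently across $l$, the $X_l$ are i.i.d. By Cohen's unbiasedness result, $\EE[X_l \mid \{\tau_{ji}^l\}] = \abr{\Ncal(\Acal,T)}$, and taking the outer expectation yields $\EE[X_l] = \sigma(\Acal,T)$. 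This reduces the analysis to a standard tail bound for the mean of $n$ i.i.d.\ random variables.

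The next step is to compute $\mathrm{Var}(X_l)$ via the law of total variance. Conditional on the transmission times, the $r_\ast^u$ are i.i.d.\ exponentials with rate $N := \abr{\Ncal(\Acal,T)}$, so $S := \sum_{u=1}^m r_\ast^u$ follows a $\Gamma(m,N)$ distribution, from which one computes directly that $\EE[(m-1)/S \mid N] = N$ and $\mathrm{Var}((m-1)/S \mid N) = N^2/(m-2)$. Plugging these into the total-variance identity gives
\begin{equation*}
\mathrm{Var}(X_l) \;=\; \frac{\EE[N^2]}{m-2} + \mathrm{Var}(N) \;=\; \frac{\sigma(\Acal,T)^2}{m-2} + \frac{m-1}{m-2}\,Var(\abr{\Ncal(\Acal,T)}),
\end{equation*}
which is exactly half of the first two terms in the definition of $\Lambda$; the factor of $2$ in $\Lambda$ will come out of the Bernstein denominator.

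With the variance in hand, I would apply Bernstein's inequality to the i.i.d.\ sum to obtain
\begin{equation*}
\Pr\!\bigl(\abr{\widehat{\sigma}(\Acal,T) - \sigma(\Acal,T)} > \epsilon\bigr) \;\le\; 2\exp\!\left(-\frac{n\,\epsilon^2}{2\,\mathrm{Var}(X_l) + \tfrac{2a\epsilon}{3}}\right),
\end{equation*}
where $a$ is a range/sub-exponential parameter for $X_l - \EE X_l$. Setting the right-hand side to $\alpha/\abr{\Vcal}^C$, solving for $n$, and taking a union bound over the at most $\abr{\Vcal}^C$ choices of $\Acal$ produces the stated threshold $n \ge (C\Lambda/\epsilon^2)\log(2\abr{\Vcal}/\alpha)$, with $\Lambda$ maximized over admissible $\Acal$. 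The main obstacle is that $X_l = (m-1)/S$ is not bounded above, since $S$ can be arbitrarily close to $0$, so a textbook bounded-increment Bernstein bound does not apply out of the box. I would handle this either by truncating $X_l$ at a level polynomial in $\abr{\Vcal}$ and controlling the truncation bias using the left tail of the Gamma distribution, or, more cleanly, by invoking a Bernstein inequality for sub-exponential random variables with $a$ chosen to dominate the Orlicz norm of $1/S$. Either route affects only the constant $a$ in the additive $2a\epsilon/3$ term, leaving the dominant variance structure of $\Lambda$ untouched.
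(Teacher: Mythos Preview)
Your proposal is correct and follows essentially the same route as the paper: compute $\mathrm{Var}(X_l)$ by conditioning on the transmission times and invoking Cohen's moment formulas, apply Bernstein's inequality to the i.i.d.\ average, and union-bound over the at most $|\Vcal|^C$ source sets. In fact, you are more careful than the paper on one point: the paper simply asserts $\widehat S_\tau < a \le |\Vcal|$ and applies the bounded-increment Bernstein inequality, whereas you correctly observe that $(m-1)/S$ is unbounded above and sketch how to repair this via truncation or a sub-exponential Bernstein bound, which only affects the constant $a$ in the $2a\epsilon/3$ term.
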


The theorem indicates that the minimum number of samples, $n$, needed to achieve certain accuracy is related to the actual size of the influence $\sigma(\Acal, T)$, and the variance of the neighborhood size $|\Ncal(\Acal,T)|$ over the random draw of samples. The number of random labels, $m$, drawn in the inner loop of the algorithm will monotonically decrease the dependency of $n$ on $\sigma(\Acal, T)$. It suffices to draw a small number of random labels, as long as the value of $\sigma(\Acal,T)^2 / (m-2)$ matches that of $Var(\abr{\Ncal(\Acal,T)})$. Another implication is that influence at larger time window $T$ is harder to estimate, since $\sigma(\Acal,T)$ will generally be larger and hence require more random samples.

\section{Constraints of Practical Importance\label{sec:modelconstraints}}
By treating our proposed influence estimation algorithm \continmax as a building block, we can now tackle the influence maximization problem under various constraints of practical importance. Here, since \continmax can estimate the influence value of any source set with respect to any given time window $T$, the {\bf Timing Constraints} can thus be naturally satisfied. Therefore, in the following sections, we mainly focus on modeling the {\bf Multiple-Item Constraints}, the  {\bf User Constraints} and the {\bf Product Constraints}.

\subsection{Multiple-Item Constraints}
Multiple products can spread simultaneously across the same set of social entities over different diffusion channels. Since these products may have different characteristics, such as the revenue and the speed of spread, and thus may follow different diffusion dynamics, we will use multiple diffusion networks for different types of products.

Suppose we have a set of products $\Lcal$ that propagate on the same set of nodes $\Vcal$. The diffusion network for product $i$ is denoted as $\Gcal_i = (\Vcal,\Ecal_i)$. For each product $i\in\Lcal$, we search for a set of source nodes $\Rcal_i\subseteq\Vcal$ to which we can assign the product $i$ to start its campaign. We can represent the selection of $\Rcal_i$'s using an assignment matrix $A \in \{0,1\}^{|\Lcal| \times |\Vcal|}$ as follows: $A_{ij} = 1$ if $j \in \Rcal_i$ and $A_{ij} = 0$ otherwise. Based on this representation, we define a new ground set $\Ground=\Item \times \Node$ of size $N=|\Item| \times |\Node|$. Each element of $\Zcal$ corresponds to the index $(i,j)$ of an entry in the assignment matrix $A$, and selecting element $z=(i,j)$ means assigning product $i$ to user $j$ (see Figure~\ref{fig:assignmentMatrix} for an illustration). We also denote $\Zcal_{*j} = \Lcal\times\cbr{j}$ and $\Zcal_{i*} = \cbr{i}\times\Vcal$ as the $j$-th column and $i$-th row of matrix $A$, respectively. Then, under the above mentioned additional requirements, we would like to find a set of assignments $\Scal\subseteq\Zcal$ so as to maximize the following {\bf overall influence}
\begin{align} \label{eq:fs}
  f(S) = \sum_{i \in \Lcal} a_i \sigma_i(\Rcal_i, T_i),
\end{align}
where $\sigma_i(\Rcal_i, T_i)$ denote the influence of product $i$ for a given time $T_i$, $\cbr{a_i>0}$ is a set of weights reflecting the different benefits of the products 
and $\Rcal_{i} = \{j \in \Node: (i,j) \in S\}$.
\begin{figure}[t]
  \centering
  \includegraphics[width=.4\columnwidth]{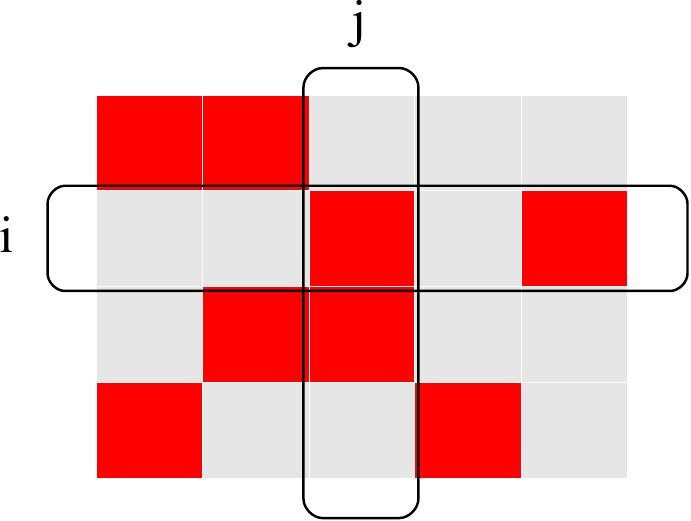}
  \caption{Illustration of the assignment matrix $A$ associated with partition matroid $\Mcal_1$ and group knapsack constraints. If product $i$ is assigned to user $j$, then $A_{ij} =1$ (colored in red).
  The ground set $\Ground$ is the set of indices of the entries in $A$, and selecting an element $(i,j) \in \Ground$ means assigning product $i$ to user $j$.
  The user constraint means that there are at most $u_j$ elements selected in the $j$-th column; the product constraint means that the total cost of the elements selected in the $i$-th row is at most $B_i$.
  } \label{fig:assignmentMatrix}
\end{figure}
We now show that the overall influence function $f(S)$ in Equation~\eq{eq:fs} is submodular over the ground set $\Ground$.

\begin{lemma}\label{lem:sub}
Under the continuous-time independent cascade model,
the overall influence $f(S)$ is a normalized monotone submodular function of $S$.
\end{lemma}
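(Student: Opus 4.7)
The plan is to reduce the claim to the single-product submodularity of $\sigma_i(\cdot, T_i)$ that was already noted in Section~\ref{sec:influenceestimation} (citing Gomez-Rodriguez and Sch\"olkopf 2012), by exploiting the fact that $f$ is a non-negatively weighted sum of set functions, each of which depends on $S$ only through its $i$-th row slice $\Rcal_i(S)=\{j\in\Vcal:(i,j)\in S\}$. The key bookkeeping observation is that adding a single ground-set element $(i^*,j^*)\notin S$ perturbs exactly one slice: $\Rcal_{i^*}(S\cup\{(i^*,j^*)\})=\Rcal_{i^*}(S)\cup\{j^*\}$, while $\Rcal_i(S)$ is unchanged for all $i\neq i^*$. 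This localizes every marginal-gain computation to a single product, which is what makes the decomposition work.

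First I would verify the normalization, which is immediate: $\Rcal_i(\emptyset)=\emptyset$ for every $i$, and $\sigma_i(\emptyset,T_i)=0$ under the continuous-time independent cascade model (with no sources, no node can be infected in finite time), so $f(\emptyset)=0$. For monotonicity, I would take $S\subseteq S'$, note that the coordinate-wise inclusion $\Rcal_i(S)\subseteq\Rcal_i(S')$ holds for every $i\in\Lcal$, apply the monotonicity of each $\sigma_i(\cdot,T_i)$, and then combine with $a_i>0$ to get $f(S)\leq f(S')$.

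The main step is submodularity. For $S\subseteq S'\subseteq\Ground$ and $(i^*,j^*)\in\Ground\setminus S'$, using the localization observation the marginal gains simplify to a single term,
\begin{align*}
f(S\cup\{(i^*,j^*)\})-f(S) &= a_{i^*}\bigl[\sigma_{i^*}(\Rcal_{i^*}(S)\cup\{j^*\},T_{i^*})-\sigma_{i^*}(\Rcal_{i^*}(S),T_{i^*})\bigr],\\
f(S'\cup\{(i^*,j^*)\})-f(S') &= a_{i^*}\bigl[\sigma_{i^*}(\Rcal_{i^*}(S')\cup\{j^*\},T_{i^*})-\sigma_{i^*}(\Rcal_{i^*}(S'),T_{i^*})\bigr].
\end{align*}
Since $\Rcal_{i^*}(S)\subseteq\Rcal_{i^*}(S')$ and $j^*\notin\Rcal_{i^*}(S')$, the submodularity of $\sigma_{i^*}$ in its source-set argument gives that the first bracket is at least the second; multiplying by the positive weight $a_{i^*}$ preserves the inequality, which is precisely the submodularity of $f$.

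I do not expect any genuine obstacle: the result is essentially the standard closure of the normalized monotone submodular cone under non-negative linear combinations, packaged with the slice bookkeeping that turns the product ground set $\Lcal\times\Vcal$ into $|\Lcal|$ decoupled single-product problems. The only place to be careful is stating clearly that each new element $(i^*,j^*)$ affects only one $\Rcal_i$, so that each marginal gain is a scalar multiple of a single-product marginal gain rather than a sum over products.
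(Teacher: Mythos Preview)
Your proposal is correct and follows essentially the same approach as the paper: invoke the single-product submodularity of $\sigma_i(\cdot,T_i)$ from \cite{GomSch12} and then use that non-negative linear combinations of submodular functions remain submodular. Your version is actually a bit more careful than the paper's, since you explicitly spell out the slice bookkeeping showing why submodularity of $\sigma_i$ in $\Rcal_i\subseteq\Vcal$ lifts to submodularity of $f_i(S)=a_i\sigma_i(\Rcal_i(S),T_i)$ as a function of $S\subseteq\Ground$, a step the paper asserts without justification.
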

\begin{proof}
By definition, $f(\emptyset) = 0$ and $f(S)$ is monotone.
By Theorem 4 in~\cite{GomSch12}, the component influence function $\sigma_i(\Rcal_{i},T_i)$ for product $i$ is submodular in $\Rcal_{i} \subseteq \Node$.
Since non-negative linear combinations of submodular functions are still submodular, $f_i(S) := a_i\sigma_i(\Rcal_{i},T_i)$ is also submodular in $S \subseteq \Ground=\Item\times\Node$, and  $f(S) = \sum_{i\in \Item} f_i(S)$ is submodular.
\end{proof}

\subsection{User Constraints}
Each social network user can be a potential source and would like to be exposed only to a small number of ads. 
Furthermore, users may be grouped according to their geographical locations, and advertisers may have a target population they want to reach. 
Here, we will incorporate these constraints using the matroids which are combinatorial structures that generalize the notion of linear independence in matrices~\citep{Schrijver03,Fujishige05}.
Formulating our constrained influence maximization task with matroids allows us to design a greedy algorithm with provable guarantees. 

Formally, suppose that each user $j$ can be assigned to at most $\attention_j$ products. A matroid can be defined as follows:
\begin{definition}
A matroid is a pair, $\Mcal=(\Ground, \Ind)$, defined over a finite set (the ground set) $\Ground$ and a family of sets (the independent sets) $\Ind$, 
that satisfies three axioms:
\begin{enumerate}
  \item {Non-emptiness:} The empty set $\emptyset \in \Ind$.
  \item {Heredity:} If $Y \in \Ind$ and $X\subseteq Y$, then $X \in \Ind$.
  \item {Exchange:} If $X \in \Ind, Y \in \Ind$ and $|Y| > |X|$, then there exists $z \in Y\setminus X$ such that $X \cup \{z\} \in \Ind$.
\end{enumerate}
\end{definition}

An important type of matroid is the partition matroid where the ground set $\Zcal$ is partitioned into disjoint subsets $\Ground_1, \Ground_2,\dots,\Ground_t$
for some $t$ and
$$\Ind=\{S~|~S\subseteq \Ground~\text{and}~|S \cap \Ground_i|\leq u_i, \forall i=1,\dots,t \}$$
for some given parameters $u_1,\dots, u_t$.
The user constraints can then be formulated as
\begin{quote}
{Partition matroid $\Mcal_1$:} partition the ground set $\Zcal$ into $\Ground_{* j}=\Item \times\cbr{j}$ each of which corresponds to a column of  $A$. Then $\Mcal_1=\cbr{\Ground, \Ical_1}$ is
$$
  \Ical_1 = \cbr{S| S\subseteq \Ground~\text{and}~|S\cap \Ground_{* j}|\leq \attention_j,\forall j}.
$$
\end{quote}
%
%

\subsection{Product Constraints}
Seeking initial adopters entails a cost to the advertiser, which needs to be paid to the host, while the advertisers of each product have a limited amount of money. Here, we will incorporate these 
requirements using knapsack constraints which we describe below.

Formally, suppose that each product $i$ has a budget $B_i$, and assigning item $i$ to user $j$ costs $c_{ij} > 0$. 
%
%
Next, we introduce the following notation to describe product constraints over the ground set $\Ground$. 
For an element $z = (i,j) \in \Ground$, define its cost as $c(z) := c_{ij}$. Abusing the notation slightly, we denote the cost of a subset $S \subseteq \Ground$ as $c(S) := \sum_{z\in S} c(z)$. 
Then, in a feasible solution $S \subseteq \Ground$, the cost of assigning product $i$, $c(S \cap \Ground_{i*})$, should not be larger than its budget $B_i$.

Now, without loss of generality, we can assume $B_i=1$ (by normalizing $c_{ij}$ with $B_i$), and also $c_{ij} \in (0, 1]$ (by throwing away any element $(i,j)$ with $c_{ij} > 1$), and define
\begin{quote}
  {Group-knapsack:} partition the ground set into $\Ground_{i *}=\cbr{i}\times \Node$ each of which corresponds to one row of $A$. Then a feasible solution $S \subseteq \Ground$ satisfies
  $$
    c(S \cap \Ground_{i*}) \leq 1, \forall i.
  $$
\end{quote}
Importantly, these knapsack constraints have very specific structure: they are on different groups of a partition $\cbr{\Ground_{i *}}$ of the ground set and the submodular function $f(S) = \sum_i a_i\sigma_i(\Rcal_{i},T_i)$ is defined over the partition. In consequence, such structures allow us to design an efficient algorithm with improved guarantees over the known results.

\subsection{Overall Problem Formulation}
\label{sec:inf}
Based on the above discussion of various constraints in viral marketing and our design choices for tackling them, we can think of the influence maximization problem as a special case of the 
following constrained submodular maximization problem with $P=1$ matroid constraints and $k=|\Item|$ knapsack constraints,
\begin{eqnarray}
& \text{max}_{S\subseteq \Ground} & f(S)  \label{pro:infMax}\\
& \text{subject to} &  c(S \cap \Ground_{i*}) \leq 1, \quad 1 \leq i \leq k, \nonumber\\
&  &  S\in \bigcap_{p=1}^P \Ical_p ,\nonumber
\end{eqnarray}
where, for simplicity, we will denote all the feasible solutions $S\subseteq \Ground$ as $\Fcal$. This formulation in general includes the following cases of practical importance : 

\paragraph{Uniform User-Cost.} An important case of influence maximization, which we denote as the {Uniform Cost}, is that for each product $i$, all users have the same cost $c_{i*}$, \ie, $c_{ij}=c_{i*}$. 
Equivalently, each product $i$ can be assigned to at most $b_i:= \lfloor B_i / c_{i*} \rfloor$ users. Then the product constraints are simplified to
\begin{quote}
\vspace{-8mm}
 \item {Partition matroid $\Mcal_2$:} for the product constraints with uniform cost, define a matroid $\Mcal_2=\cbr{\Ground, \Ical_2}$ where 
 $$\Ical_2 = \cbr{S | S\subseteq \Ground~\text{and}~|S\cap \Ground_{i*} |\leq \budget_i, \forall i}.$$
\end{quote}
In this case, the influence maximization problem defined by Equation~\eq{pro:infMax} becomes the problem with $P=2$ matroid constraints and no knapsack constraints ($k=0$). 
%
In addition, if we assume only one product needs campaign, the formulation of Equation~\eq{pro:infMax} further reduces to the classic influence maximization problem with the simple cardinality constraint.

\paragraph{User Group Constraint.}Our formulation in Equation~\eq{pro:infMax} essentially allows for general matroids which can model more sophisticated real-world constraints, and the proposed formulation, algorithms, and analysis can still hold. For instance, suppose there is a hierarchical community structure on the users, \ie, a tree $\Tcal$ where leaves are the users and the internal nodes are communities consisting of all users underneath, such as customers in different countries around the world. In consequence of marketing strategies, on each community $C \in \Tcal$, there are at most $u_C$ slots for assigning the products. Such constraints are readily modeled by the Laminar Matroid, which generalizes the partition matroid by allowing the set $\cbr{\Ground_i}$ to be a laminar family (\ie, for any $\Ground_i \neq \Ground_j$, either $\Ground_i \subseteq \Ground_j$, or $\Ground_j \subseteq \Ground_i$, or $\Ground_i \cap \Ground_j = \emptyset$). It can be shown that the community constraints can be captured by the matroid $\Mcal=(\Ground, \Ind)$ where $\Ical = \cbr{S\subseteq \Ground: |S\cap C|\leq u_C,\forall C \in \Tcal}$. In the next section, we first present our algorithm, then provide the analysis for the uniform cost case and finally leverage such analysis for the general case.

\section{Influence Maximization\label{sec:influmaximization}}
In this section, we first develop a simple, practical and intuitive adaptive-thresholding greedy algorithm to solve the continuous-time influence maximization problem with the aforementioned constraints. Then, we provide a detailed theoretical analysis of its performance.

\subsection{Overall Algorithm}
There exist algorithms for submodular maximization under multiple knapsack constraints achieving a $1-\frac{1}{e}$ approximation factor by~\citep{Sviridenko04}. 
Thus, one may be tempted to convert the matroid constraint in the problem defined by Equation~\eq{pro:infMax} to $|\Node|$ knapsack constraints, 
so that the problem becomes a submodular maximization problem under $|\Item|+|\Node|$ knapsack constraints. 
However, this naive approach is not practical for large-scale scenarios because the running time of such algorithms is exponential in the number of knapsack 
constraints. 
%
%
Instead, if we opt for algorithms for submodular maximization under $k$ knapsack constraints and $P$ matroids constraints,
the best approximation factor achieved by polynomial time algorithms is $\frac{1}{P+2 k + 1}$~\citep{BadVon14}.
However, this is not good enough yet, since in our problem $k=|\Item|$ can be large, though $P=1$ is small.

Here, we will design an algorithm that achieves a better approximation factor by exploiting the following key observation about the structure of the problem defined by Equation~\eq{pro:infMax}: 
the knapsack constraints are over different groups $\Ground_{i*}$ of the whole ground set, and the objective function is a sum of submodular functions over these different groups.

%
%
%

The details of the algorithm, called \budgetmax, are described in Algorithm~\ref{alg:densityEnu}.
\budgetmax enumerates different values of a so-called density threshold $\rho$, runs a subroutine to find a solution for each $\rho$, which 
quantifies the cost-effectiveness of assigning a particular product to a specific user, and finally outputs the solution with the maximum objective 
value.
Intuitively, the algorithm restricts the search space to be the set of most cost-effective allocations.
The details of the subroutine to find a solution for a fixed density threshold $\rho$ are described in Algorithm~\ref{alg:greedyFixedDensity}. 
Inspired by the lazy evaluation heuristic~\citep{Leskovecetal07}, the algorithm maintains a working set $G$ and a marginal gain threshold $w_t$, 
which geometrically decreases by a factor of $1+\delta$ until it is sufficiently small to be set to zero. 
At each $w_t$, the subroutine selects each new element $z$ that satisfies the following properties:
\begin{enumerate}
\item It is feasible and the density ratio (the ratio between the marginal gain and the cost) is above the current density threshold;
\item Its marginal gain
$$
  f(z|\Greedy) := f(\Greedy \cup \{z\}) - f(\Greedy)
$$
is above the current marginal gain threshold.
\end{enumerate}
The term ``density'' comes from the knapsack problem, where the marginal gain is the mass and the cost is the volume. A large density means gaining a lot without paying much. 
In short, the algorithm considers only high-quality assignments and repeatedly selects feasible ones with marginal gain ranging from large to small.

\begin{algorithm}[t]
\caption{Density Threshold Enumeration}
\label{alg:densityEnu}
  \SetAlgoVlined
  \KwIn{parameter $\delta$; objective $f$ or its approximation $\widehat f$; assignment cost $c(z),z\in\Zcal$}

  Set $d = \max \cbr{ f(\{z\}): z \in \Ground}$\;

  \For{ $\rho \in \cbr{\frac{2d}{P+2k+1}, (1+\delta)\frac{2d}{P+2k+1}, \dots, \frac{2|\Ground|d}{P+2k+1} }$ }{Call Algorithm~\ref{alg:greedyFixedDensity} to get $S_\rho$\;}

  \KwOut{$\argmax_{S_\rho} f(S_\rho)$}
\end{algorithm}
\begin{algorithm}[h]
\caption{Adaptive Threshold Greedy for Fixed Density}
\label{alg:greedyFixedDensity}
  \SetAlgoVlined
  \KwIn{parameters $\rho$, $\delta$; objective $f$ or its approximation $\widehat f$; assignment cost $c(z),z\in\Zcal$;set of feasible solutions $\Fcal$; and $d$ from Algorithm~\ref{alg:densityEnu}.}

  Set $d_\rho = \max \cbr{f(\cbr{z}): z \in \Ground, f(\cbr{z}) \geq c(z) \rho}$\;
  Set $w_t = \frac{d_\rho}{(1+\delta)^t}$ for $t = 0,\dots, L= \argmin_i \bigl[w_i \leq \frac{\delta d}{\nGround}\bigr]$
and $ w_{L+1} = 0$\;
  Set $\Greedy= \emptyset$\;

  \For{$t=0,1,\dots,L,L+1$}{
      \For{$z \not\in \Greedy$ with $\Greedy \cup \{z\} \in \Fcal$ and $f(z|\Greedy) \geq c(z) \rho$}{
            \If{$f(z|\Greedy) \geq w_t$}{
              Set $\Greedy \leftarrow \Greedy \cup \{z\}$\;
            }
        }
  }

  \KwOut{$S_\rho=G$}
\end{algorithm}
\paragraph{Remark 1.} The traditional lazy evaluation heuristic also keeps a threshold, however, it only uses the threshold to speed up selecting the element with maximum marginal 
gain.
Instead, Algorithm~\ref{alg:greedyFixedDensity} can add multiple elements $z$ from the ground set at each threshold,
and thus reduces the number of rounds from the size of the solution to the number of thresholds $\Ocal(\frac{1}{\delta}\log \frac{N}{\delta})$.
This allows us to trade off between the runtime and the approximation ratio (refer to our theoretical guarantees in section~\ref{sec:thm}).

\paragraph{Remark 2.} Evaluating the influence of the assigned products $f$ is expensive. Therefore, we will use the randomized algorithm in Section~\ref{sec:infEstAlg} to compute 
an estimation $\widehat f(\cdot)$ of the quantity $f(\cdot)$.

\subsection{Theoretical Guarantees\label{sec:thm}}
Although our algorithm is quite intuitive, it is highly non-trivial to obtain the theoretical guarantees. For clarity, we first analyze the simpler case with uniform cost, which then provides the base 
for analyzing the general case.
\subsubsection{Uniform Cost}\label{sec:uni}
As shown at the end of Section~\ref{sec:inf}, the influence maximization, in this case, corresponds to the problem defined by Equation~\eq{pro:infMax} with $P=2$ and no knapsack constraints. 
Thus, we can simply run Algorithm~\ref{alg:greedyFixedDensity} with $\rho = 0$ to obtain a solution $G$, which is roughly a $\frac{1}{P+1}$-approximation.

\paragraph{Intuition.} The algorithm greedily selects feasible elements with sufficiently large mar\-gi\-nal gain. 
However, it is unclear whether our algorithm will find \emph{good} solutions and whether it will be \emph{robust} to noise. 
Regarding the former, one might wonder whether the algorithm will select just a few elements while many elements in the optimal solution $O$ will become infeasible and will not be selected, 
in which case the greedy solution $G$ is a poor approximation.
Regarding the latter, we only use the estimation $\widehat f$ of the influence $f$ (\ie, $|\widehat f(S) - f(S)| \leq \epsilon$ for any $S \subseteq \Ground$),
which introduces additional error to the function value.
A crucial question, which has not been addressed before~\citep{BadVon14}, is whether the adaptive threshold greedy algorithm is robust to such perturbations.

Fortunately, it turns out that the algorithm will provably select sufficiently many elements of high quality.
First, the elements selected in the optimal solution $O$ but not selected in $G$ can be partitioned into $|G|$ groups, each of which is associated with an element in $G$.
Thus, the number of elements in the groups associated with the first $t$ elements in $G$, by the property of the intersection of matroids, are bounded by $Pt$.
See Figure~\ref{fig:greedyPartition} for an illustration.
Second, the marginal gain of each element in $G$ is at least as large as that of any element in the group associated with it (up to some small error).
This means that even if the submodular function evaluation is inexact, the quality of the elements in the greedy solution is still good.
The two claims together show that the marginal gain of $O \setminus G$ is not much larger than the gain of $G$, and thus $G$ is a good approximation for the problem.

Formally, suppose we use an inexact evaluation of the influence $f$ such that $|\widehat f(S) - f(S)| \leq \epsilon$ for any $S \subseteq \Ground$, let product $i \in \Item$ spread according to 
a diffusion network $\Graph_i = (\Node, \Edge_i)$, and $i^*=\argmax_{i\in\Item}|\Edge_i|$. Then, we have:

\begin{theorem}\label{thm:infMax_uni}
Suppose $\widehat f$ is evaluated up to error $\epsilon = \delta/16$ with \continmax. 
For influence maximization with uniform cost,  Algorithm~\ref{alg:greedyFixedDensity} (with $\rho=0$) outputs a solution $G$ with
$
  f(\Greedy) \geq \frac{1-2\delta}{3} f(\Optimal)
$
in expected time $\widetilde\Ocal\left(\frac{|\Edge_{i^*}|+|\Node|}{\delta^2}  + \frac{|\Item||\Node|}{\delta^3} \right).$
\end{theorem}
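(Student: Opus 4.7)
The plan is to adapt the classical analysis of the greedy algorithm for submodular maximization under the intersection of $P$ matroids to our threshold-based selection rule and to our noisy oracle $\widehat f$. Under uniform cost the problem reduces to $P=2$ matroids with no knapsack, so running Algorithm~\ref{alg:greedyFixedDensity} with $\rho=0$ makes the density filter vacuous and the subroutine behaves as an adaptive-threshold greedy. For the approximation factor, I would follow three conceptual steps (structural lemma, threshold inequality, summation); for the runtime, I would multiply the number of threshold passes by the per-pass cost of marginal-gain queries computed via \continmax.

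For the approximation, I would first invoke the Fisher--Nemhauser--Wolsey structural lemma for the intersection of $P$ matroids: writing $G=\{g_1,\ldots,g_k\}$ in insertion order and $O$ for the optimum, $O\setminus G$ admits a partition $O^{(1)}\cup\cdots\cup O^{(k)}$ with $|O^{(t)}|\le P=2$ such that $G_{t-1}\cup\{o\}$ is feasible for every $o\in O^{(t)}$. Second, I would bound the marginal gain of each such $o$: because the subroutine scans every $z\notin G$ at every threshold $w_0>w_1>\cdots>w_{L+1}=0$, $o$ is considered at the threshold $w_{t_g-1}$ preceding the threshold $w_{t_g}$ at which $g_t$ is inserted, and is feasible there by matroid heredity (the current $G$ is contained in $G_{t-1}$); rejection therefore forces $\widehat f(o\mid G_{\mathrm{cur}}) < w_{t_g-1}=(1+\delta)w_{t_g}$. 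Combining with $\widehat f(g_t\mid G_{t-1})\ge w_{t_g}$, the estimation bound $|\widehat f-f|\le \epsilon$, and submodularity of $f$ yields
\[
   f(o\mid G_{t-1}) \;\le\; (1+\delta)\,f(g_t\mid G_{t-1}) \;+\; O(\epsilon).
\]
Third, I would sum over $o\in O^{(t)}$ and over $t$, using $|O^{(t)}|\le P$ and the submodular upper bound $f(O)-f(G)\le \sum_{o\in O\setminus G} f(o\mid G_{t(o)-1})$, obtaining $f(O)\le (1+P(1+\delta))f(G)+O(\epsilon N)$. For $P=2$ and $\epsilon=\delta/16$, this rearranges (using $\tfrac{1}{3+2\delta}\ge \tfrac{1-2\delta}{3}$ and absorbing the additive error into the $\delta$ slack) to the claimed $f(G)\ge \tfrac{1-2\delta}{3}f(O)$.

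For the runtime, I would separately account for the precomputation and the greedy sweeps. Precomputing Cohen's least-label lists (once per product network) using \continmax\ with $n=O(1/\epsilon^2)=O(1/\delta^2)$ random transmission-time samples costs $\widetilde O((|\Edge_{i^*}|+|\Node|)/\delta^2)$ when absorbed into the densest network $\Edge_{i^*}$. The subroutine then performs $L+2=O(\delta^{-1}\log(N/\delta))$ passes over the ground set of size $N=|\Item||\Node|$, each marginal-gain query reducing via the precomputed tables to a constant number of queries on the least-label lists and to averaging over the $n$ sample batches, giving $\widetilde O(|\Item||\Node|/\delta^3)$ for the sweeps and matching the stated bound.

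The hard part is the interplay between the geometric threshold relaxation and the inexact oracle $\widehat f$, which has not been addressed in prior matroid-intersection analyses that assume an exact argmax oracle. Each of the key inequalities must simultaneously absorb a multiplicative $(1+\delta)$ factor from the threshold and an additive $O(\epsilon)$ from the estimation; keeping these two slacks from compounding into a $(1+\delta)^2$ loss or an uncontrolled additive error is what forces the careful calibration $\epsilon=\delta/16$. Matroid heredity plays a subtle but essential role: it ensures that whenever $o\in O^{(t)}$ is rejected at an earlier threshold, the rejection is really due to insufficient estimated marginal gain (not due to spurious infeasibility), so the charging from $O\setminus G$ to $G$ actually produces a usable inequality rather than a vacuous one.
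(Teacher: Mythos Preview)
Your overall architecture matches the paper's: both reduce to $P=2$ matroids with $\rho=0$, partition $O\setminus G$ by ``blocking time,'' compare marginal gains across the threshold ladder, and sum. The paper uses the cumulative bound $\sum_{i\le t}|C_i|\le Pt$ together with an LP/rearrangement lemma, whereas you invoke a pointwise $|O^{(t)}|\le P$ partition; the latter can in fact be derived from the former by redistribution, so this difference is cosmetic.

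There are, however, two genuine gaps in your sketch.

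\textbf{The zero-threshold case.} Your inequality $f(o\mid G_{t-1})\le (1+\delta)f(g_t\mid G_{t-1})+O(\epsilon)$ relies on $w_{t_g-1}=(1+\delta)w_{t_g}$. When $g_t$ is picked at the final threshold $w_{L+1}=0$, the preceding threshold is $w_L\le \delta d/N$, not $(1+\delta)\cdot 0$, so your bound degenerates. The paper handles this separately (Claim~2), obtaining $f(o\mid G_{t-1})\le \frac{\delta}{N}f(G)+O(\epsilon)$ in that case; summing over at most $N$ such $o$ contributes the extra $\delta f(G)$ term that shows up in Claim~3. You need this branch of the argument.

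\textbf{Absorbing the additive error.} You end with $f(O)\le (1+P(1+\delta))f(G)+O(\epsilon N)$ and then ``absorb the additive error into the $\delta$ slack.'' With $N=|\Ground|=|\Item||\Node|$ and $\epsilon=\delta/16$, the term $O(\epsilon N)$ can dwarf $f(O)$ and cannot be absorbed. The correct additive term is $O(\epsilon P|G|)$ (you only sum over $|O\setminus G|\le P|G|$ elements), and the paper then invokes the \emph{problem-specific} inequality $|G|\le f(G)$---each selected source infects at least itself, so the influence is at least the number of assignments---to turn $O(\epsilon P|G|)$ into $O(\epsilon P)f(G)$, which is what makes $\epsilon=\delta/16$ sufficient. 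Without explicitly using $|G|\le f(G)$, the calibration $\epsilon=\delta/16$ is unjustified and the claimed $\frac{1-2\delta}{3}$ factor does not follow.

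Your runtime accounting is fine and matches the paper's.
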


The parameter $\delta$ introduces a tradeoff between the approximation guarantee and the runtime:
larger $\delta$ decreases the approximation ratio but results in fewer influence evaluations.
Moreover, the runtime has a linear dependence on the network size and the number of products to propagate (ignoring some small logarithmic terms) and,
as a consequence, the algorithm is scalable to large networks.

\paragraph{Analysis.}
Suppose $G=\{g_1,\dots, g_{|G|}\}$ in the order of selection, and let $G^t =\{\g_1, \dots, \g_t\}$.
Let $C_t$ denote all those elements in $O \setminus G$ that satisfy the following: they are still feasible before selecting the $t$-th element $g_t$ but are infeasible after selecting $g_t$.
Equivalently, $C_t$ are all those elements $j\in O \setminus G$ such that (1) $j \cup G^{t-1}$ does not violate the matroid constraints but (2) $j \cup G^{t}$ violates the matroid constraints.
In other words, we can think of $C_t$ as the optimal elements ``blocked'' by $g_t$.
Then, we proceed as follows.

\begin{figure}[!t]
    \centering

\begin{tikzpicture}
\usetikzlibrary{patterns,snakes}
\newcommand*{\BlockWidth}{0.1}%
\pgfmathsetmacro{\Radius}{\BlockWidth}%
\pgfmathsetmacro{\Xoff}{4*\BlockWidth}%
\pgfmathsetmacro{\Yoff}{20*\BlockWidth}%

\node at (-6, 0) {$G$};
\node at (-6, -\Yoff) {$O \setminus G$};

\foreach \x in {-4.5, 0, 1.5, 4.5}
\foreach \y in {0}
{
  \path [fill=red] (\x-\BlockWidth,\y-\BlockWidth) rectangle (\x+\BlockWidth, \y+\BlockWidth);
  \path [fill=blue] (\x - \Xoff, \y - \Yoff) circle (\Radius);
  \path [fill=blue] (\x + \Xoff, \y - \Yoff) circle (\Radius);
  \draw [->, thick] (\x, \y-\BlockWidth) to (\x - \Xoff, \y - \Yoff + \Radius);
  \draw [->, thick] (\x, \y-\BlockWidth) to (\x + \Xoff, \y - \Yoff + \Radius);
 }
 \node (g1) at (-4.5, 0.4) {$g_1$};   \node (C1) at (-4.5, -\Yoff-0.5) {$C_1$};
 \node (gt1) at (0, 0.4) {$g_{t-1}$};   \node (Ct1) at (0, -\Yoff-0.5) {$C_{t-1}$};
 \node (gt) at (1.5, 0.4) {$g_t$};      \node (Ct) at (1.5, -\Yoff-0.5) {$C_t$};
 \node at (4.5, 0.4) {$g_{|G|}$}; \node at (4.5, -\Yoff-0.5) {$C_{|G|}$};

  \node at (-1.5, -\Yoff/2) {$\cdots\cdots$};
  \node at (3, -\Yoff/2) {$\cdots\cdots$};

 \foreach \x in {-3}
\foreach \y in {0}
{
  \path [fill=red] (\x-\BlockWidth,\y-\BlockWidth) rectangle (\x+\BlockWidth, \y+\BlockWidth);
  \path [fill=blue] (\x, \y - \Yoff) circle (\Radius);
  \draw [->, thick] (\x, \y-\BlockWidth) to (\x, \y - \Yoff + \Radius);
 }
  \node at (-3, 0.4) {$g_2$};   \node at (-3, -\Yoff-0.5) {$C_2$};

 \draw [
    thick,
    decoration={
        brace,
        mirror,
        raise=0.6cm
    },
    decorate
] (C1.north) -- (Ct.north)
node [pos=0.5,anchor=north,yshift=-0.8cm] {$\bigcup_{i=1}^t C_i$};

 \draw [
    thick,
    decoration={
        brace,
        raise=0.1cm
    },
    decorate
] (g1.north) -- (gt.north)
node [pos=0.5,anchor=north,yshift=0.8cm] {$G^t$};

\end{tikzpicture}

    \caption{Notation for analyzing Algorithm~\ref{alg:greedyFixedDensity}.
    The elements in the greedy solution $\Greedy$ are arranged according to the order in which Algorithm~\ref{alg:greedyFixedDensity} selects them in Step 3.
    The elements in the optimal solution $O$ but not in the greedy solution $G$ are partitioned into groups $\{ C_t \}_{1 \leq t \leq |G|}$, where $C_t$ are those elements in $O\setminus G$ that are still feasible before selecting $g_t$ but are infeasible after selecting $g_t$.
    } \label{fig:greedyPartition}
\end{figure}
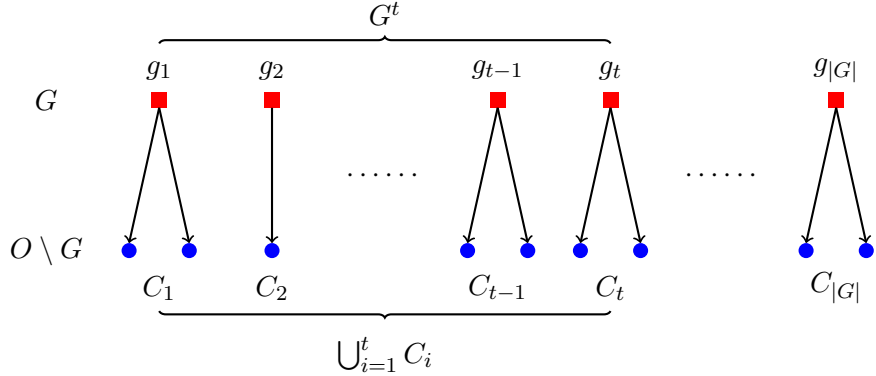

By the property of the intersection of matroids, the size of the prefix $\bigcup_{i=1}^t C_t$ is bounded by $Pt$.
As a consequence of this property, for any $Q \subseteq \Ground$, the sizes of any two maximal independent subsets $T_1$ and $T_2$ of $Q$
can only differ by a multiplicative factor at most $P$.
This can be realized with the following argument. First, note that any element $z \in T_1 \setminus T_2$, $\cbr{z} \cup T_2$ violates at least one of 
the matroid constraints since $T_2$ is maximal.
Then, let $\{ V_i \}_{1\leq i \leq P}$ denote all elements in $T_1 \setminus T_2$ that violate the $i$-th matroid, and partition $T_1 \cap T_2$ arbitrarily among 
these $V_i$'s so that they cover $T_1$.
In this construction, the size of each $V_i$ must be at most $|T_2|$, since otherwise by the Exchange axiom, there would exist $z \in V_i \setminus T_2$ 
that can be added to $T_2$, without violating the $i$-th matroid, leading to a contradiction.
Therefore, $|T_1|$ is at most $P$ times $|T_2|$. 

Next, we apply the above property as follows. Let $Q$ be the union of $G^{t}$ and $\bigcup_{i=1}^t C_t$.
On one hand, $G^{t}$ is a maximal independent subset of $Q$, since no element in $\bigcup_{i=1}^t C_t$ can be added to $G^t$ without violating the matroid constraints.
On the other hand, $\bigcup_{i=1}^t C_t$ is an independent subset of $Q$, since it is part of the optimal solution.
Therefore, $\bigcup_{i=1}^t C_t$ has size at most $P$ times $|G^t|$, which is $Pt$. Note that the properties of matroids are crucial for this analysis,
which justifies our formulation using matroids.
In summary, we have

\begin{claim}\label{cla:size}
$\sum_{i=1}^t |C_i| \leq P t$, for $t =1, \dots, |G|$.
\end{claim}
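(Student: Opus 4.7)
The plan is to follow the outline already sketched in prose immediately before the claim and turn it into a clean formal argument. Set $Q := G^t \cup U_t$ where $U_t := \bigcup_{i=1}^t C_i$, and consider both $G^t$ and $U_t$ as subsets of $Q$ that are independent in the intersection of matroids $\bigcap_{p=1}^P \mathcal{I}_p$. The goal is to apply the standard matroid-intersection size bound: if $T_1, T_2 \subseteq Q$ are both independent in $\bigcap_{p=1}^P \mathcal{I}_p$ and $T_2$ is maximal with respect to independence in $Q$, then $|T_1| \leq P|T_2|$. Applying this with $T_1 = U_t$ and $T_2 = G^t$ gives $|U_t| \leq P|G^t| = Pt$, which is exactly the claim.

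First I would verify the two independence/maximality assertions. Independence of $G^t$ is immediate from the algorithm's feasibility check $G \cup \{z\} \in \mathcal{F}$ in Algorithm~\ref{alg:greedyFixedDensity}. Independence of $U_t$ follows because $U_t \subseteq O \setminus G \subseteq O \in \mathcal{F}$, and by the Heredity axiom every subset of an independent set is independent. For maximality of $G^t$ in $Q$, take any $j \in U_t$ and let $i \leq t$ be the index with $j \in C_i$. By the definition of $C_i$, the set $\{j\} \cup G^i$ violates some matroid constraint; since $G^i \subseteq G^t$, the larger set $\{j\} \cup G^t$ also violates that constraint (by the contrapositive of Heredity), so no element of $U_t$ can be added to $G^t$ while preserving independence.

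Next I would establish the size bound $|U_t| \leq P|G^t|$, making explicit the exchange argument sketched in the excerpt. For each $p = 1, \dots, P$, let $V_p$ be the elements of $U_t \setminus G^t$ whose addition to $G^t$ violates the $p$-th matroid $\mathcal{M}_p$; by maximality of $G^t$ above, every element of $U_t \setminus G^t$ lies in at least one $V_p$, so $\bigcup_p V_p = U_t \setminus G^t$. Assign each element of $U_t \cap G^t$ arbitrarily to one of the $V_p$'s so that the resulting cover still satisfies $\bigcup_p V_p \supseteq U_t$. Then I would show $|V_p| \leq |G^t|$ for every $p$: otherwise, since $V_p$ and $G^t$ are both independent in $\mathcal{M}_p$ (by Heredity applied to $U_t$ and $G^t$), the Exchange axiom would produce some $z \in V_p \setminus G^t$ with $G^t \cup \{z\} \in \mathcal{I}_p$, contradicting the defining property of $V_p$. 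Summing over $p$ yields $|U_t| \leq \sum_{p=1}^P |V_p| \leq P|G^t| = Pt$.

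The only subtlety worth pausing on is the maximality step, because $G^t$ is not necessarily a maximum independent subset of $Q$, only a maximal one in the sense that no $j \in U_t$ can be added. That weaker property is exactly what the Exchange-axiom-plus-cover argument needs, and it is ensured by the way $C_i$ is defined (each element of $U_t$ was "blocked" by some earlier greedy pick whose index is at most $t$, so it remains blocked by $G^t$). Once that point is nailed down, the claim follows from a purely combinatorial matroid argument and contains no analytic content about $f$, submodularity, or the influence estimation error $\epsilon$.
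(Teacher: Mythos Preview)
Your proposal is correct and follows essentially the same approach as the paper: define $Q = G^t \cup \bigcup_{i=1}^t C_i$, observe that $G^t$ is a maximal independent subset of $Q$ while $\bigcup_{i=1}^t C_i$ is an independent subset, and apply the matroid-intersection size bound via the cover $\{V_p\}$ and the Exchange axiom. Your write-up is in fact slightly more careful than the paper's in one respect: you explicitly note that only maximality of $G^t$ (not of $U_t$) is needed, and you spell out the Heredity step showing that $\{j\}\cup G^i \notin \mathcal{F}$ implies $\{j\}\cup G^t \notin \mathcal{F}$ for $i\le t$.
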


Now, we consider the marginal gain of each element in $C_t$ associated with $g_t$.
First, suppose $g_t$ is selected at the threshold $\tau_t>0$. Then, any $j \in C_t$ has marginal gain bounded by $(1+\delta)\tau_t + 2\epsilon$,
since otherwise $j$ would have been selected at a larger threshold before $\tau_t$ by the greedy criterion.
Second, suppose $g_t$ is selected at the threshold $w_{L+1}=0$. Then, any $j \in C_t$ has marginal gain approximately bounded by $\frac{\delta}{\nGround} d$.
Since the greedy algorithm must pick $g_1$ with $\widehat f(g_1) = d$ and $d \leq f(g_1)+\epsilon$, any $j \in C_t$ has marginal gain bounded by 
$\frac{\delta}{\nGround} f(G) + \Ocal(\epsilon)$.
Putting everything together we have:
\begin{claim}\label{cla:gain}
Suppose $g_t$ is selected at the threshold $\tau_t$. Then $f(j|G^{t-1}) \leq (1+\delta) \tau_t + 4\epsilon + \frac{\delta}{\nGround} f(G)$ for any $j \in C_t$.
\end{claim}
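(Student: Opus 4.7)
The plan is to show that any $j \in C_t$ was explicitly offered to the algorithm at some earlier threshold and rejected, and then to turn the rejection criterion into the desired upper bound on $f(j|G^{t-1})$. The key enabler is that $j \in C_t$ satisfies $G^{t-1} \cup \{j\} \in \Fcal$, so by the heredity axiom of matroids, $G'' \cup \{j\} \in \Fcal$ for every intermediate working set $G'' \subseteq G^{t-1}$ encountered before $g_t$ is chosen. Thus $j$ was never excluded from the inner loop on feasibility grounds prior to $g_t$, and its rejection must have been due to the marginal gain test.

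I would split the argument into two cases depending on whether $\tau_t = w_{L+1} = 0$ or $\tau_t = w_k > 0$ for some $k \in \{0,\dots,L\}$. In the positive subcase with $k \geq 1$, at the immediately preceding threshold $(1+\delta)\tau_t = w_{k-1}$ the algorithm processed $j$ against some working set $G'' \subseteq G^{t-1}$ and did not select it, so $\widehat f(j|G'') < (1+\delta)\tau_t$. Applying the bound $|\widehat f(S) - f(S)| \leq \epsilon$ to both $\widehat f(G''\cup\{j\})$ and $\widehat f(G'')$ yields $f(j|G'') < (1+\delta)\tau_t + 2\epsilon$, and then submodularity along the chain $G'' \subseteq G^{t-1}$ gives $f(j|G^{t-1}) \leq f(j|G'') < (1+\delta)\tau_t + 2\epsilon$. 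The subcase $k = 0$ is immediate: by the definition of $d_\rho$ and $\rho = 0$ in the uniform-cost setting, $f(\{j\}) \leq d_\rho = \tau_t$, and submodularity gives $f(j|G^{t-1}) \leq f(\{j\}) \leq \tau_t$.

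For the zero-threshold case $\tau_t = 0$, I would observe that at the last positive threshold $w_L$ element $j$ was still feasible (by the heredity argument above) and was considered but not selected, so $\widehat f(j|G'') < w_L \leq \delta d / \nGround$. The same approximation-plus-submodularity bridge gives $f(j|G^{t-1}) \leq \delta d/\nGround + 2\epsilon$. To convert the $d$ into $f(G)$, I would use that the algorithm's first pick satisfies $\widehat f(g_1) = d$, hence $d \leq f(g_1) + \epsilon \leq f(G) + \epsilon$ by monotonicity of $f$. Absorbing the small $\delta\epsilon/\nGround$ term into the additive slack rewrites the bound as $\delta f(G)/\nGround + 4\epsilon$, which combined with the positive-$\tau_t$ case (and the trivial $\delta f(G)/\nGround \geq 0$ addendum there) yields the claim uniformly.

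The main obstacle is bookkeeping: making sure that the state $G''$ at the moment $j$ is checked is indeed a subset of $G^{t-1}$ so that submodularity applies in the correct direction, and handling all edge cases (namely $k = 0$, $\tau_t = 0$, and the effect of inexact evaluation) in a unified way. The fact that the inner loop of Algorithm~\ref{alg:greedyFixedDensity} grows $G$ monotonically and processes the thresholds in decreasing order is precisely what makes this argument go through.
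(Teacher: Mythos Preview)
Your proposal is correct and follows essentially the same two-case split (positive threshold versus $\tau_t = w_{L+1} = 0$) as the paper's proof. You are in fact more explicit than the paper about the intermediate working set $G'' \subseteq G^{t-1}$ and the appeal to submodularity to pass from $f(j|G'')$ to $f(j|G^{t-1})$; the paper simply asserts $\widehat f(j|G^{t-1}) \leq (1+\delta)\tau_t$ in the positive-threshold case and handles the zero-threshold case with the same $d \leq f(g_1)+\epsilon \leq f(G)+\epsilon$ conversion you describe.
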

Since the evaluation of the marginal gain of $g_t$ should be at least $\tau_t$, this claims essentially indicates that the marginal gain of $j$ is approximately bounded 
by that of $g_t$.

Since there are not many elements in $C_t$ (Claim~\ref{cla:size}) and the marginal gain of each of its elements is not much larger than that of 
$g_t$ (Claim~\ref{cla:gain}), we can conclude that the marginal gain of $O \setminus G = \bigcup_{i=1}^{|G|} C_t$ is not much larger than that of $G$, which is just $f(G)$. 
\begin{claim}\label{cla:com}
The marginal gain of $O\setminus G$ satisfies
$$\sum_{j \in O\setminus G} f(j|\Greedy) \leq [(1+\delta) P + \delta] f(G)  + (6+2\delta) \epsilon P |G|. $$
\end{claim}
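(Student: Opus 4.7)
The plan is to bound $\sum_{j \in O \setminus G} f(j\mid G)$ by partitioning $O\setminus G$ into the blocks $C_1,\dots,C_{|G|}$ and invoking Claims~\ref{cla:size} and~\ref{cla:gain} together with submodularity and the monotonicity of the thresholds $\tau_t$ in Algorithm~\ref{alg:greedyFixedDensity}. Since $G^{t-1}\subseteq G$, submodularity gives $f(j\mid G)\le f(j\mid G^{t-1})$, so by Claim~\ref{cla:gain},
\begin{equation*}
\sum_{j\in O\setminus G} f(j\mid G) \;\le\; \sum_{t=1}^{|G|} |C_t|\Bigl[(1+\delta)\tau_t + 4\epsilon + \tfrac{\delta}{N}f(G)\Bigr].
\end{equation*}
The three terms will then be treated separately.

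The heart of the argument is the $(1+\delta)\tau_t$ term, i.e., bounding $\sum_t |C_t|\tau_t$. The thresholds $w_0>w_1>\cdots>w_{L+1}=0$ are traversed in decreasing order and $g_t$ is selected before $g_{t+1}$, so $\tau_t$ is non-increasing in $t$. Using summation by parts with $S_t:=\sum_{i=1}^t |C_i|$ and the bound $S_t\le Pt$ from Claim~\ref{cla:size},
\begin{equation*}
\sum_{t=1}^{|G|} |C_t|\tau_t \;=\; S_{|G|}\tau_{|G|} + \sum_{t=1}^{|G|-1} S_t(\tau_t-\tau_{t+1}) \;\le\; P\sum_{t=1}^{|G|}\tau_t.
\end{equation*}
Now $\widehat f(g_t\mid G^{t-1})\ge \tau_t$ by the selection rule, so $f(g_t\mid G^{t-1})\ge \tau_t-2\epsilon$, and telescoping gives $\sum_t \tau_t \le f(G) + 2\epsilon|G|$. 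Hence $\sum_t |C_t|\tau_t \le P\bigl[f(G)+2\epsilon|G|\bigr]$.

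The remaining two terms are straightforward: $\sum_t|C_t|\cdot 4\epsilon \le 4\epsilon\cdot P|G|$ again by Claim~\ref{cla:size}, and for the $\frac{\delta}{N}f(G)$ term we use $\sum_t |C_t|=|O\setminus G|\le N$ to get a bound of $\delta f(G)$. Combining,
\begin{equation*}
\sum_{j\in O\setminus G} f(j\mid G) \;\le\; (1+\delta)P\bigl[f(G)+2\epsilon|G|\bigr] + 4\epsilon P|G| + \delta f(G),
\end{equation*}
which rearranges to $[(1+\delta)P+\delta]f(G) + (6+2\delta)\epsilon P|G|$, as required.

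The main obstacle is the summation-by-parts step: Claim~\ref{cla:size} only controls the \emph{prefix sums} $S_t$ rather than individual $|C_t|$, so one must couple this with the monotonicity of $\tau_t$ to turn $\sum_t|C_t|\tau_t$ into $P\sum_t\tau_t$. Once this reduction is made, the telescoping identity $\sum_t f(g_t\mid G^{t-1})=f(G)$ and the noise bound $|\widehat f - f|\le\epsilon$ (which introduces the $2\epsilon|G|$ slack) finish the argument. No further estimation of marginal gains is needed beyond what Claim~\ref{cla:gain} already provides.
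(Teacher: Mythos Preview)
Your proof is correct and follows the same overall structure as the paper's: partition $O\setminus G$ into the $C_t$'s, apply Claim~\ref{cla:gain} via submodularity, bound the three resulting terms separately, and telescope $\sum_t f(g_t\mid G^{t-1})=f(G)$ to control $\sum_t\tau_t$.

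The one substantive difference is in how you establish $\sum_t |C_t|\tau_t \le P\sum_t \tau_t$. The paper isolates this as a separate technical lemma (their Lemma~\ref{lem:seqsum}) and proves it via linear-programming duality. Your Abel summation argument is a more elementary and direct route to the same inequality: it uses exactly the two ingredients available (prefix-sum bound $S_t\le Pt$ from Claim~\ref{cla:size} and monotonicity $\tau_t\ge\tau_{t+1}$), and avoids the detour through an LP. Both arguments are valid; yours is shorter and perhaps more transparent here, while the paper's LP formulation is stated in slightly greater generality.
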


Finally, since by submodularity, $f(O) \leq f(O\cup G) \leq f(G) + \sum_{j \in O\setminus G} f(j|\Greedy)$, Claim~\ref{cla:com} shows that $f(G)$ is close to $f(O)$ up to a multiplicative 
factor roughly $(1+P)$ and additive factor $\Ocal(\epsilon P |G|)$.
Given that $f(G) > |G|$, it leads to roughly a $1/3$-approximation for our influence maximization problem by setting $\epsilon = \delta/16$ when evaluating $\widehat f$ with \continmax. 
Combining the above analysis and the runtime of the influence estimation algorithm, we have our final guarantee in Theorem~\ref{thm:infMax_uni}. Appendix~\ref{sec:influmaximization:proof:uniform} presents the complete proofs. 

\subsubsection{General Case}\label{sec:nonuni}
In this section, we consider the general case, in which users may have different associated costs. 
Recall that this case corresponds to the problem defined by Equation~\eq{pro:infMax} with $P=1$ matroid constraints and $k=|\Item|$ group-knapsack constraints. 
Here, we will show that there is a step in Algorithm~\ref{alg:densityEnu} which outputs a solution $S_\rho$ that is a good approximation.

\paragraph{Intuition.} The key idea behind Algorithm~\ref{alg:densityEnu} and Algorithm~\ref{alg:greedyFixedDensity} is simple:
spend the budgets \emph{efficiently} and spend them \emph{as much as possible}.
By spending them efficiently, we mean to only select those elements whose density ratio between the marginal gain and the cost is above the threshold $\rho$.
That is, we assign product $i$ to user $j$ only if the assignment leads to large marginal gain without paying too much.
By spending the budgets as much as possible, we mean to stop assigning product $i$ only if its budget is almost exhausted or
no more assignments are possible without violating the matroid constraints.
Here we make use of the special structure of the knapsack constraints on the budgets:
each constraint is only related to the assignment of the corresponding product and its budget,
so that when the budget of one product is exhausted, it does not affect the assignment of the other products.
In the language of submodular optimization, the knapsack constraints are on a partition $\Ground_{i*}$ of the ground set
and the objective function is a sum of submodular functions over the partition.

However, there seems to be a hidden contradiction between spending the budgets effi\-cient\-ly and spending them as much as possible.
On one hand, efficiency means the density ratio should be large, so the threshold $\rho$ should be large;
on the other hand, if $\rho$ is large, there are just a few elements that can be considered, and thus the budget 
might not be exhausted.
After all, if we set $\rho$ to be even larger than the maximum possible value, then no element is considered and no gain is achieved.
In the other extreme, if we set $\rho=0$ and consider all the elements, then a few elements with large costs may be selected, exhausting all the budgets and leading to a poor solution.

Fortunately, there exists a suitable threshold $\rho$ that achieves a good tradeoff between the two and leads to a good approximation.
On one hand, the threshold is sufficiently small, so that the optimal elements we abandon (\ie, those with low-density ratio) have a total gain at most a fraction of the optimum;
on the other hand, it is also sufficiently large, so that the elements selected are of high quality (\ie, of high-density ratio), and we achieve sufficient gain even if the budgets of some items 
are exhausted. 
\begin{theorem}\label{thm:infMax}
Suppose $\widehat f$ is evaluated up to error $\epsilon = \delta/16$ with \continmax. 
In Algorithm~\ref{alg:densityEnu}, there exists a $\rho$ such that
$$
    f(S_\rho) \geq  \frac{\max\cbr{k_a, 1} }{(2|\Item|+2) (1+3\delta)} f(O)
$$
where $k_a$ is the number of active knapsack constraints:
$$
 k_a = \left| \{i: S_\rho  \cup \{z\} \not\in \Fcal,  \forall z \in \Ground_{i*}\} \right|.
$$
The expected running time is $\widetilde\Ocal\left(\frac{|\Edge_{i^*}|+|\Node|}{\delta^2}  + \frac{|\Item||\Node|}{\delta^4} \right).$
\end{theorem}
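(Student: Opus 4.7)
The plan is to extend the analysis of Theorem~\ref{thm:infMax_uni} by exploiting the density threshold mechanism of Algorithm~\ref{alg:densityEnu} together with the partition structure of the group-knapsack constraints. The key observation is that Algorithm~\ref{alg:densityEnu} enumerates a geometric grid of thresholds on $[\tfrac{2d}{P+2k+1}, \tfrac{2|\Ground|d}{P+2k+1}]$ with ratio $(1+\delta)$. Since $d = \max_z f(\{z\})$ satisfies $d \leq f(O) \leq |O|\,d$ by monotonicity and submodularity, there exists a target $\rho^\star \approx \tfrac{f(O)}{(2|\Item|+2)|O|}$ inside this range, and some $\rho$ considered by the enumeration lies within a factor $(1+\delta)$ of $\rho^\star$. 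I would fix this $\rho$ and let $G = S_\rho$ be the corresponding greedy output of Algorithm~\ref{alg:greedyFixedDensity}.

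Next I would partition $O \setminus G$ into three disjoint categories: $O_{\mathrm{dens}}$, optimal elements whose marginal density with respect to $G$ falls below $\rho$; $O_{\mathrm{mat}}$, optimal elements of sufficient density blocked by the partition matroid $\Mcal_1$; and $O_{\mathrm{knap}}$, optimal elements of sufficient density rejected because they would overflow a knapsack. For $O_{\mathrm{dens}}$, monotonicity and the density condition yield
\begin{align*}
\sum_{z\in O_{\mathrm{dens}}} f(z|G) \;\leq\; \rho \sum_{z\in O_{\mathrm{dens}}} c(z) \;\leq\; k\rho,
\end{align*}
since the total cost of the optimum across all $k=|\Item|$ knapsacks is at most $k$. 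For $O_{\mathrm{mat}}$, I would replay Claims~\ref{cla:size} and~\ref{cla:gain} with $P=1$: at most one optimal element is blocked per $g_t$, and each contributes marginal gain at most $(1+\delta)\tau_t + 4\epsilon + \tfrac{\delta}{\nGround}f(G)$, so the total contribution from $O_{\mathrm{mat}}$ is $(1+\delta)f(G) + \Ocal(\epsilon |G| + \delta f(G))$. For $O_{\mathrm{knap}}$, each such $z \in \Ground_{i*}$ satisfies $c(G \cap \Ground_{i*}) > 1 - c(z)$; aggregating over the $k_a$ active knapsacks its total contribution is bounded by an additional $(k_a+1)\rho + \Ocal(\delta) f(G)$ via the density filter applied to the selected elements.

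Combining these three bounds via submodularity, $f(O) \leq f(G) + \sum_{z \in O\setminus G} f(z|G)$, yields an inequality of the form $f(O) \leq (2|\Item|+2)\rho + (1+3\delta)f(G)$ after collecting terms. The second essential ingredient is a matching lower bound $f(G) \geq k_a \rho$: since every $z$ selected into $G$ satisfies $f(z|G^{t-1}) \geq c(z)\rho$, submodularity gives $f(G \cap \Ground_{i*}) \geq \rho \cdot c(G \cap \Ground_{i*})$, and on each active knapsack $i$ the total selected cost is at least $1 - \Ocal(\delta)$; since the objective decomposes additively over the partition as $f(G) = \sum_i f_i(G \cap \Ground_{i*})$, summing over the $k_a$ active indices yields $f(G) \geq k_a \rho (1 - \Ocal(\delta))$. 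Substituting this back into the first inequality and solving for $f(G)$ gives $f(G) \geq \tfrac{k_a}{(2|\Item|+2)(1+3\delta)} f(O)$ when $k_a\geq 1$; when $k_a = 0$ the algorithm must have terminated because the matroid blocks all remaining elements, so $O_{\mathrm{knap}} = \emptyset$ and essentially the uniform-cost argument of Theorem~\ref{thm:infMax_uni} recovers the same bound with $k_a$ replaced by $1$, producing the promised $\max\{k_a,1\}$ numerator.

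The main obstacle will be tightly controlling the boundary effects in the $O_{\mathrm{knap}}$ upper bound and in the matching lower bound $f(G) \geq k_a \rho$: an element whose cost is close to the remaining budget can contribute a non-negligible additive loss per knapsack, and one must carefully absorb these losses into the $(1+3\delta)$ multiplicative slack by leveraging the geometric $(1+\delta)$-ratio of the enumeration grid together with the terminal threshold $w_{L+1}=0$ that ensures the residual error $\tfrac{\delta}{\nGround}f(G)$ is small. As for the runtime, the preprocessing done by \continmax costs $\widetilde\Ocal((|\Edge_{i^*}|+|\Node|)/\delta^2)$ and is shared across all thresholds, while each of the $\Ocal(\tfrac{1}{\delta}\log|\Ground|)$ thresholds enumerated by Algorithm~\ref{alg:densityEnu} invokes Algorithm~\ref{alg:greedyFixedDensity} at a per-threshold cost of $\widetilde\Ocal(|\Item||\Node|/\delta^3)$; multiplying and absorbing logarithmic factors into $\widetilde\Ocal$ gives the overall $\widetilde\Ocal((|\Edge_{i^*}|+|\Node|)/\delta^2 + |\Item||\Node|/\delta^4)$ expected runtime claimed in the theorem.
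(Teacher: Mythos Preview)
Your proposal has a genuine gap in the lower bound $f(G)\geq k_a\rho(1-\Ocal(\delta))$. You write that ``on each active knapsack $i$ the total selected cost is at least $1-\Ocal(\delta)$,'' but this is false: if the overflow element $z$ has cost close to $1$ (say $c(z)=0.99$), then $c(G_i)+c(z)>1$ forces only $c(G_i)>0.01$, so $f_i(G_i)\geq \rho\,c(G_i)$ gives you merely $0.01\rho$, not $\rho$. This is an $\Omega(1)$ loss, not an $\Ocal(\delta)$ boundary effect, and neither the geometric grid nor the terminal threshold $w_{L+1}=0$ can absorb it. Relatedly, your target density $\rho^\star\approx \tfrac{f(O)}{(2|\Item|+2)|O|}$ carries a spurious $|O|$; the correct target is $\rho^*=\tfrac{2f(O)}{P+2k+1}=\tfrac{2f(O)}{2|\Item|+2}$, which \emph{does} lie in the enumeration range because $d\leq f(O)\leq |\Ground|d$.

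The paper avoids your three-way partition and instead does a clean case split on whether any knapsack is active. When none is active, it runs exactly the matroid argument on $O_+$ (your $O_{\mathrm{mat}}$) and the density bound on $O_-$ (your $O_{\mathrm{dens}}$); there is no $O_{\mathrm{knap}}$ to handle. When $k_a>0$, it abandons the upper-bound route entirely and directly lower-bounds $f(G)$ as follows: for each active knapsack $i$, let $z$ be the element whose addition would overflow. Since both the elements of $G_i$ \emph{and} $z$ pass the density filter, $f_i(G_i\cup\{z\})\geq \rho(c(G_i)+c(z))>\rho$. The crucial missing step is to observe that $G_i$ is nonempty and every element of $G_i$ was selected at a threshold $\geq w_t$ while $z$ was not selected before $w_t$, so $f_i(z\mid G_i)\leq (1+\delta)f_i(G_i)+\Ocal(\epsilon)$. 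Combining, $f_i(G_i)\geq \rho/(2+\delta)-\Ocal(\epsilon|G_i|)$. Summing over the $k_a$ active indices and using $\rho\geq \rho^*/(1+\delta)$ gives the stated bound; the factor $2$ in $\rho^*$ exactly cancels the $1/2$ loss here.
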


Importantly, the approximation factor improves over the best known guarantee $\frac{1}{P+2k + 1} = \frac{1}{2|\Item| + 2}$ for efficiently maximizing submodular functions over $P$ matroids and 
$k$ general knapsack constraints. Moreover, since the runtime has a linear dependence on the network size, the algorithm easily scales to large networks.
As in the uniform cost case, the parameter $\delta$ introduces a tradeoff between the approximation and the runtime.

\paragraph{Analysis.}
The analysis follows the intuition. 
Pick $\rho = \frac{2 f(O)}{P+2k+1}$, where $O$ is the optimal solution, and define
\begin{align*}
    O_- & := \cbr{z \in O\setminus S_\rho: f(z| S_\rho) < c(z) \rho + 2\epsilon }, \\
    O_+ & := \cbr{z \in O\setminus S_\rho: z\not\in O_-}.
\end{align*}
Note that, by submodularity, $O_-$ is a superset of the elements in the optimal solution that we abandon due to the density threshold and, 
by construction, its marginal gain is small:
$$
    f(O_- | S_\rho) \leq \rho c(O_-) + \Ocal(\epsilon |S_\rho|) \leq k\rho + \Ocal(\epsilon |S_\rho|),
$$
where the small additive term $\Ocal(\epsilon |S_\rho|)$ is due to inexact function evaluations. Next, we proceed as follows.

First, if no knapsack constraints are active, then the algorithm runs as if there were no knapsack constraints (but only on elements with density ratio above 
$\rho$).
Therefore, we can apply the same argument as in the case of uniform cost (refer to the analysis up to Claim~\ref{cla:com} in Section~\ref{sec:uni});
the only caveat is that we apply the argument to $O_+$ instead of $O\setminus S_\rho$.
Formally, similar to Claim~\ref{cla:com}, the marginal gain of $O_+$ satisfies
%
$$
    f(O_+| S_\rho) \leq [(1+\delta)P+\delta ] f(S_\rho) + \Ocal(\epsilon P |S_\rho|),
$$
where the small additive term $\Ocal(\epsilon P |S_\rho|)$ is due to inexact function evaluations. Using that $f(O) \leq f(S_\rho) + f(O_-|S_\rho) + f(O_+ |S_\rho)$,
we can conclude that $S_\rho$ is roughly a $\frac{1}{P+ 2k + 1}$-approximation.

Second, suppose $k_a >0$ knapsack constraints are active and the algorithm discovers that the budget of product $i$ is exhausted
when trying to add element $z$ to the set $G_i = \Greedy \cap \Ground_{i*}$ of selected elements at that time. 
Since $c(G_i \cup \cbr{z}) > 1$ and each of these elements has density above $\rho$, the gain of $G_i \cup \cbr{z}$ is above $\rho$.
However, only $G_i$ is included in our final solution, so we need to show that the marginal gain of $z$ is not large compared to that of $G_i$.
To do so, we first realize that the algorithm greedily selects elements with marginal gain above a decreasing threshold $w_t$.
Then, since $z$ is the last element selected and $G_i$ is nonempty (otherwise adding $z$ will not exhaust the budget), the marginal 
gain of $z$ must be bounded by roughly that of $G_i$, which is at least roughly $\frac{1}{2}\rho$.
%
%
Since this holds for all active knapsack constraints, then the solution has value at least $\frac{k_a }{2}\rho$, which is an $\frac{k_a }{P+2k+1}$-approximation.

Finally, combining both cases, and setting $k=|\Item|$ and $P=1$ as in our problem, we have our final guarantee in Theorem~\ref{thm:infMax}. Appendix~\ref{sec:influmaximization:proof:general} presents the complete proofs. 

\section{Experiments on Synthetic and Real Data\label{sec:exp}}

In this section, we first evaluate the accuracy of the estimated influence given by \continmax and then investigate the performance of influence maximization on synthetic and real networks by incorporating \continmax into the framework of \budgetmax. We show that our approach significantly outperforms the state-of-the-art methods in terms of both speed and solution quality.

\begin{figure}
\centering
\renewcommand{\tabcolsep}{0pt}
\begin{tabular}{cccc}
\rotatebox{90}{\bf\small~~~~~~~{Core-perphery}}~
&
\begin{subfigure}[t]{0.31\textwidth}
  	\includegraphics[width=\textwidth]{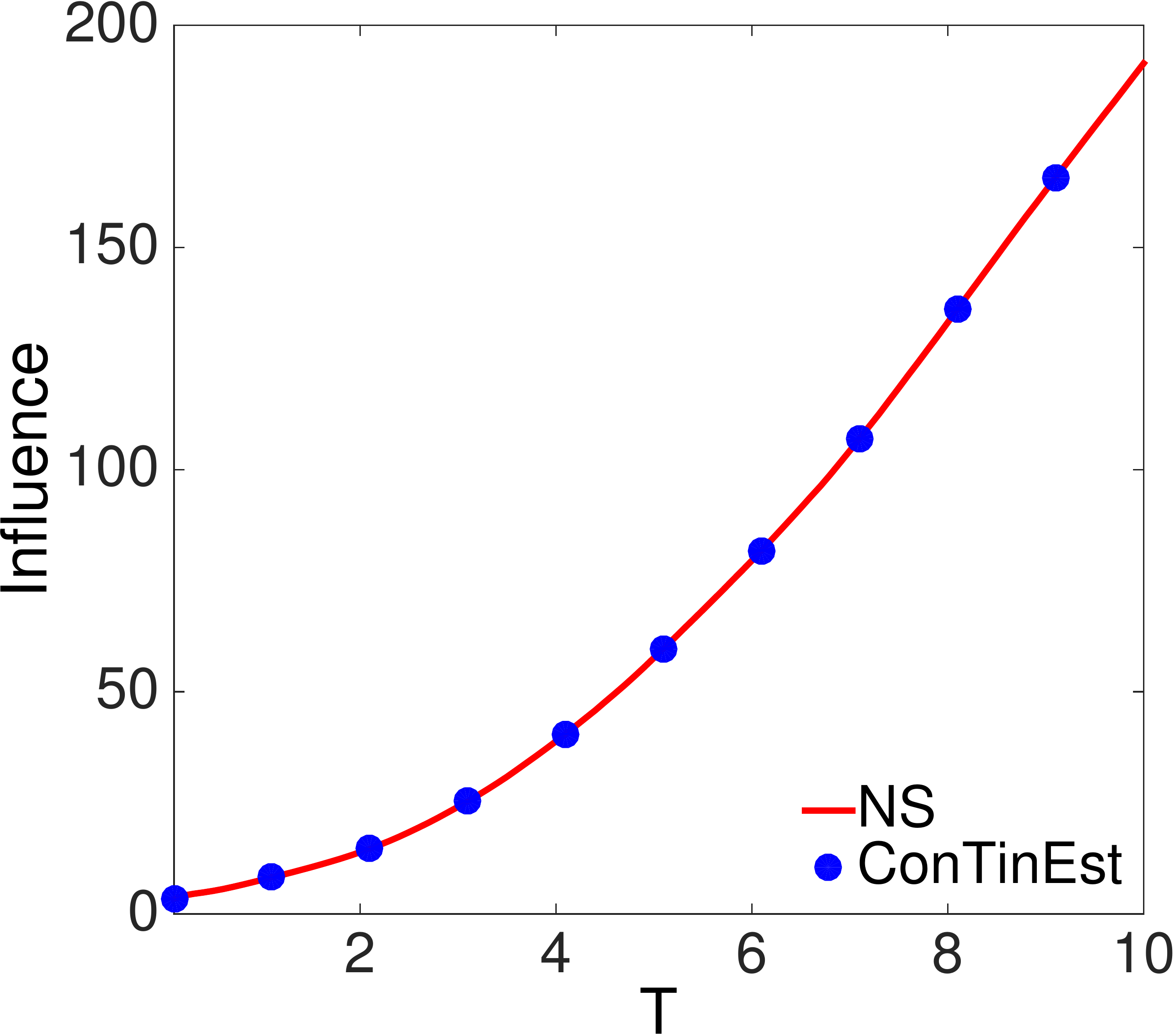} 
\end{subfigure} 
&
\begin{subfigure}[t]{0.31\textwidth}
  	\includegraphics[width=\textwidth]{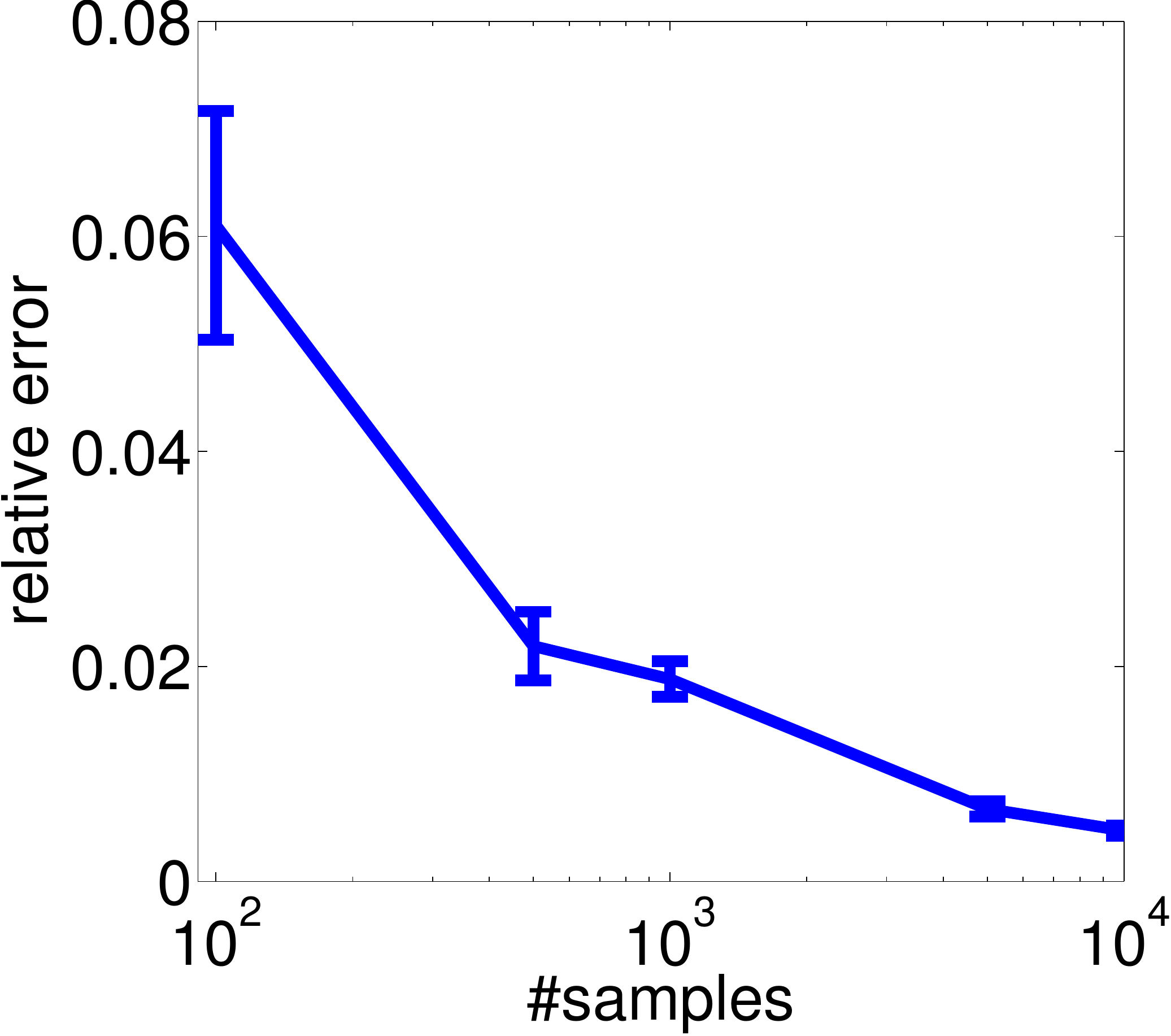} 
\end{subfigure} 
& 
\begin{subfigure}[t]{0.31\textwidth}
  	\includegraphics[width=\textwidth]{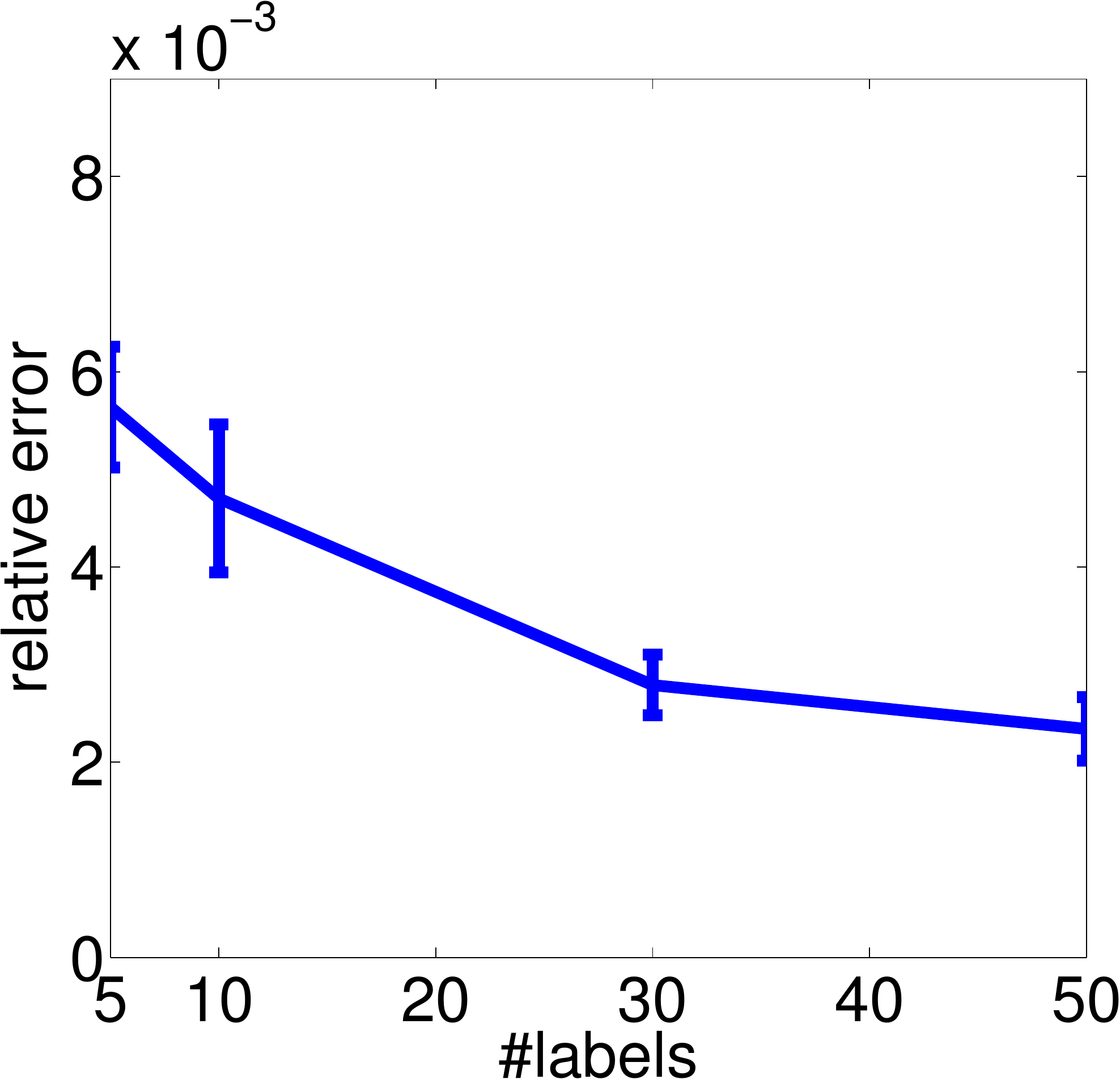} 
\end{subfigure} 
\\
\rotatebox{90}{\bf\small~~~~~~~~~~~~{Random}}~
&
\begin{subfigure}[t]{0.31\textwidth}
  	\includegraphics[width=\textwidth]{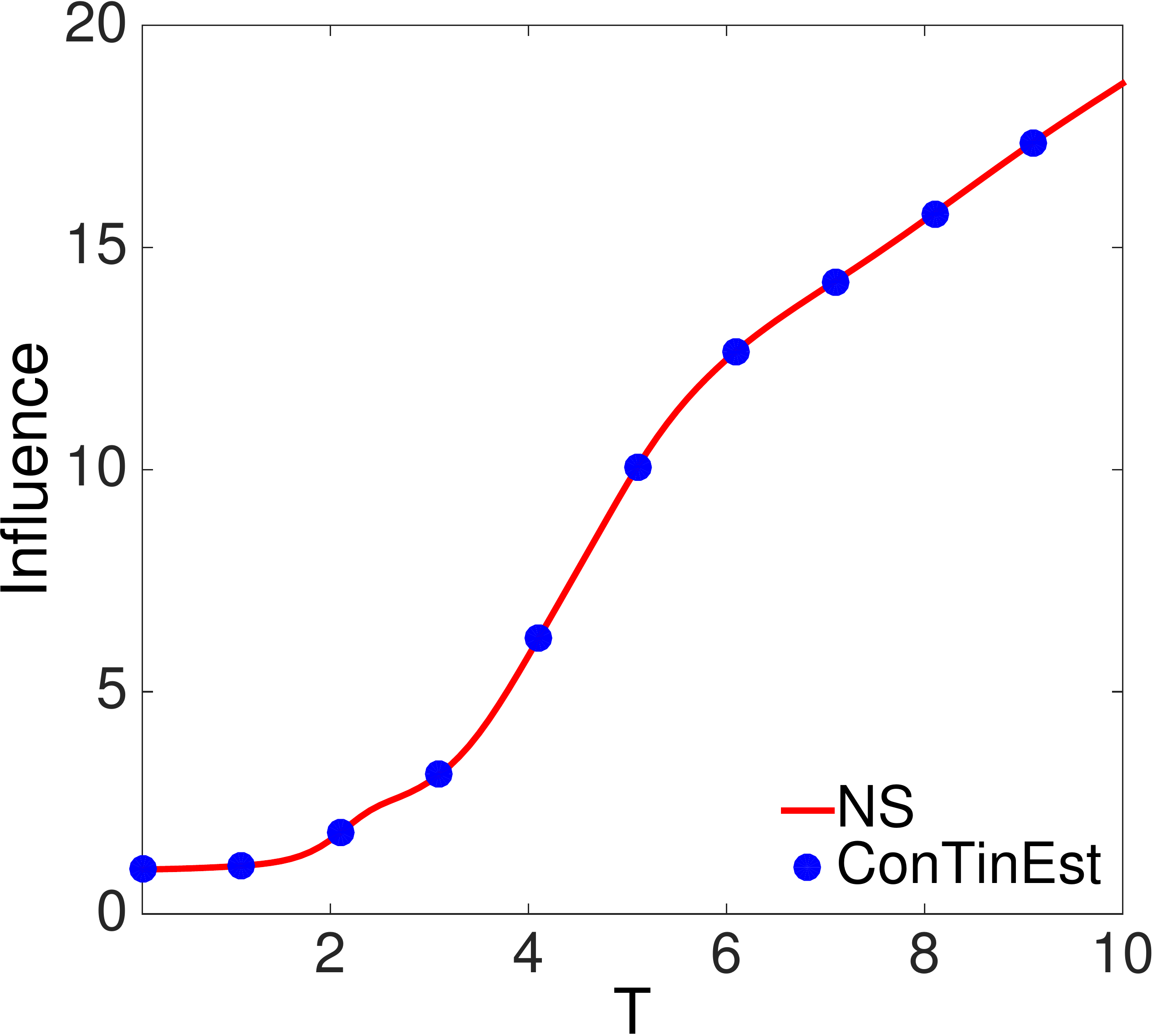} 
\end{subfigure} 
&
\begin{subfigure}[t]{0.31\textwidth}
  	\includegraphics[width=\textwidth]{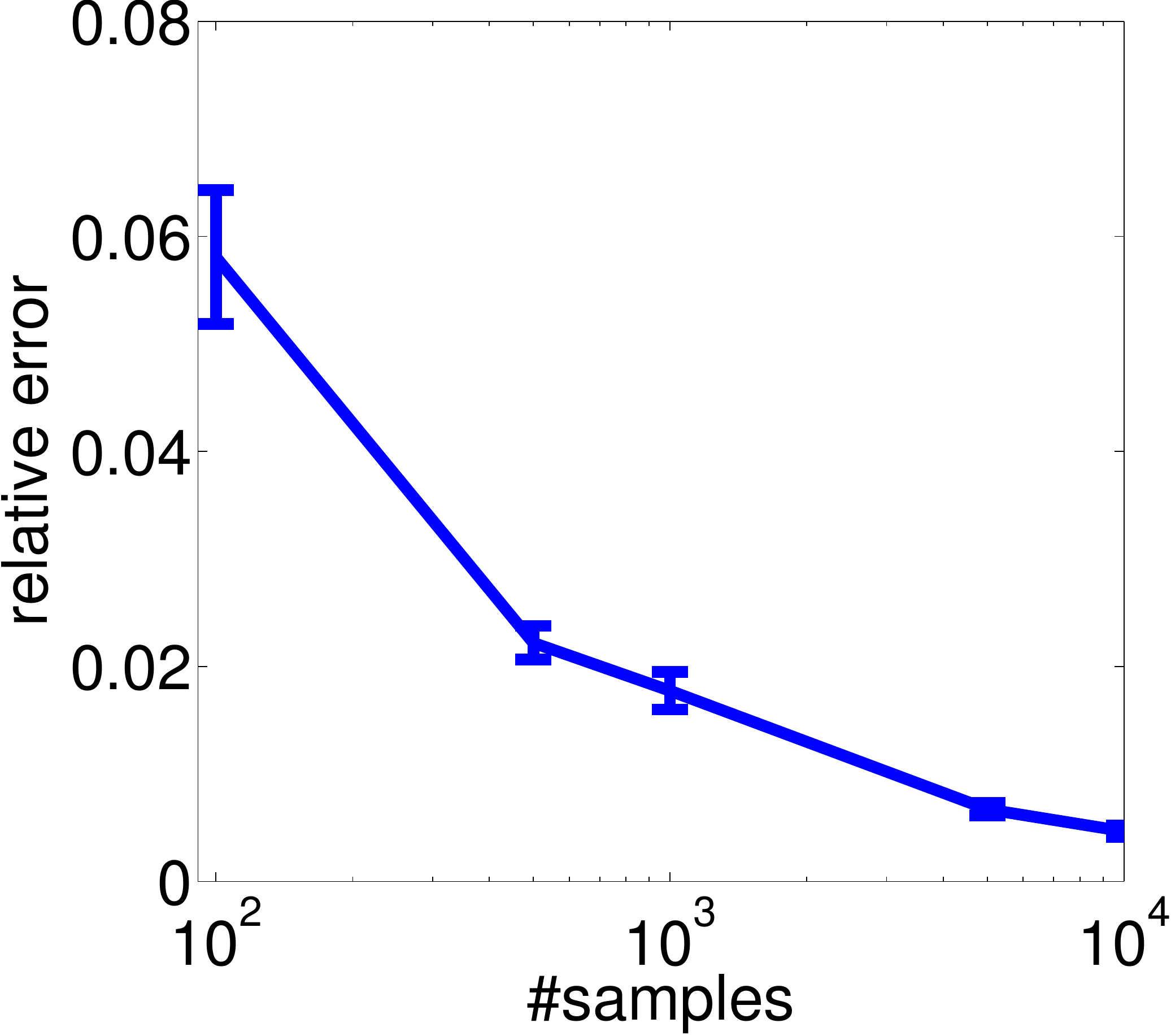} 
\end{subfigure} 	
&
\begin{subfigure}[t]{0.31\textwidth}
  	\includegraphics[width=\textwidth]{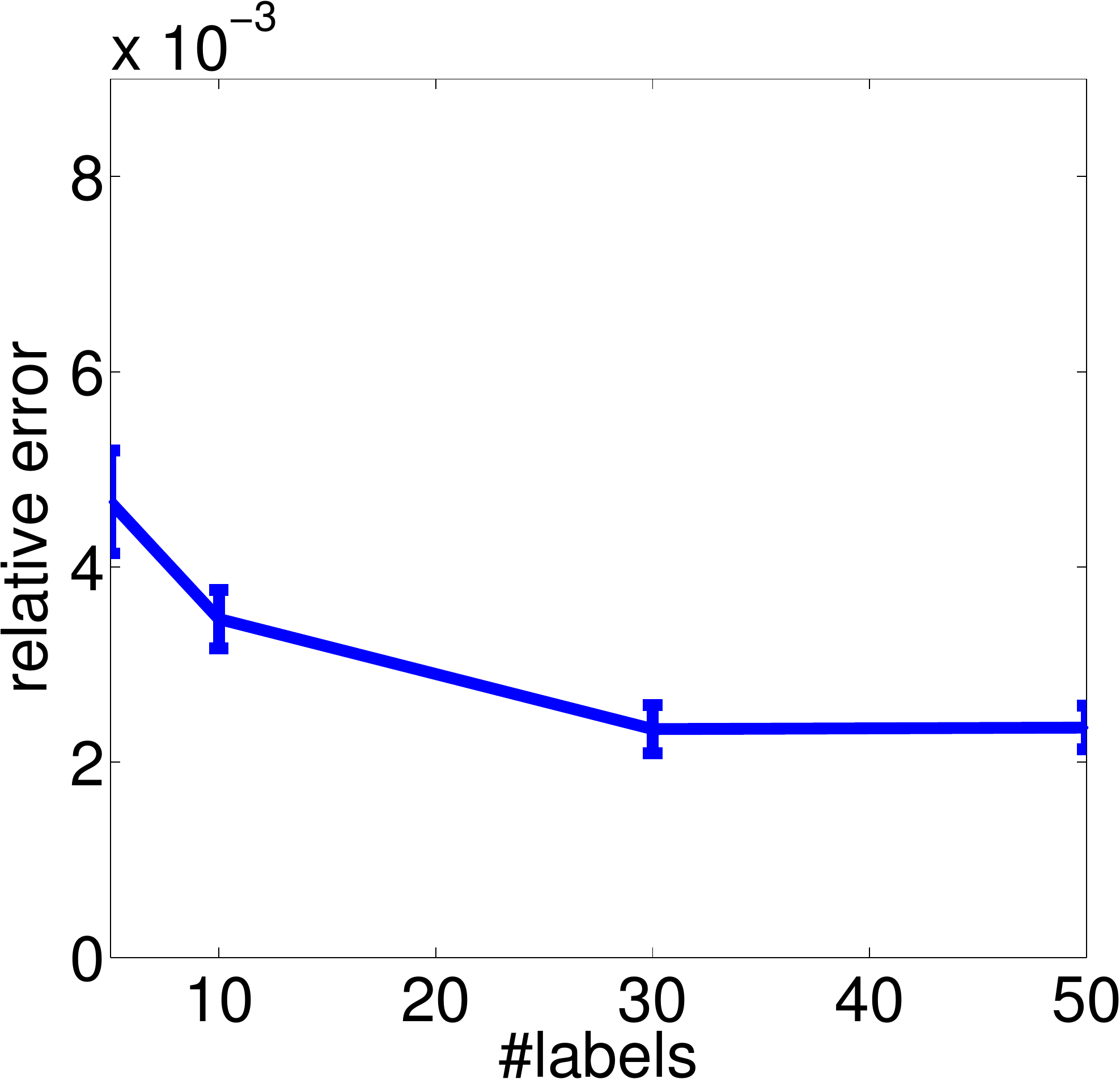} 
\end{subfigure} 
\\
\rotatebox{90}{\bf\small~~~~~~~~~~~{Hierarchy}}~
&
\begin{subfigure}[t]{0.31\textwidth}
  	\includegraphics[width=\textwidth]{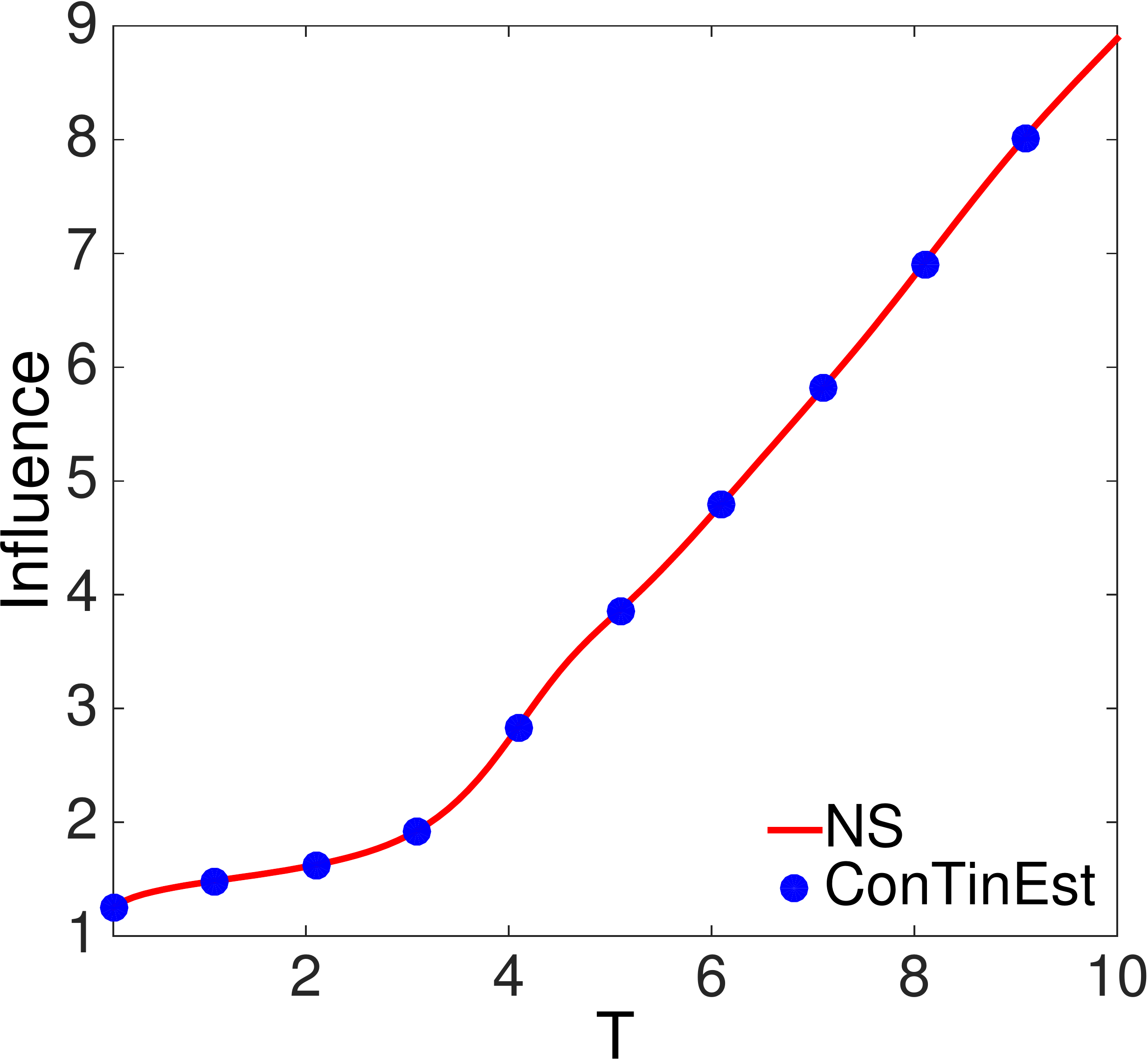} 
\end{subfigure} 
&
\begin{subfigure}[t]{0.31\textwidth}
  	\includegraphics[width=\textwidth]{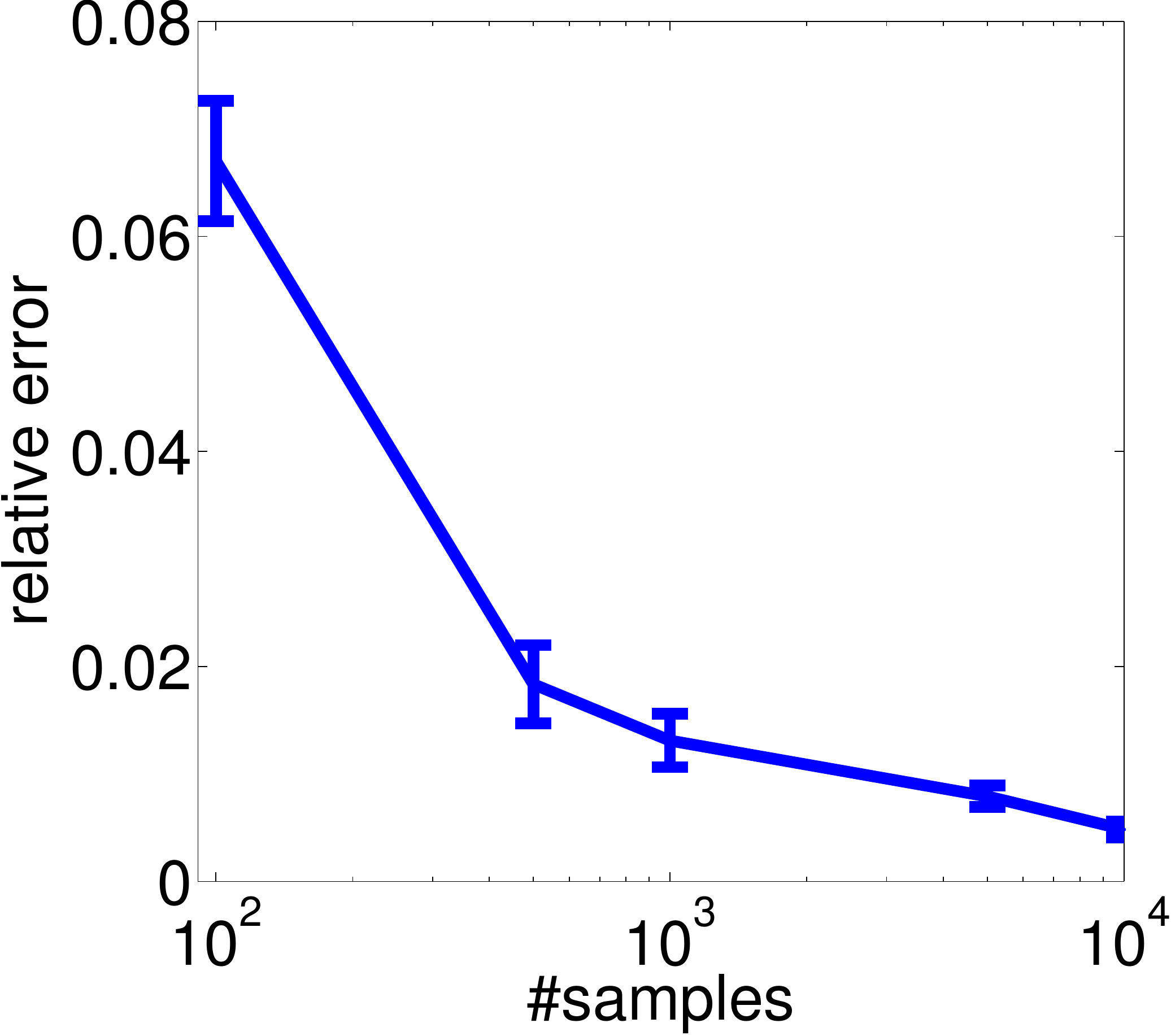} 
\end{subfigure}	
&
\begin{subfigure}[t]{0.31\textwidth}
  	\includegraphics[width=\textwidth]{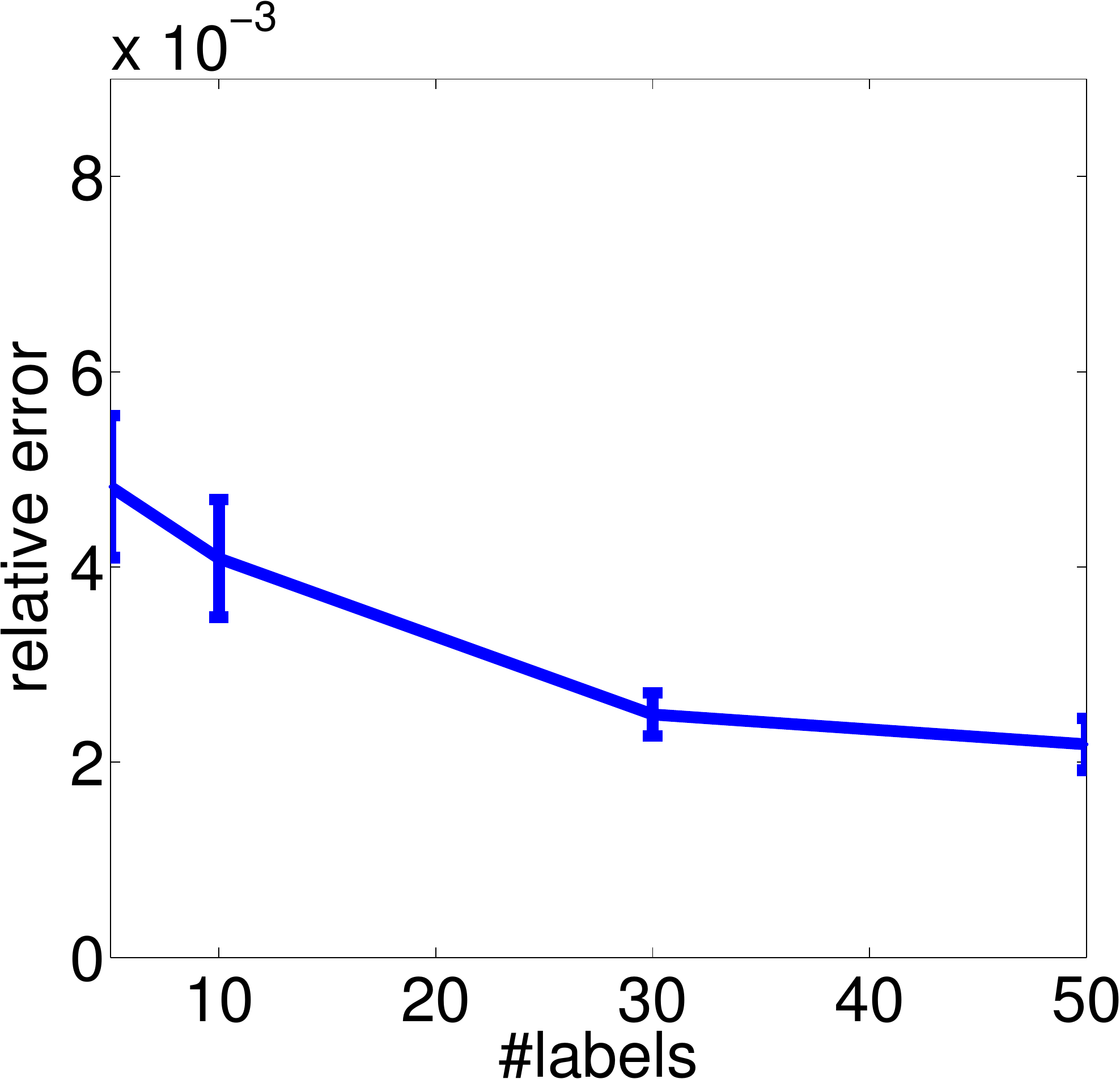} 
\end{subfigure} 
\\
& (a) Influence vs. time & (b) Influence vs. \#samples & (c) Error vs. \#labels
\end{tabular}

 \caption{Influence estimation for core-periphery, random, and hierarchical networks with 1,024 nodes and 2,048 edges.
Column (a) shows estimated influence by NS (near ground truth), and \continmax for increasing time window $T$; Column (b) shows \continmax'{}s relative error against
 number of samples with 5 random labels and $T = 10$; Column (c) reports \continmax'{}s relative error against the number of random labels with 10,000
 random samples and $T = 10$.\label{accuracy_wbl}} 
\end{figure}

\subsection{Experiments on Synthetic Data}
We generate three types of Kronecker
networks~\citep{LesChaKleFaletal10} which are synthetic networks generated by a recursive Kronecker product of a base 2-by-2 parameter matrix with itself to generate self-similar graphs. By tuning the base parameter matrix, we are able to generate the Kronecker networks which can mimic different structural properties of many real networks. In the following, we consider networks of three different types of structures: (\emph{i}) core-periphery networks (parameter matrix: [0.9 0.5; 0.5 0.3]), which mimic the information diffusion traces in real-world networks~\citep{GomBalSch11},
(\emph{ii}) random networks ([0.5 0.5; 0.5 0.5]), typically used in physics and graph theory~\citep{EasKle10}
and (\emph{iii}) hierarchical networks ([0.9 0.1; 0.1 0.9])~\citep{ClaMooNew08}. Next, we assign a pairwise transmission function for every directed edge in each type of network and set its parameters at random. 
In our experiments, we use the Weibull distribution from~\citep{AalBorGje08},
\begin{align}
f(t;\alpha, \beta)=\frac{\beta}{\alpha}\rbr{\frac{t}{\alpha}}^{\beta - 1}e^{-(t/\alpha)^{\beta}}, t\geq 0,
\end{align}
where $\alpha>0$ is a scale parameter and $\beta>0$ is a shape parameter.  The Weibull distribution (Wbl) has often been used to model lifetime events in survival analysis, providing more flexibility than an
exponential distribution.
We choose $\alpha$ and $\beta$ from 0 to 10 uniformly at random for each edge in order to have heterogeneous temporal dynamics. Finally, for each type of Kronecker network, we  generate 10 sample networks, each of which has different $\alpha$ and $\beta$ chosen for every edge.

\subsubsection{Influence Estimation}
To the best of our knowledge, there is no analytical solution to the influence estimation given Weibull transmission function. Therefore, we compare \continmax with the Naive Sampling (NS) approach by considering the highest degree node in a network as the source, and draw 1,000,000 samples for NS to obtain near ground truth. In Figure~\ref{accuracy_wbl}, Column (a) compares \continmax with the ground truth provided by NS at different time window $T$, from $0.1$ to $10$ in networks of different structures. For \continmax, we generate up to 10,000 random samples (or sets of random waiting times), and 5 random labels in the inner loop. In all three networks, estimation provided by \continmax fits the ground truth accurately, and the relative error decreases quickly as we increase the number of samples and labels (Column (b) and~Column (c)). For 10,000 random samples with 5 random labels, the relative error is smaller than 0.01.
\begin{figure}[t]
\renewcommand{\tabcolsep}{1pt}
\begin{tabular}{cccc}
\rotatebox{90}{\bf\small~~~~~{Influence by Size}}~
&
\begin{subfigure}[t]{0.31\textwidth}
  	\includegraphics[width=\textwidth]{./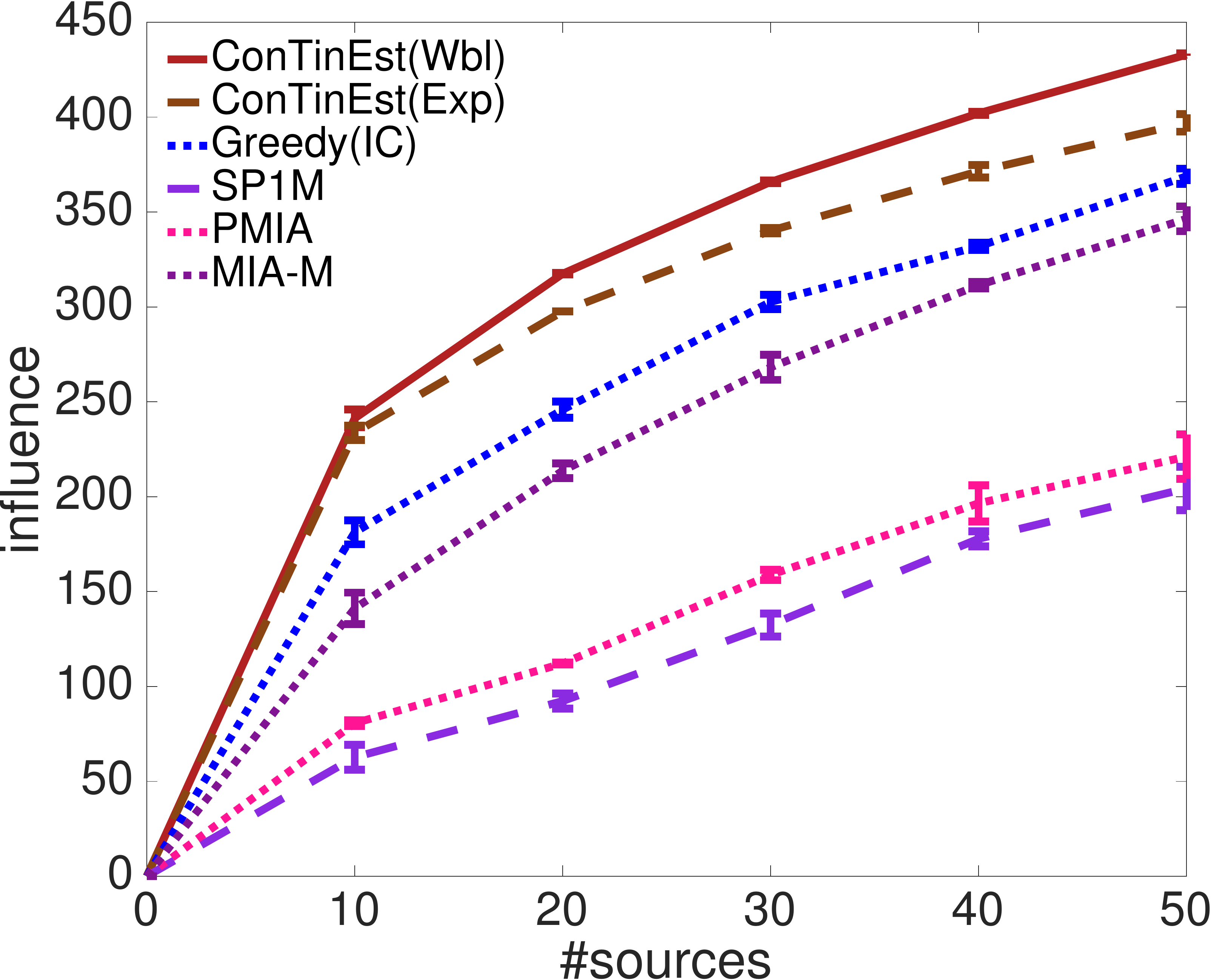} 
\end{subfigure} 
&
\begin{subfigure}[t]{0.31\textwidth}
  	\includegraphics[width=\textwidth]{./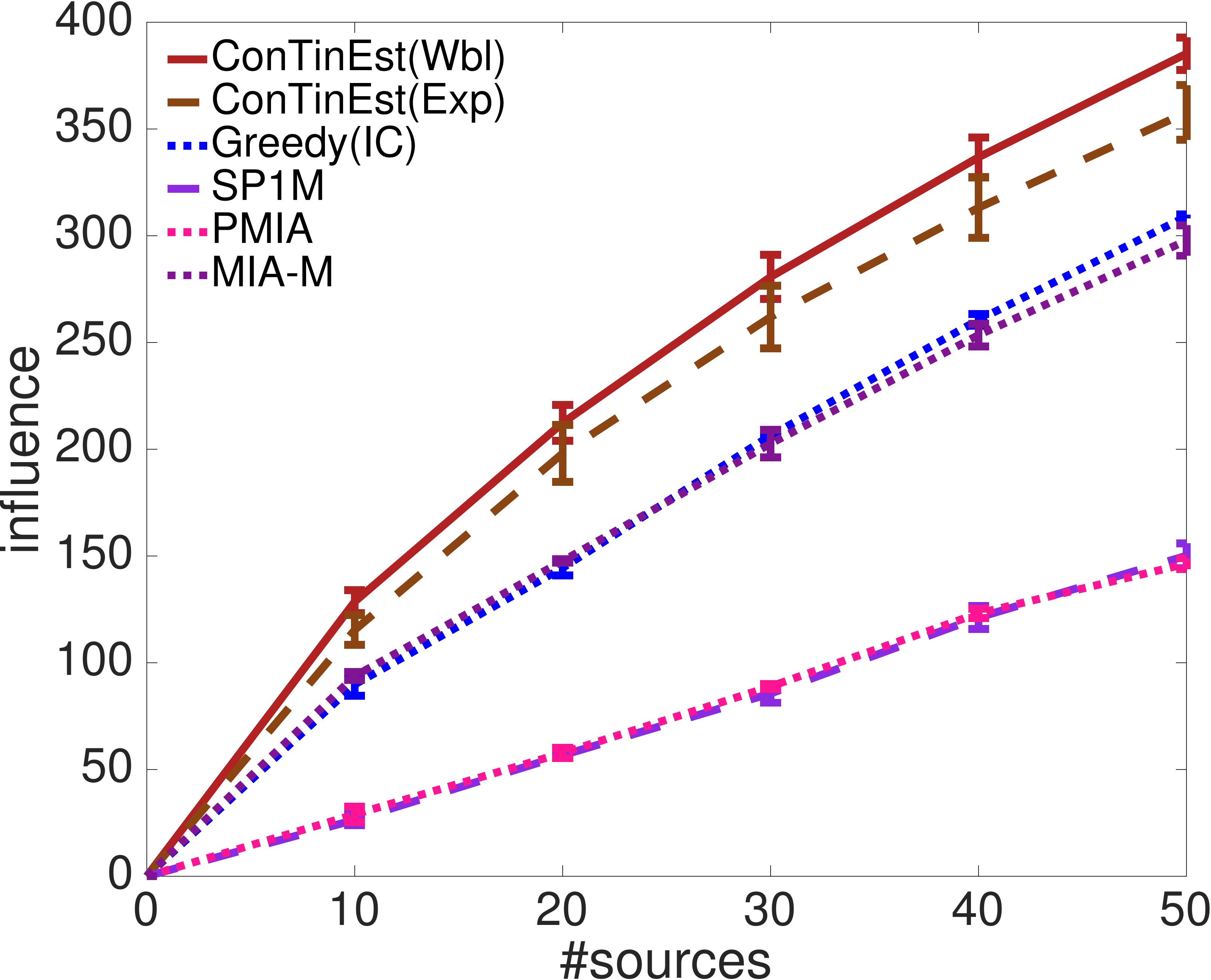} 
\end{subfigure} 	
& 
\begin{subfigure}[t]{0.31\textwidth}
  	\includegraphics[width=\textwidth]{./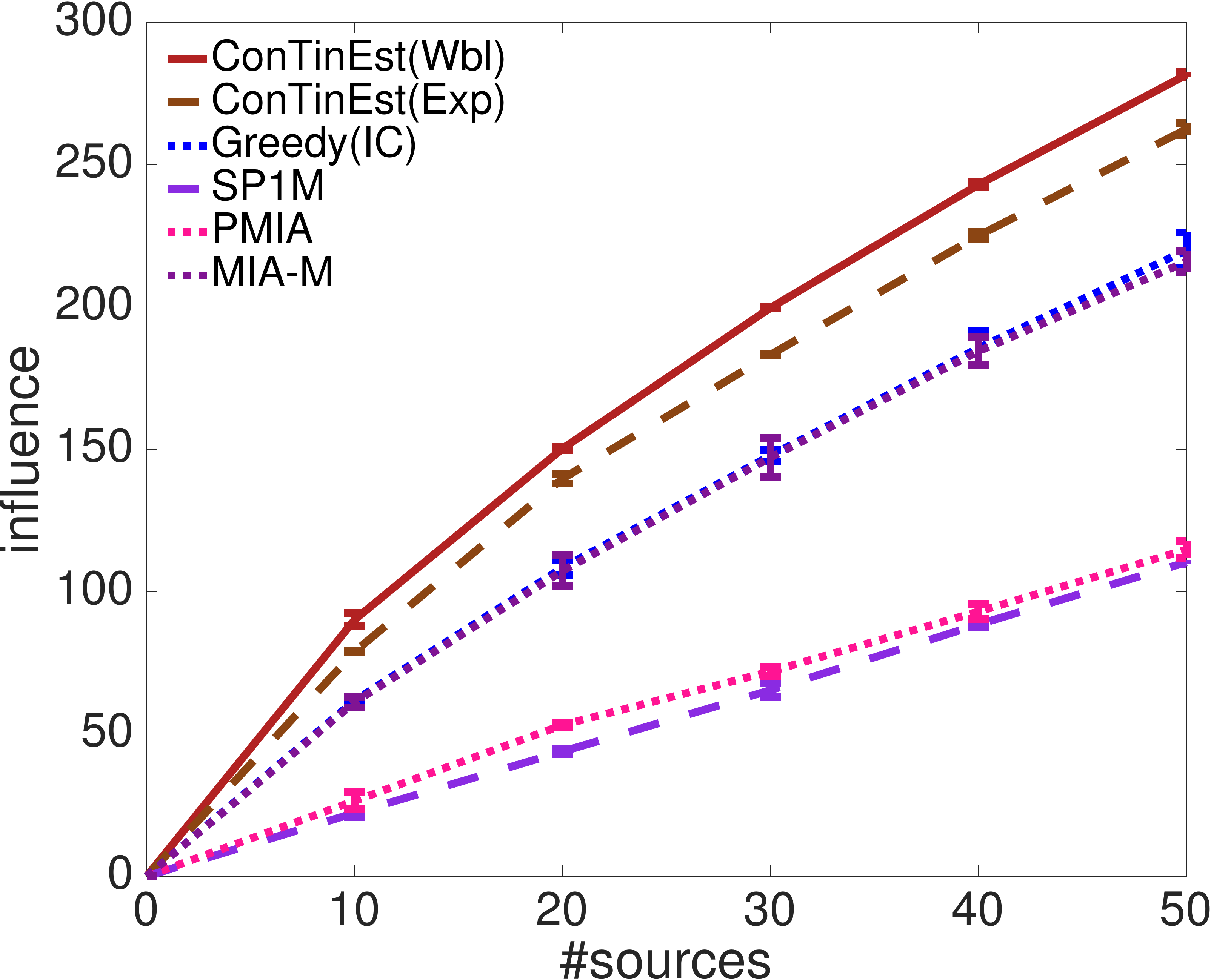} 
\end{subfigure} 
\\
\rotatebox{90}{\bf\small~~~{Influence by Time}}~
&
\begin{subfigure}[t]{0.31\textwidth}
  	\includegraphics[width=\textwidth]{./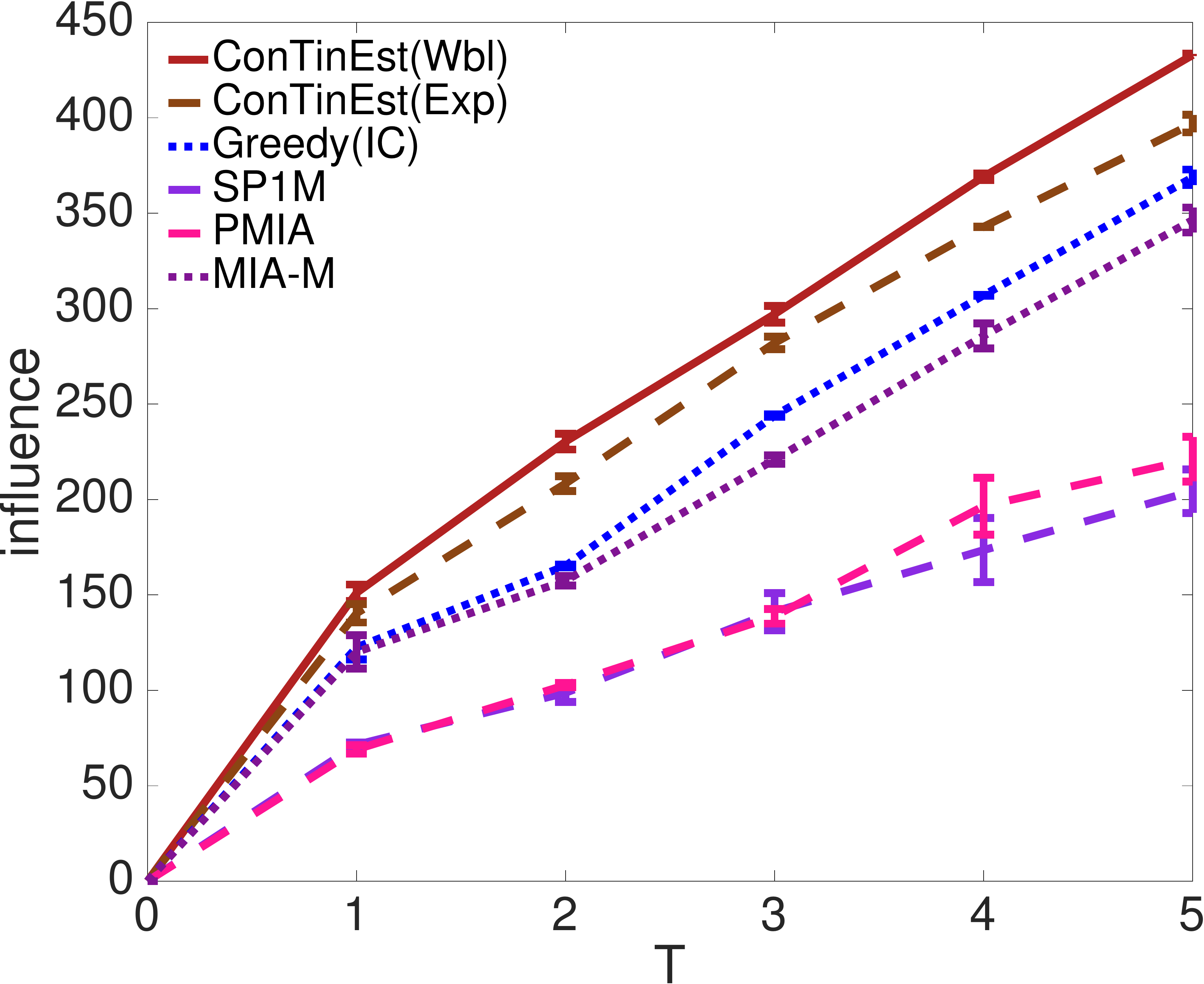} 
\end{subfigure} 
&
\begin{subfigure}[t]{0.31\textwidth}
  	\includegraphics[width=\textwidth]{./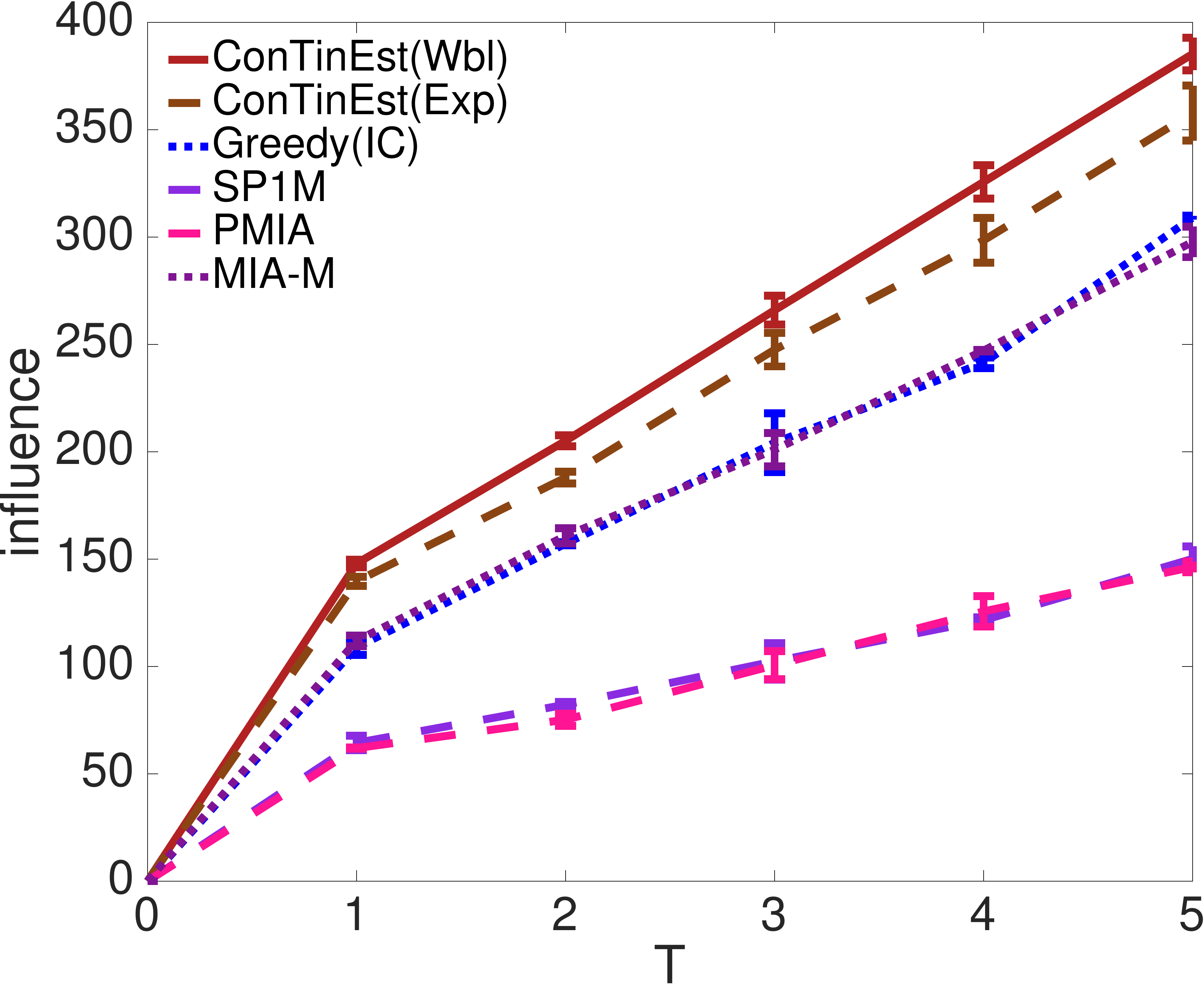} 
\end{subfigure} 
& 
\begin{subfigure}[t]{0.31\textwidth}
  	\includegraphics[width=\textwidth]{./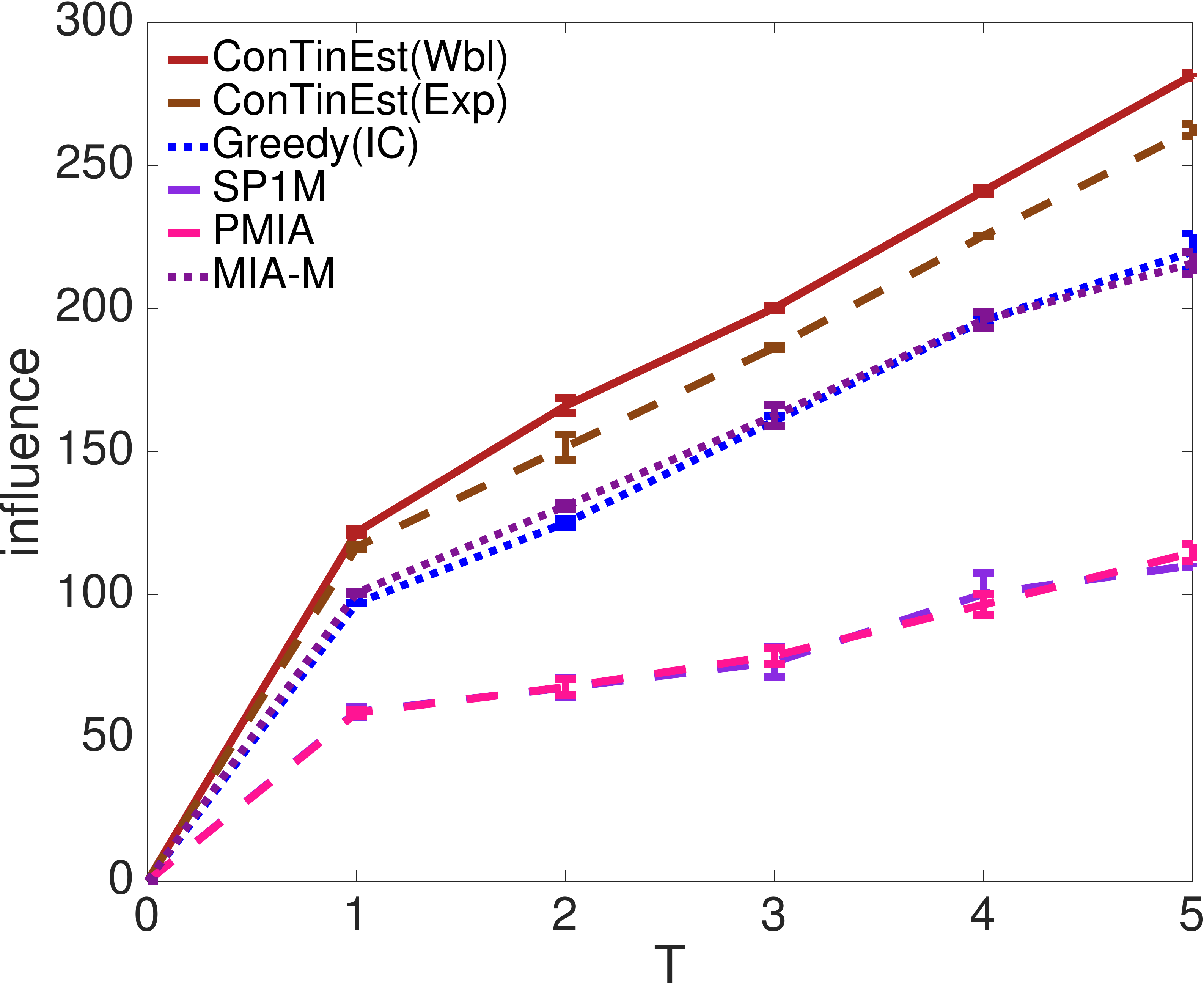} 
\end{subfigure} 
\\
& (a) Core-periphery & (b) Random & (c) Hierarchy
\end{tabular}
 \caption{Influence $\sigma(\Acal, T)$ achieved by varying number of sources $|\Acal|$ and observation window $T$ on the networks of different structures with 1,024 nodes, 2,048 edges and heterogeneous Weibull transmission functions. Top row: influence against \#sources by $T = 5$; Bottom row: influence against the time window $T$ using 50 sources. \label{performance-synthetic}} 
\end{figure}

\subsubsection{Influence Maximization with Uniform Cost}

\begin{figure}
 \centering
 \renewcommand{\tabcolsep}{1pt}
 \begin{tabular}{ccc}
\includegraphics[width=0.33\textwidth]{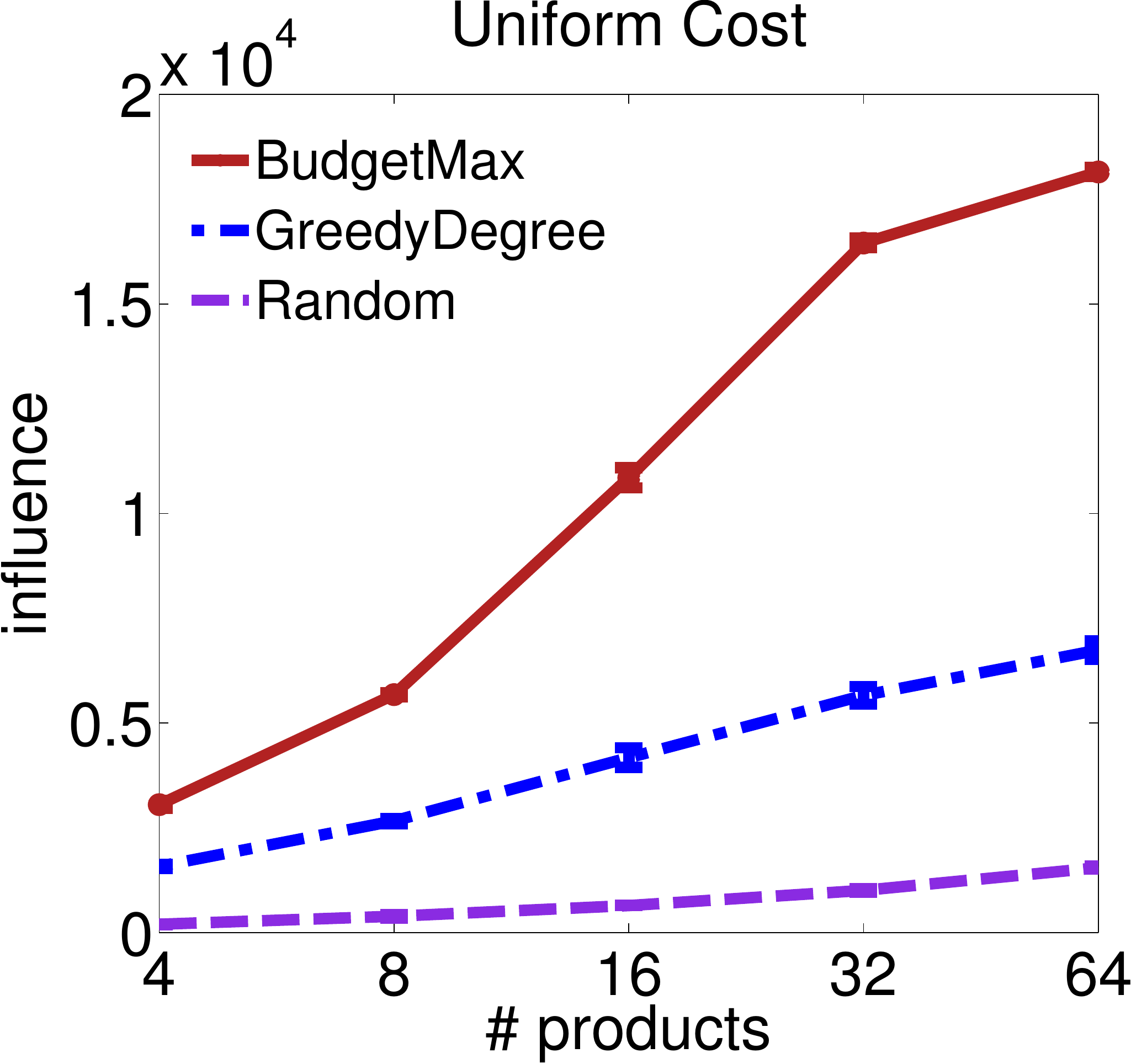} &
\includegraphics[width=0.33\textwidth]{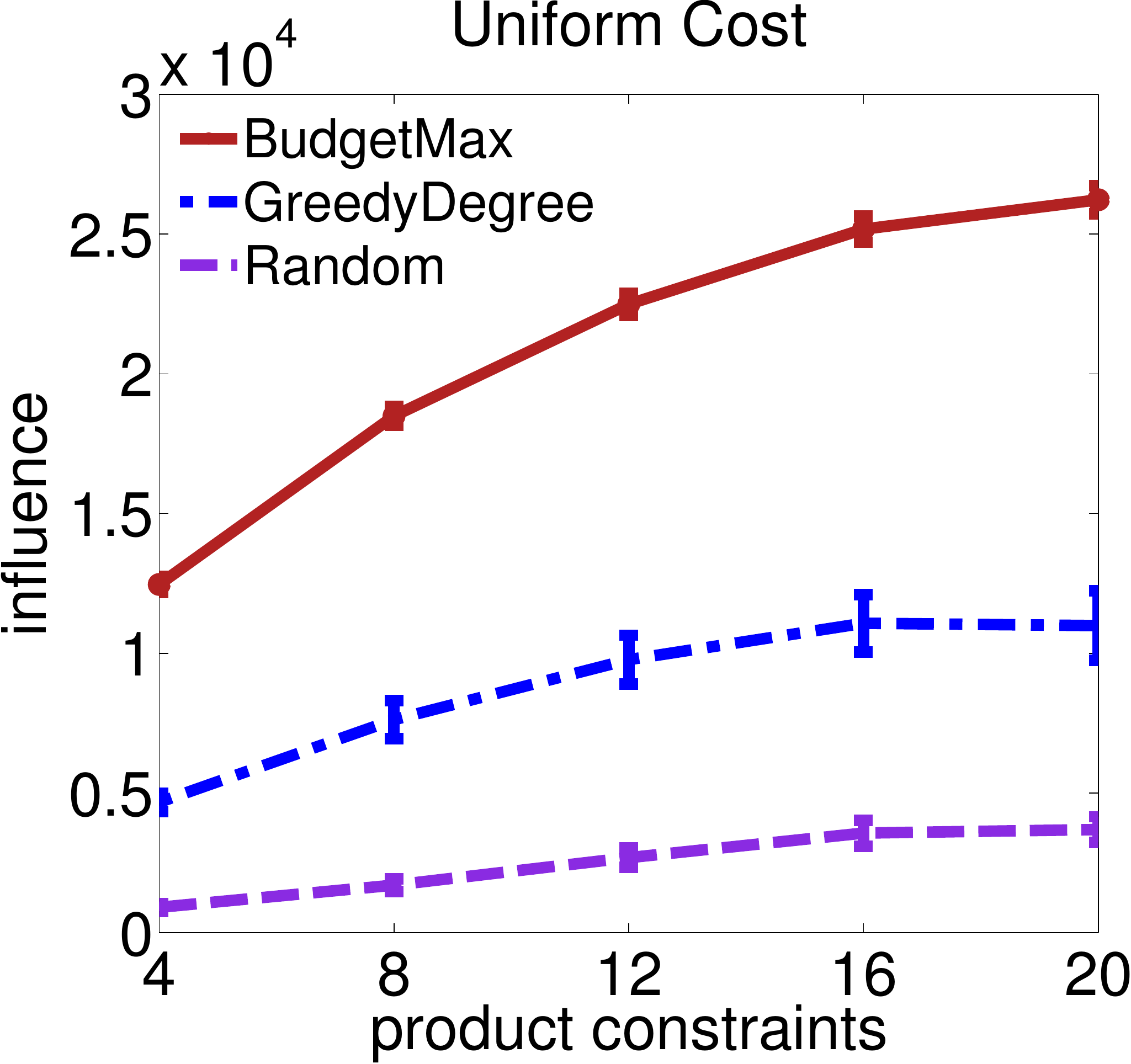} &
\includegraphics[width=0.33\textwidth]{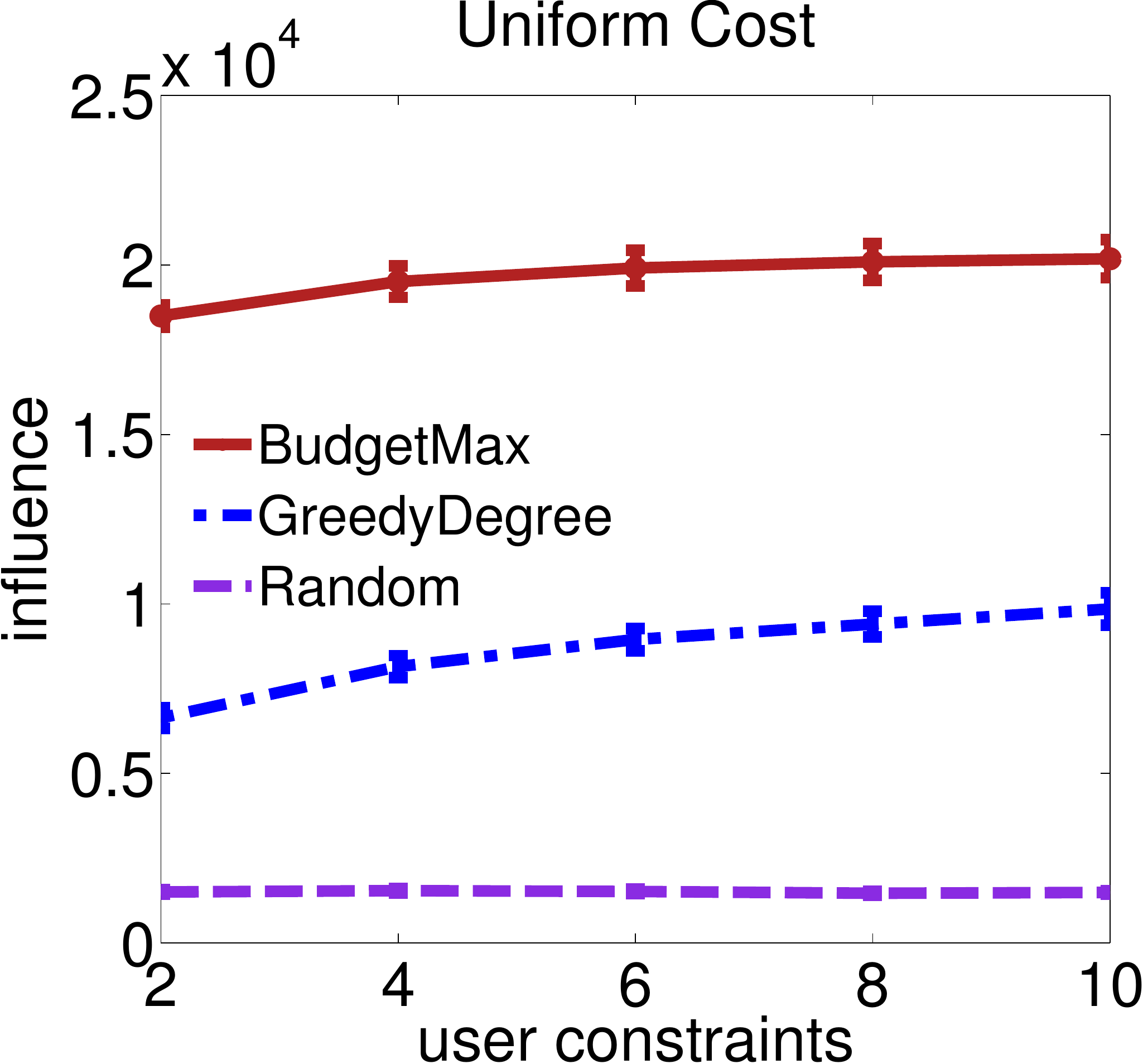} \\
(a) By products & (b) By product constraints& (c) By user constraints
\end{tabular}
 \caption{\label{inf-syn} Over the 64 product-specific diffusion networks, each of which has 1,048,576 nodes,  the estimated influence (a) for increasing the number of products by fixing the product-constraint at 8 and user-constraint at 2; (b) for increasing product-constraint by user-constraint at 2; and (c) for increasing user-constraint by fixing product-constraint at 8. For all experiments, we have $T=5$ time window.}
\end{figure}

In this section, we first evaluate the effectiveness of \continmax to the classic influence maximization problem where we have only one product to assign with the simple cardinality constraint on the users. We compare to other influence maximization methods developed based on discrete-time diffusion models: traditional greedy by~\citep{KemKleTar03}, with discrete-time Linear Threshold Model (LT) and Independent Cascade Model (IC) diffusion models, and the heuristic methods SP1M, PMIA and MIA-M by~\citep{CheWanYan09, CheWanWan2010, CheLuZha2012}.
For \influmax,  since it only supports exponential pairwise transmission functions, we fit an exponential distribution per edge by \netrate~\citep{GomBalSch11}. Furthermore, \influmax is not scalable. When the average network density (defined as the average degree per node) of the synthetic networks is $\sim2.0$, the run time for \influmax is more than $24$ hours. In consequence, we present the results of \continmax using fitted exponential distributions (Exp).
For the discrete-time IC model, we learn the infection probability within time window $T$ using Netrapalli'{}s method~\citep{NetPraSanSuj12}. The learned pairwise infection probabilities are also served for \spm and \pmia, which approximately calculate the influence based on the IC model.
For the discrete-time LT model, we set the weight of each incoming edge to a node $u$ to the inverse of its in-degree, as in previous work~\citep{KemKleTar03}, and choose each node's threshold uniformly at random. 
The top row of Figure~\ref{performance-synthetic} compares the expected number of infected nodes against the source set size for different methods. \continmax outperforms the rest, and the competitive advantage becomes more dramatic the larger the source set grows.
The bottom row of Figure~\ref{performance-synthetic} shows the expected number of infected nodes against the time window for 50 selected sources. Again, \continmax~performs the best for all three types of networks.

Next, using \continmax as a subroutine for influence estimation, we evaluate the performance of \budgetmax with the uniform-cost constraints on the users.
%
%
In our experiments we consider up to 64 products, each of which diffuses over one of the above three different types of Kronecker networks with $\sim$ one million nodes. 
Further, we randomly select a subset of 512 nodes $\Vcal_S\subseteq\Vcal$ as our candidate target users, who will receive the given products, and evaluate the potential influence of an 
allocation over the underlying one-million-node networks.
For \budgetmax, we set the adaptive threshold $\delta$ to 0.01 and the cost per user and product to 1.
For \continmax, we use 2,048 samples with 5 random labels on each of the product-specific diffusion networks.
We repeat our experiments 10 times and report the average performance.
%

We compare \budgetmax with a nodes'{} degree-based heuristic, which we refer to as \gdegree, where the degree is treated as a natural measure of influence, and a baseline method, which assigns the products to the target nodes randomly.
We opt for the nodes'{} degree-based heuristic since, in practice, large-degree nodes, such as users with millions of followers in Twitter, are often the targeted users who will receive a considerable payment if he (she) agrees to post the adoption of some products (or ads) from merchants. 
\gdegree proceeds as follows. 
It first sorts the list of all pairs of products $i$ and nodes $j\in\Vcal_S$ in descending order of node-$j$'s degree in the diffusion network associated to product $i$. 
Then, starting from the beginning of the list, it considers each pair one by one: if the addition of the current pair to the existing solution does not violate the predefined matroid constraints, it is added to the solution, and otherwise, it is skipped. This process continues until the end of the list is reached. 
In other words, we greedily assign products to nodes with the largest degree.  
Due to the large size of the underlying diffusion networks, we do not apply other more expensive node centrality measures such as the clustering coefficient and betweenness.

Figure~\ref{inf-syn} summarizes the results. Panel (a) shows the achieved influence against number of products, fixing the budget per product to 8 and the budget
per user to 2. 
As the number of products increases, on the one hand, more and more nodes become assigned, so the total influence will increase. Yet, on the other hand, the 
competition among products for a few existing \emph{influential} nodes also increases. 
\gdegree achieves a modest performance, since high degree nodes may have many overlapping children. In contrast, \budgetmax, by taking the submodularity of 
the problem, the network structure and the diffusion dynamics of the edges into consideration, achieves a superior performance, especially as the number of product
(\ie, the competition) increases.
Panel (b) shows the achieved influence against the budget per product, considering 64 products and fixing the budget per user to 2. 
We find that, as the budget per product increases, the performance of \gdegree tends to flatten and the competitive advantage of \budgetmax becomes more dramatic.
Finally, Panel (c) shows the achieved influence against the budget per user, considering 64 products and fixing the budget per product to 8. 
We find that, as the budget per user increases, the influence only increases slowly. This is due to the fixed budget per product, which prevents additional new nodes to be assigned. 
This meets our intuition: by making a fixed number of people watching more ads per day, we can hardly boost the popularity of the product. 
Additionally, even though the same node can be assigned to more products, there is hardly ever a node that is the \emph{perfect} source from which all products can efficiently spread.

\subsubsection{Influence Maximization with Non-Uniform Cost}
In this section, we evaluate the performance of \budgetmax under non-uniform cost constraints, using again \continmax as a subroutine for influence estimation.
Our designing of user-cost aim to mimic a real scenario, where advertisers pay much more money to celebrities with millions of social network followers than to 
normal citizens. 
To do so, we let $c_i \propto d_i^{1/n}$ where $c_i$ is the cost, $d_i$ is the degree, and $n\geq 1$ controls the increasing speed of cost with respect to the degree. 
In our experiments, we use $n=3$ and normalize $c_i$ to be within $(0,1]$. 
Moreover, we set the product-budget to a base value from 1 to 10 and add a random adjustment drawn from a uniform distribution $U(0,1)$. 

We compare our method to two modified versions of the above mentioned nodes'{} degree-based heuristic \gdegree and to the same baseline method. 
In the first modified version of the heuristic, which we still refer to as \gdegree, takes both the degree and the corresponding cost into consideration. In particular, it sorts the list 
of all pairs of products $i$ and nodes $j\in\Vcal_S$ in descending order of degree-cost ratio $d_j / c_j$ in the diffusion network associated to product $i$, instead of simply 
the node-$j$'s degree, and then proceeds similarly as before. 
In the second modified version of the heuristic, which we refer as GreedyLocalDegree, we use the same degree-cost ratio but allow the target users to be partitioned into distinct groups (or communities) and pick the most cost-effective pairs within each group locally instead.
%
\begin{figure}
 \centering
 \renewcommand{\tabcolsep}{1pt}
 \begin{tabular}{ccc}
\includegraphics[width=0.34\textwidth]{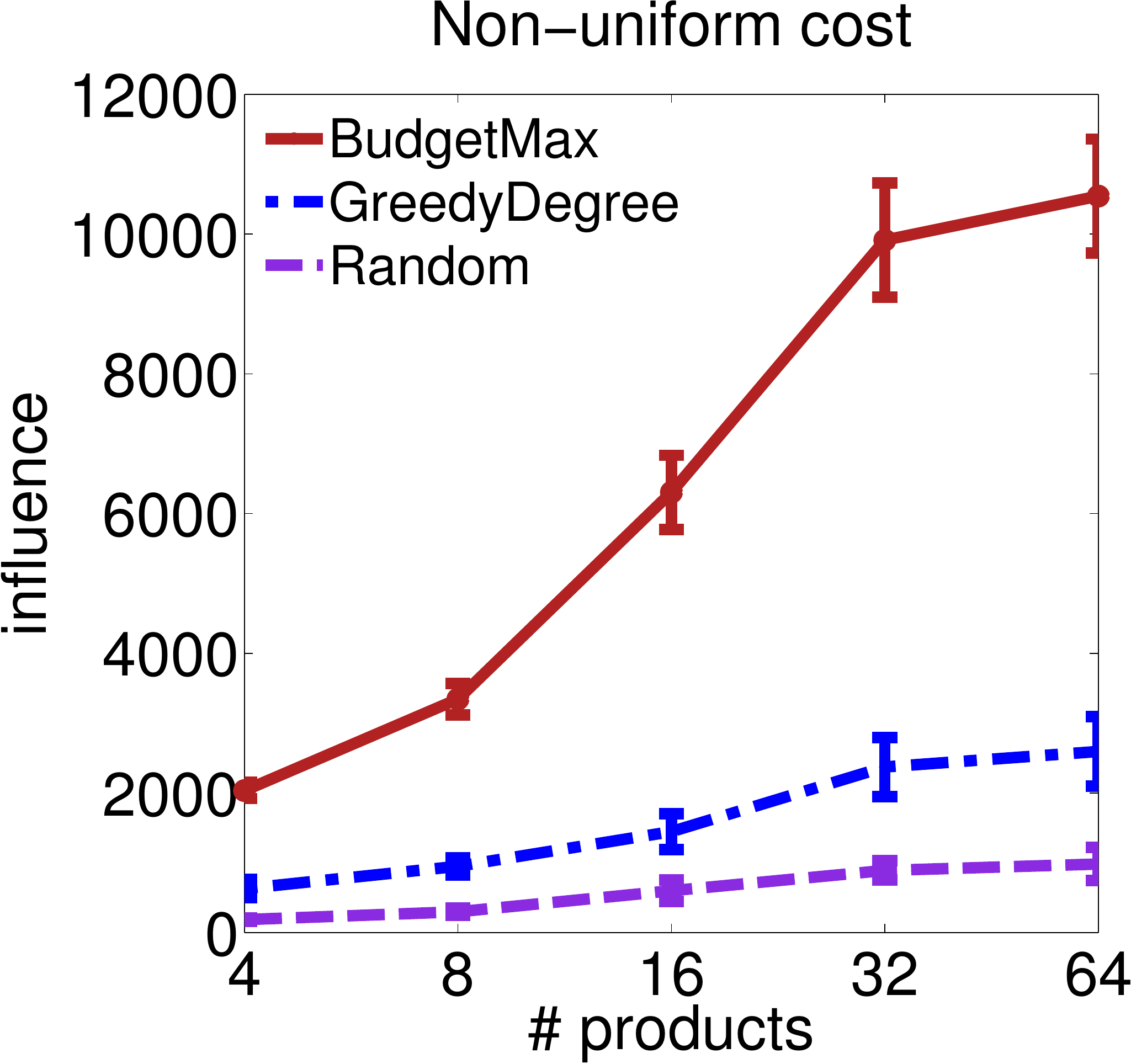} &
\includegraphics[width=0.33\textwidth]{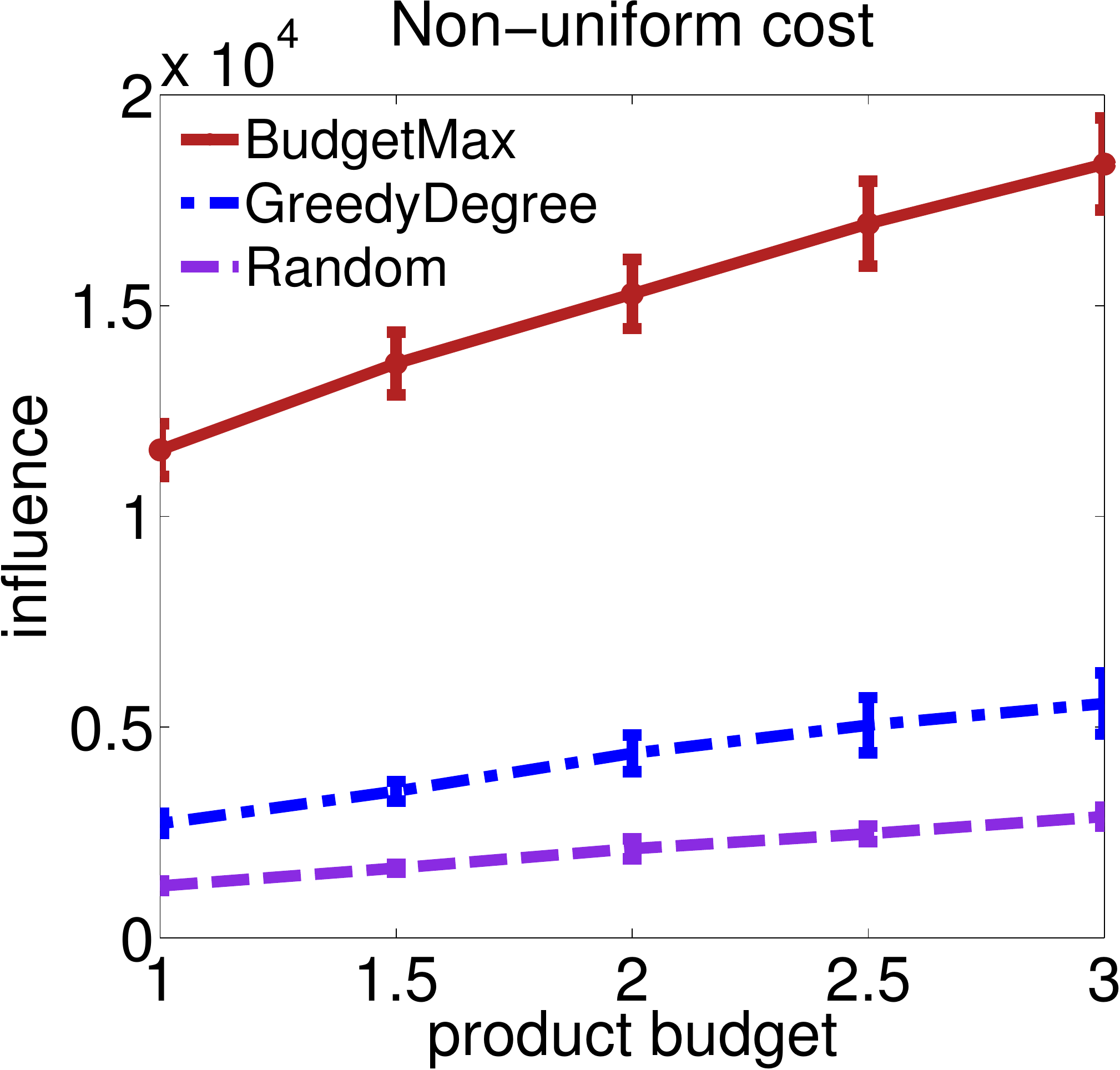} &
\includegraphics[width=0.33\textwidth]{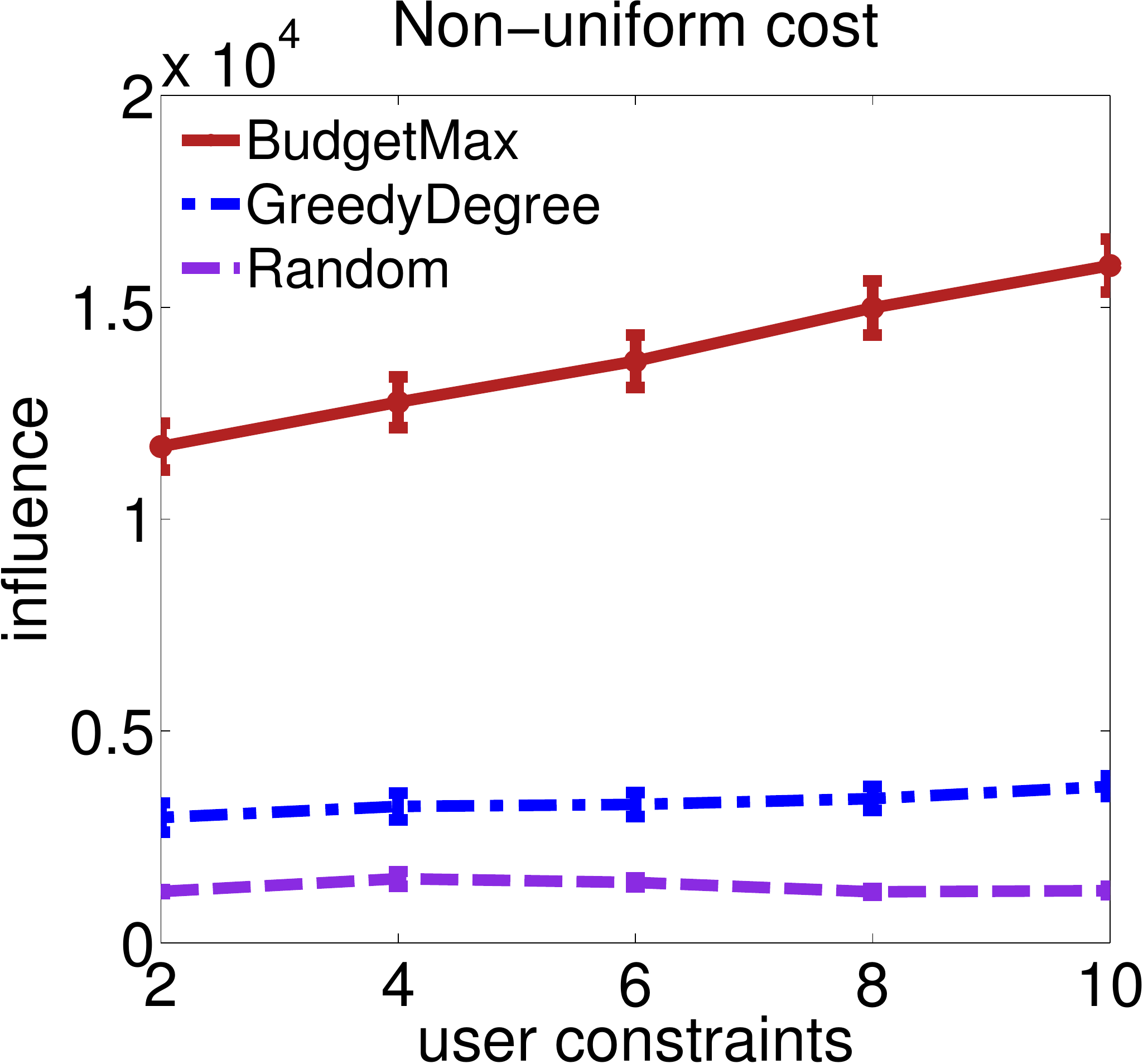} \\
(a) By products & (b) By product budgets& (c) By user constraints
\end{tabular}\\[8mm]
 \renewcommand{\tabcolsep}{4pt}
\begin{tabular}{cc}
\includegraphics[width=0.33\textwidth]{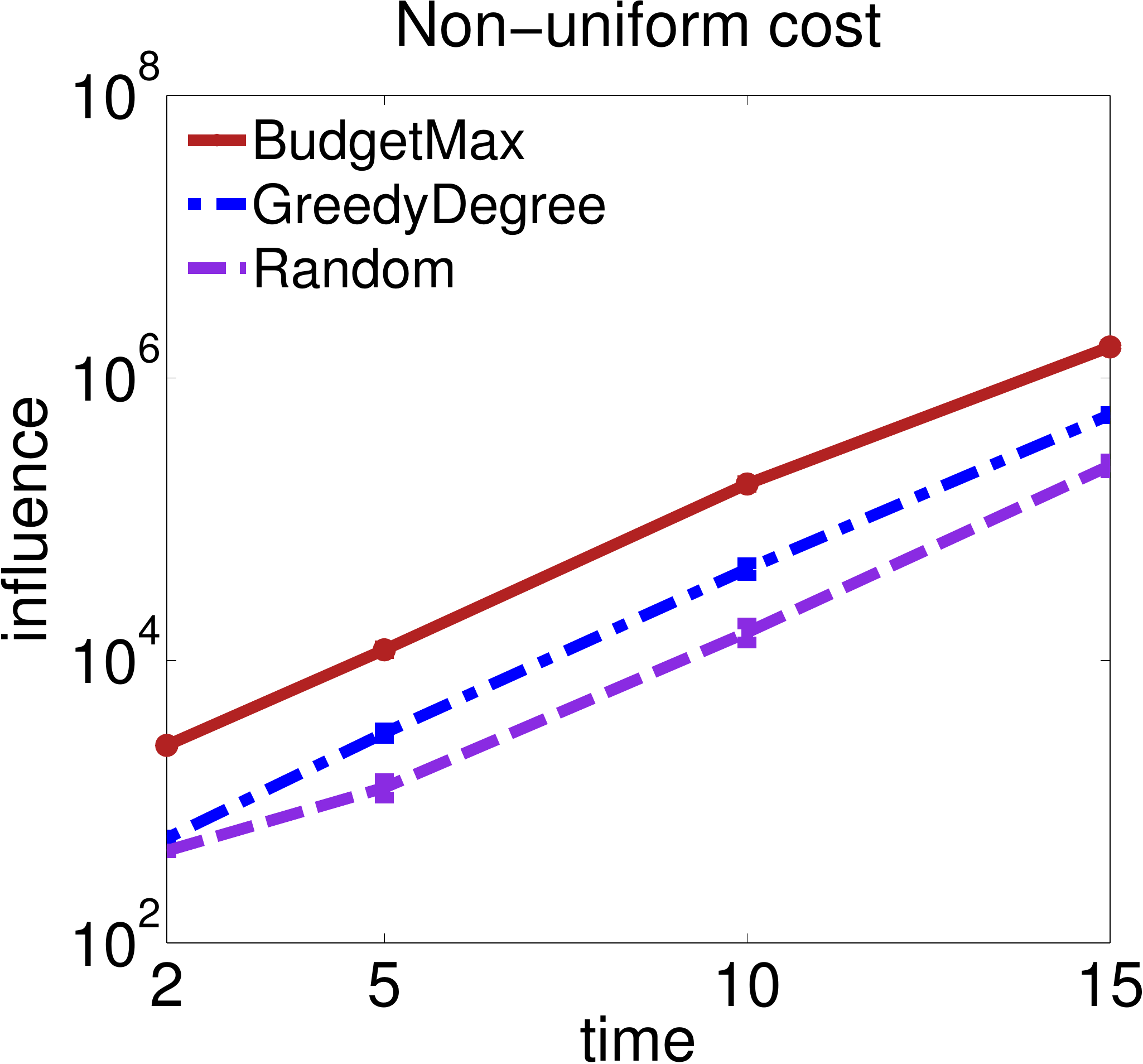} &
\includegraphics[width=0.33\textwidth]{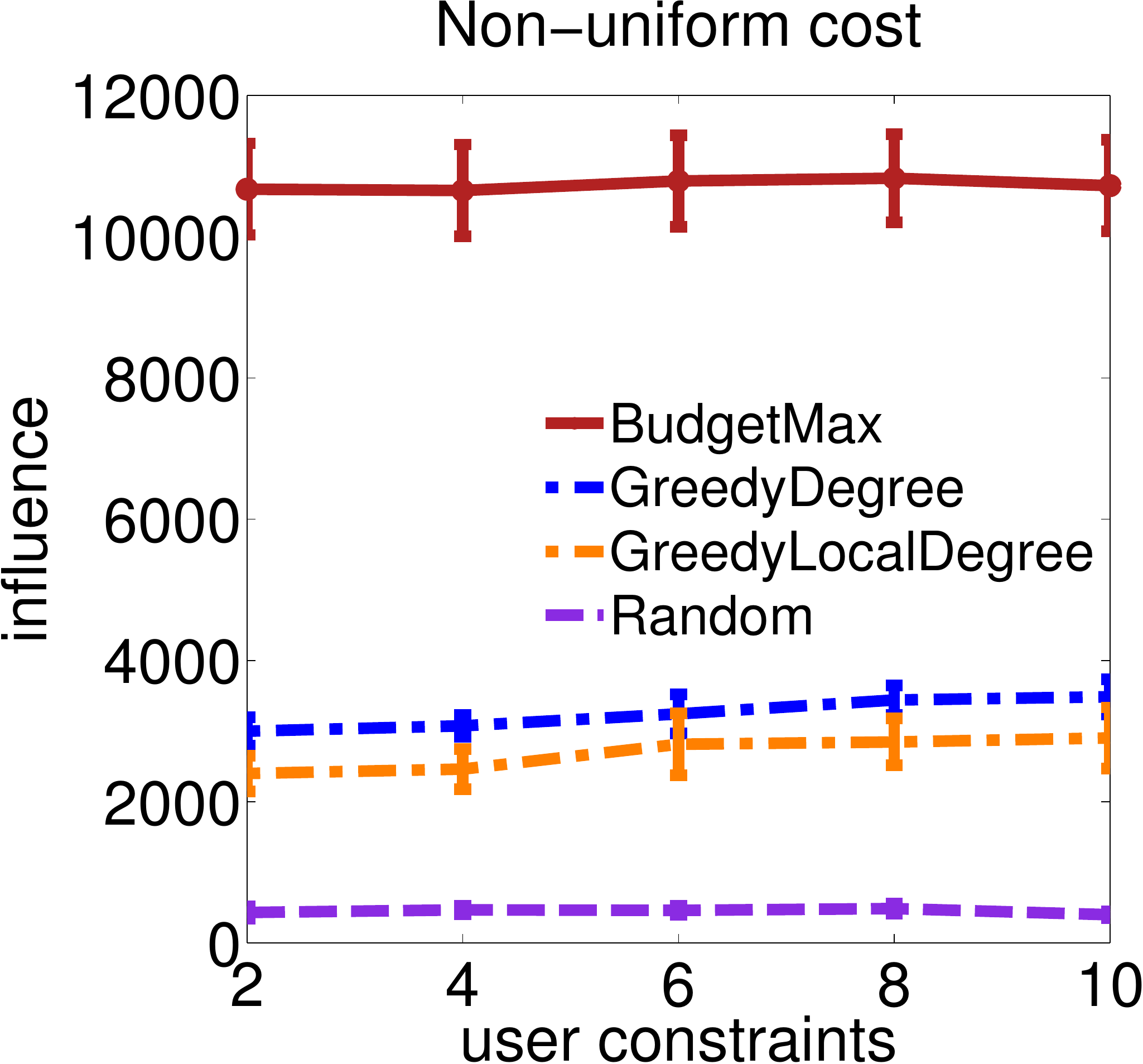} \\
(d) By time & (e) By group limits
\end{tabular}
 \caption{\label{inf-budget-syn} Over the 64 product-specific diffusion networks, each of which has a total 1,048,576 nodes, the estimated influence (a) for increasing the number of products by fixing the product-budget at 1.0 and user-constraint at 2; (b) for increasing product-budget by fixing user-constraint at 2; (c) for increasing user-constraint by fixing product-budget at 1.0; (d) for different time window T; and (e) for increasing user-constraint with group-limit 16 by fixing product-budget at 1.0.}
\end{figure}
Figure~\ref{inf-budget-syn} compares the performance of our method with the competing methods against four factors: (a) the number of products, (b) the budget per product, 
(c) the budget per user and (d) the time window $T$, while fixing the other factors.
In all cases, \budgetmax significantly outperforms the other methods, and the achieved influence increases monotonically with respect to the factor value, as one may have expected. 
In addition, in Figure~\ref{inf-budget-syn}(e), we study the effect of the Laminar matroid combined with group knapsack constraints, which is the most general type of constraint we handle in this paper (refer to Section~\ref{sec:inf}). The selected target users are further partitioned into $K$ groups randomly, each of which has, $Q_i,i = 1\dotso K$, limit which constrains the maximum allocations allowed in each group. In practical scenarios, each group might correspond to a geographical community or organization. In our experiment, we divide the users into 8 equal-size groups and set $Q_i = 16,i = 1\dotso K$ to indicate that we want a balanced allocation in each group. 
Figure~\ref{inf-budget-syn}(e) shows the achieved influence against the budget per user for $K = 8$ equally-sized groups and $Q_i = 16, i=1 \dotso K$.
In contrast to Figure~\ref{inf-budget-syn}(b), the total estimated influence does not increase significantly with respect to the budget (\ie, number of slots) per user. 
This is due to the fixed budget per group, which prevents additional new nodes to be assigned, even though the number of available slots per user increases.

\subsubsection{Effects of Adaptive Thresholding}
\begin{figure}[t]
 \centering
 \renewcommand{\tabcolsep}{5pt}
 \begin{tabular}{cc}
\includegraphics[width=0.45\columnwidth]{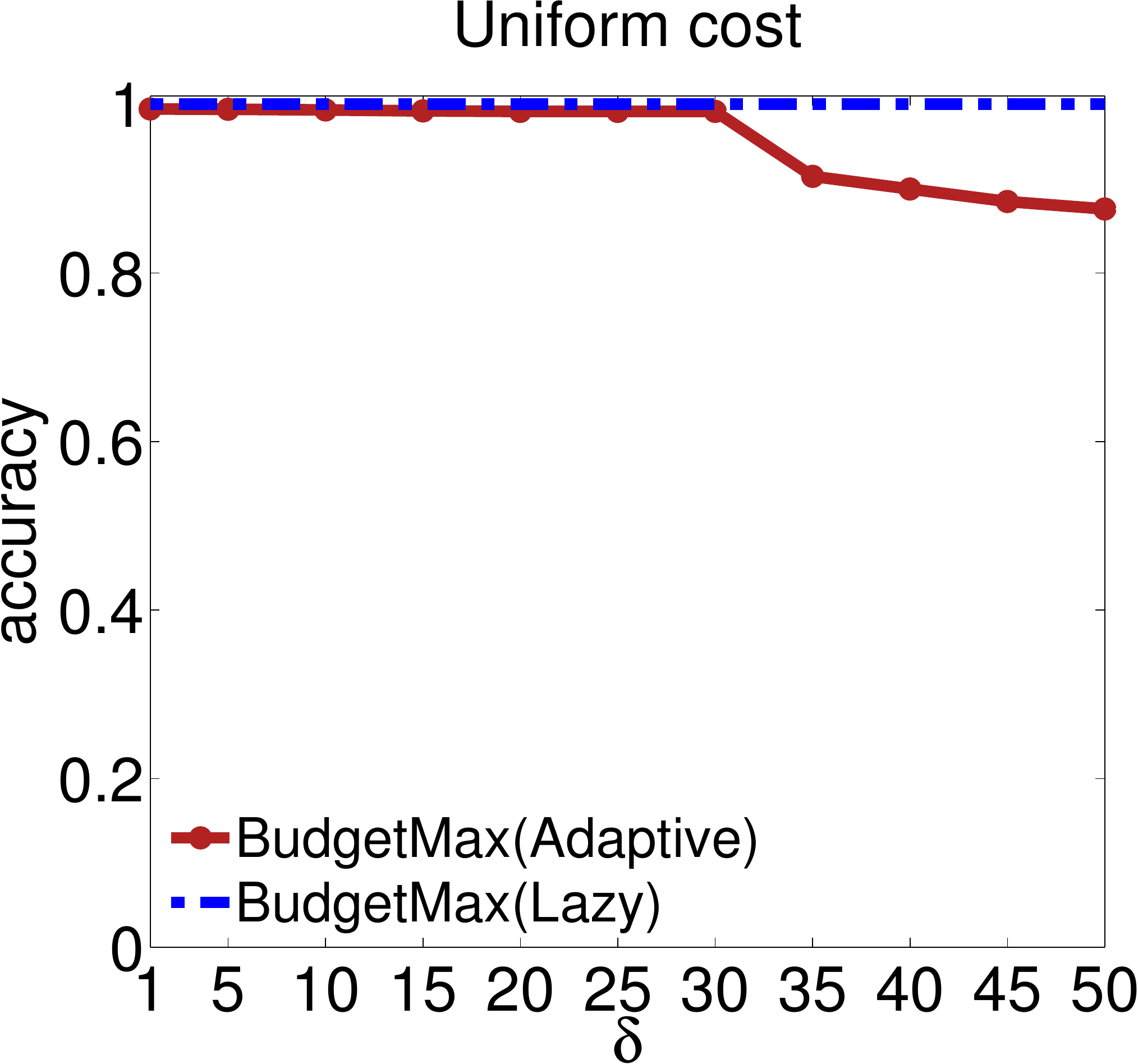}
& \includegraphics[width=0.45\columnwidth]{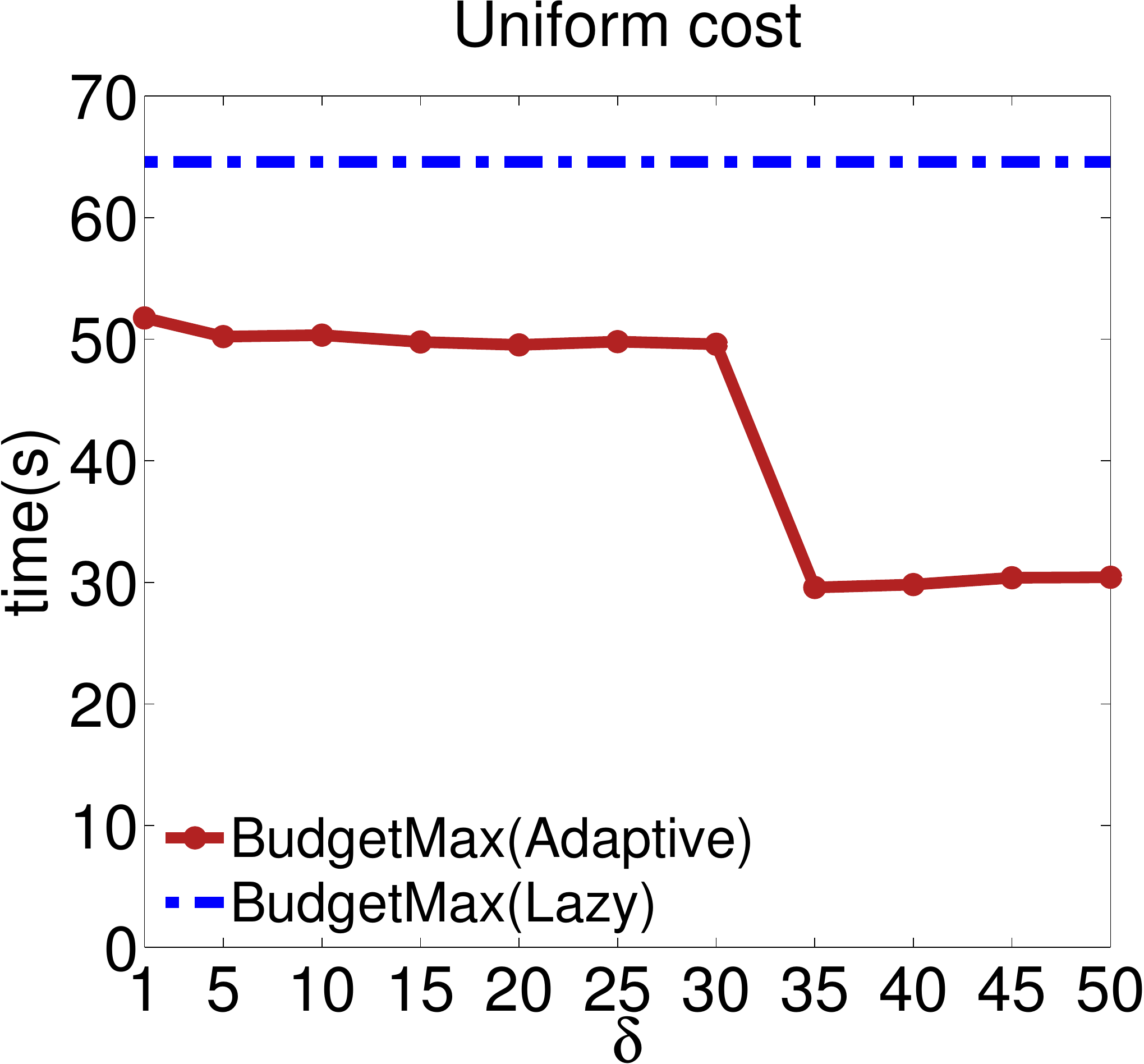} \\
(a) $\delta$ vs. accuracy & (b) $\delta$ vs. time
  \vspace{-1mm}
\end{tabular}
 \caption{\label{speed-syn}
The relative accuracy and the run-time for different threshold parameter $\delta$.}
\end{figure}
In Figure~\ref{speed-syn}, we investigate the impact that the threshold value $\delta$ has on the accuracy and runtime of our adaptive thresholding algorithm and compare it with the lazy evaluation method. Note that the performance and runtime of lazy evaluation do not change with respect to $\delta$ because it does not depend on it.
Panel (a) shows the achieved influence against the threshold $\delta$. As expected, the larger the $\delta$ value, the lower the accuracy. However, our method is relatively robust to the 
particular choice of $\delta$ since its performance is always over a 90-percent relative accuracy even for large $\delta$. 
Panel (b) shows the runtime against the threshold $\delta$. In this case, the larger the $\delta$ value, the lower the runtime. 
In other words, Figure~\ref{speed-syn} verifies the intuition that $\delta$ is able to trade off the solution quality of the allocation with the runtime time. 

\subsubsection{Scalability}

\begin{figure}[t]
  \centering
  \renewcommand{\tabcolsep}{1pt}
  {\small
  \begin{tabular}{ccc}
    \includegraphics[width=0.33\textwidth]{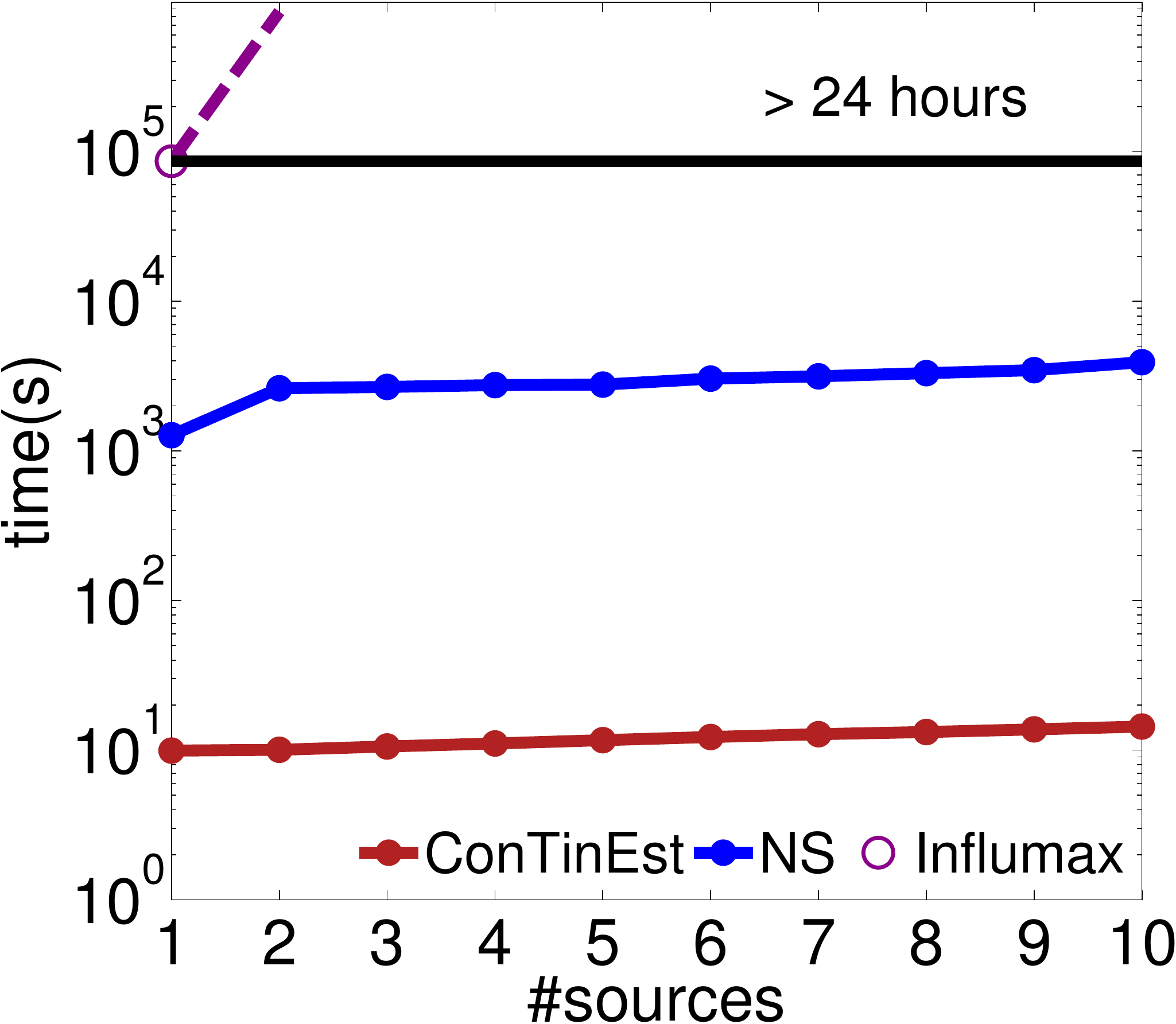} &
    \includegraphics[width=0.33\textwidth]{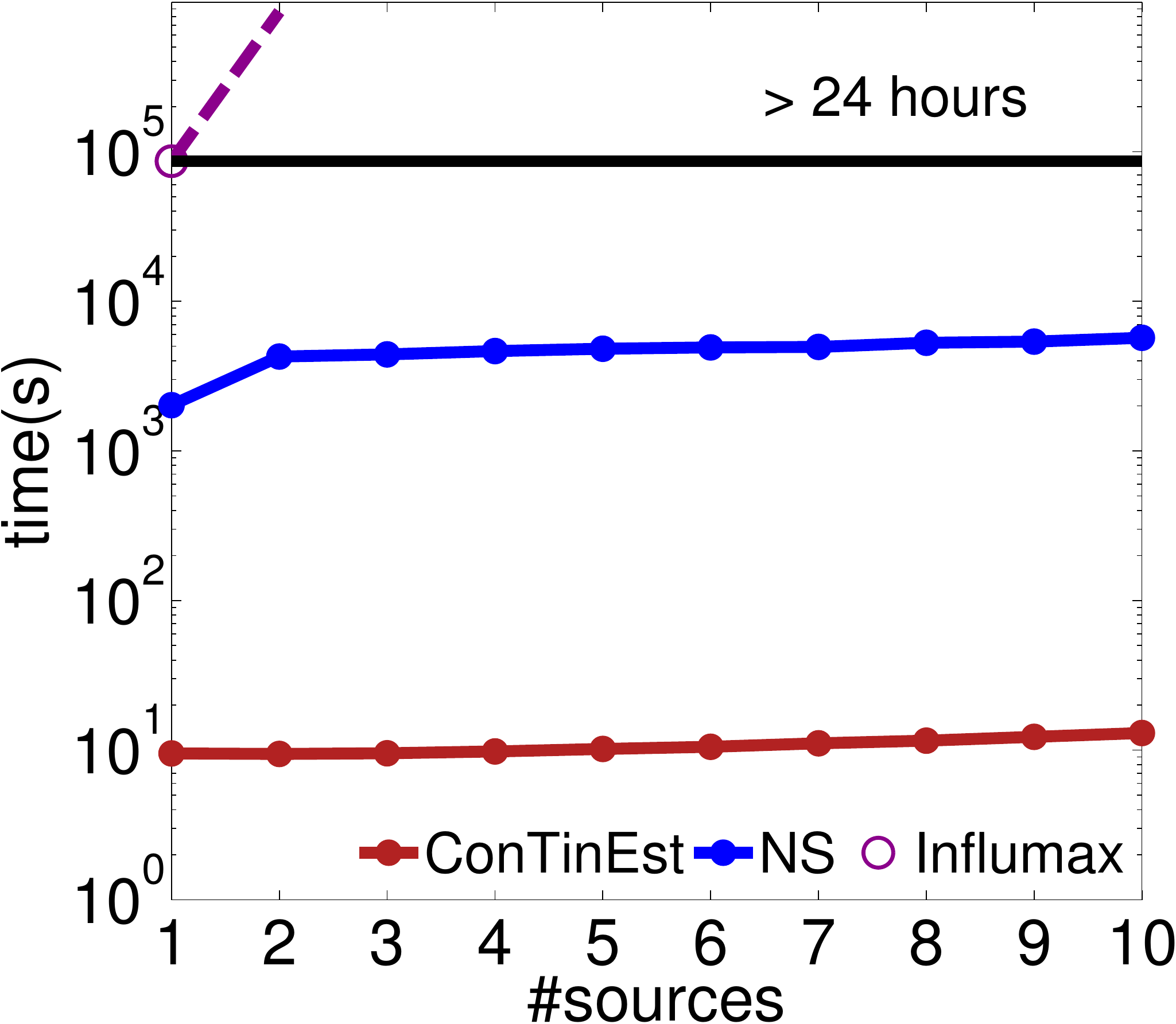} &
    \includegraphics[width=0.33\textwidth]{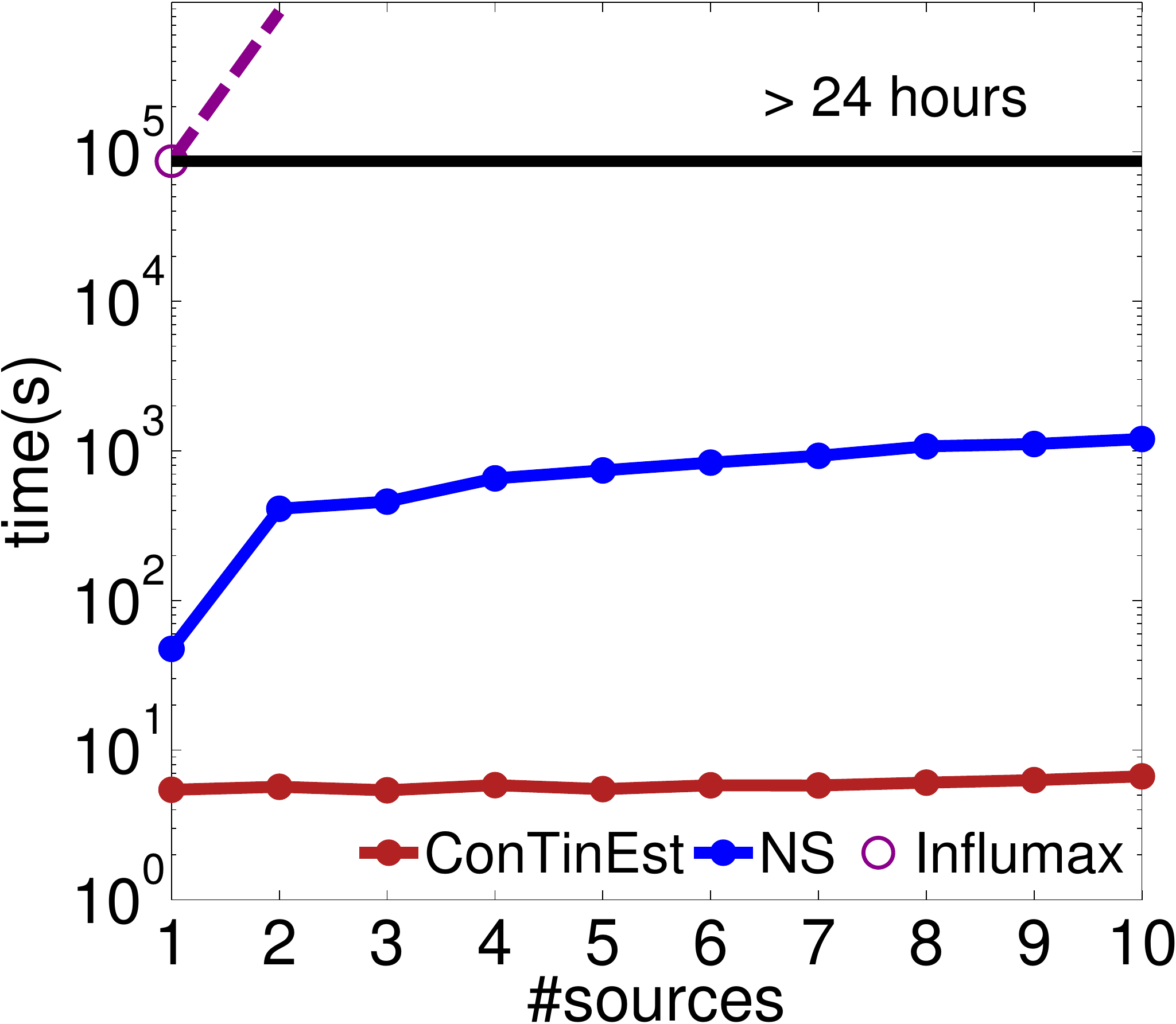} \\
    (a) Core-Periphery & (b) Random &  (c)  Hierarchy
  \end{tabular}
  }
  \caption{\label{source_speed} Runtime of selecting increasing number of sources on Kronecker networks of 128 nodes and 320 edges with $T = 10$.}
\end{figure}

\begin{figure}
  \centering
  \renewcommand{\tabcolsep}{1pt}
  {\small
  \begin{tabular}{ccc}
    \includegraphics[width=0.33\textwidth]{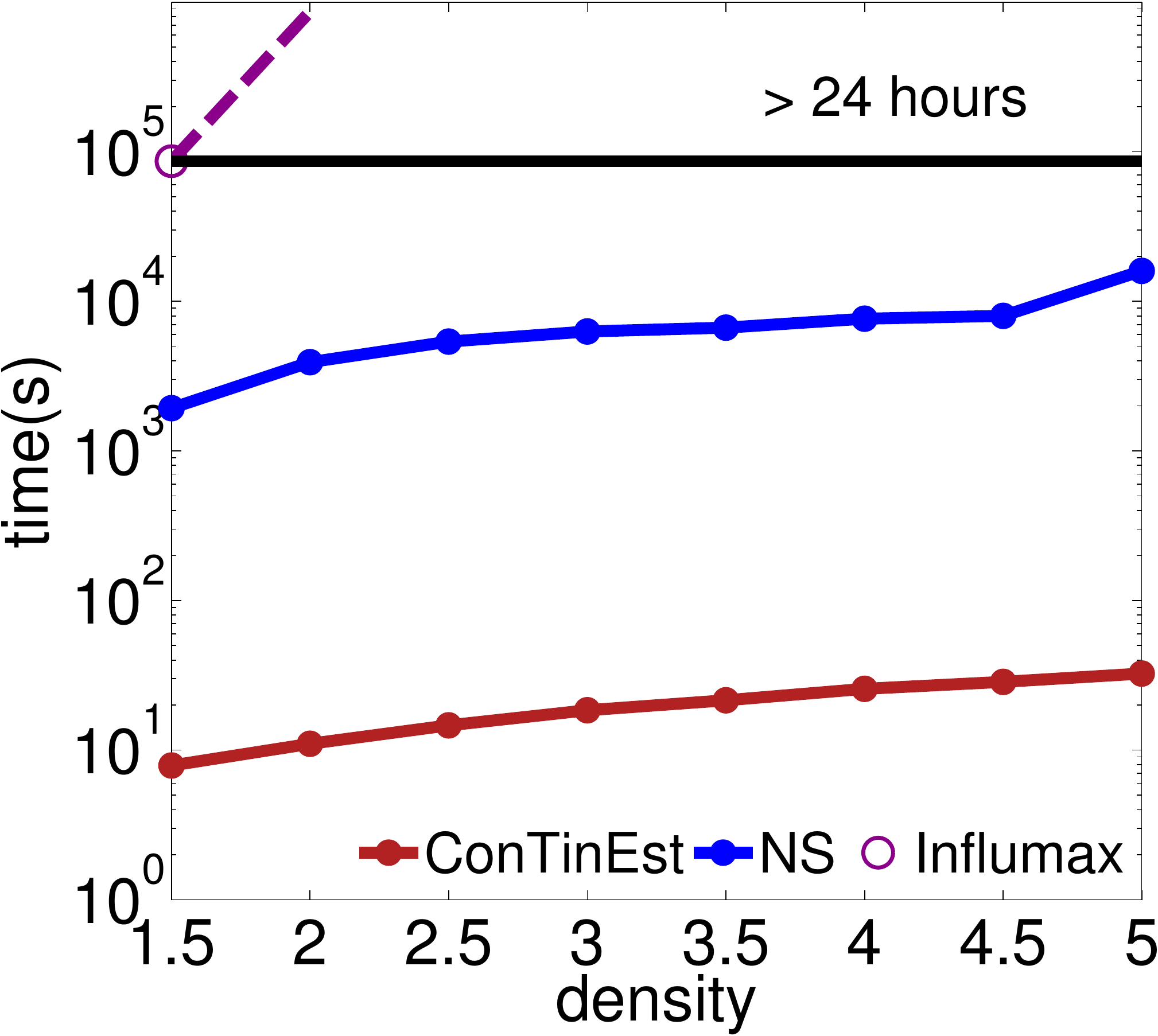} &
    \includegraphics[width=0.33\textwidth]{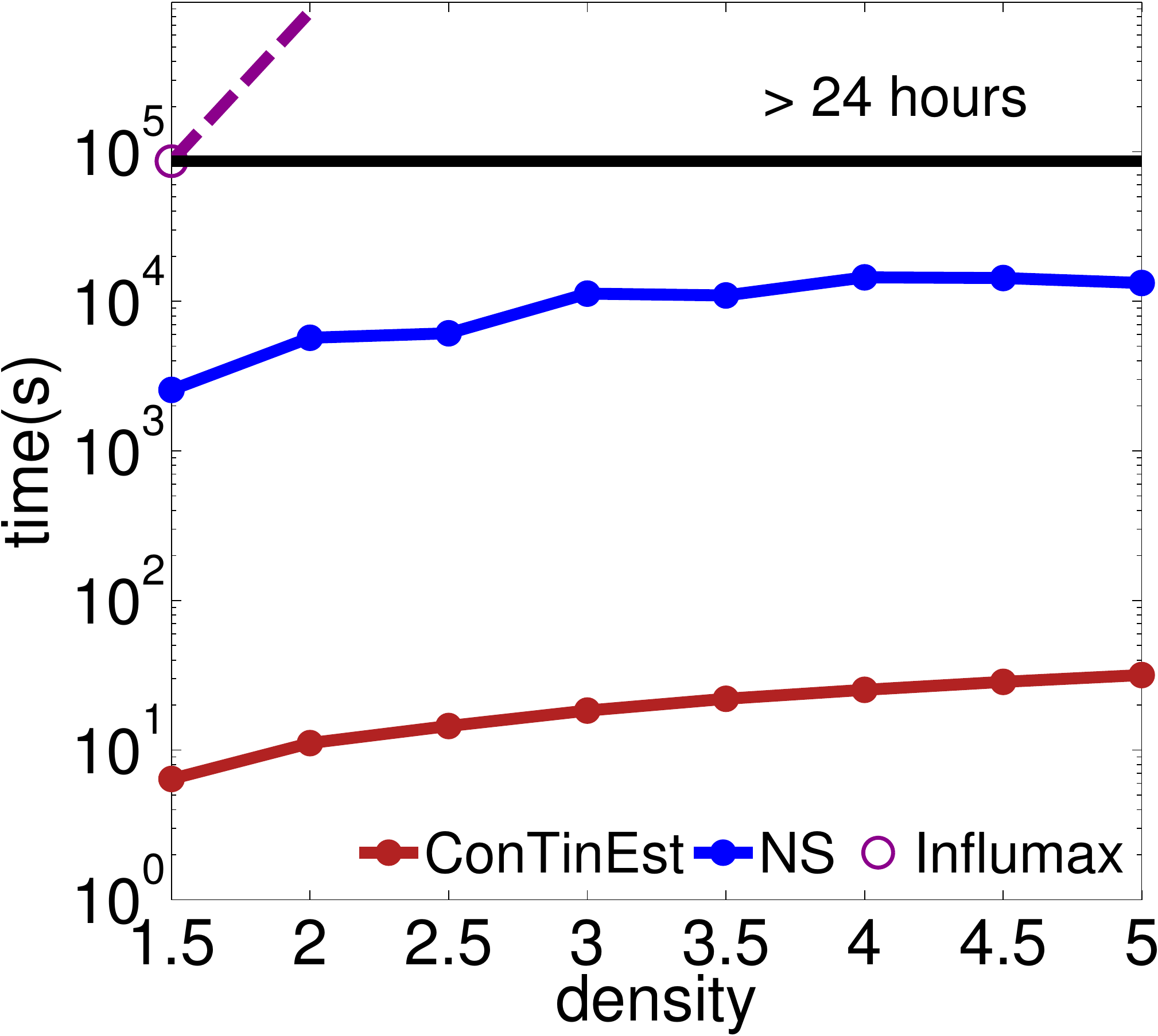} &
    \includegraphics[width=0.33\textwidth]{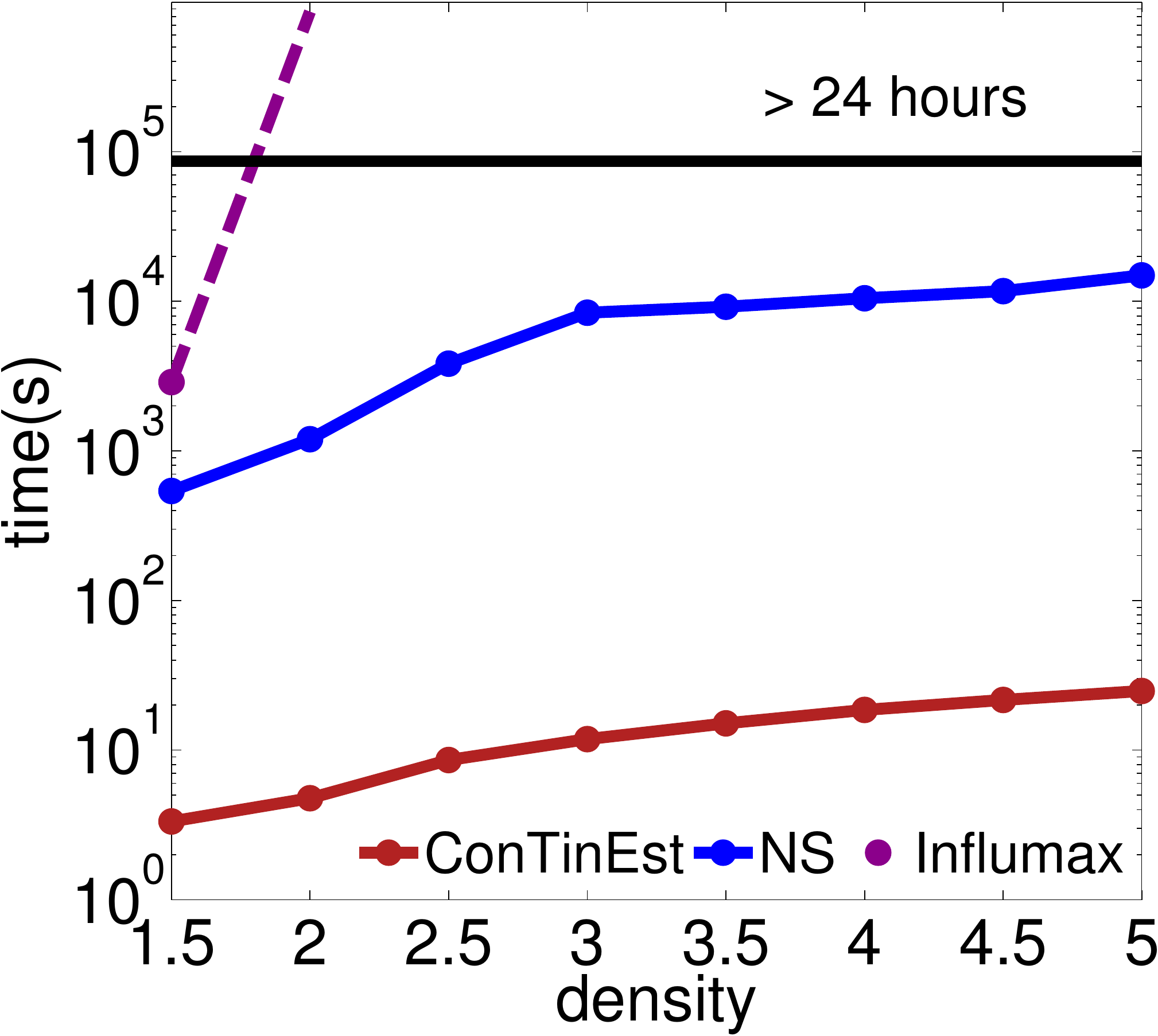} \\
    (a) Core-Periphery & (b) Random &  (c)  Hierarchy
  \end{tabular}
  }
  \caption{\label{density_speed} Runtime of selecting 10 sources in networks of 128 nodes with increasing density by $T = 10$.}
\end{figure}

\begin{figure}
  \centering
  \renewcommand{\tabcolsep}{1pt}
  {\small
  \begin{tabular}{c}
    \includegraphics[width=0.5\textwidth]{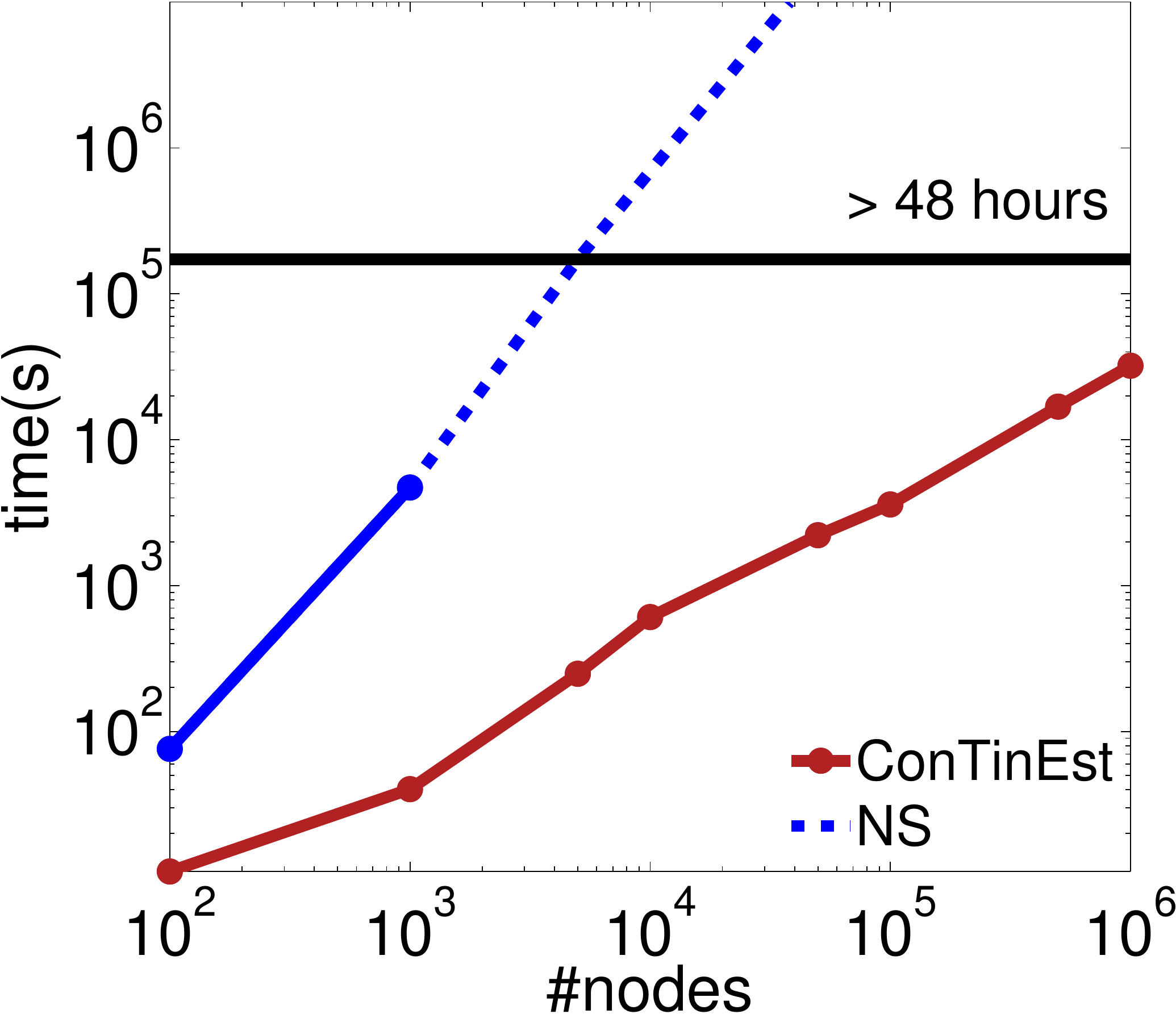}
  \end{tabular}
  }
  \caption{\label{continest_speed} For core-periphery networks by $T = 10$, runtime of selecting 10 sources with increasing network size from 100 to 1,000,000 by fixing 1.5 network density.}
\end{figure}

\begin{figure}
 \centering
 \renewcommand{\tabcolsep}{5pt}
 \begin{tabular}{cc}
\includegraphics[width=0.45\textwidth]{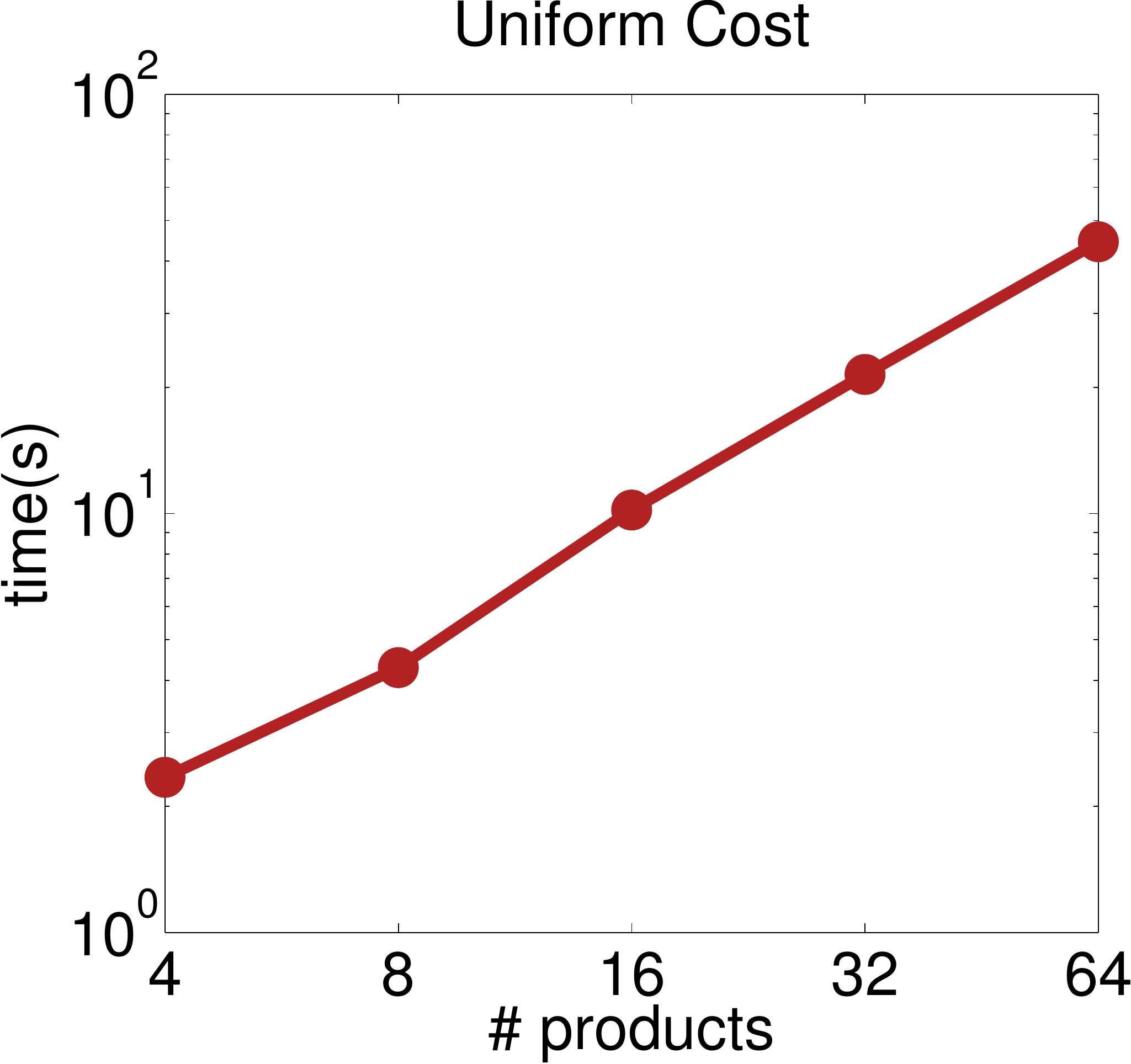} &
\includegraphics[width=0.45\textwidth]{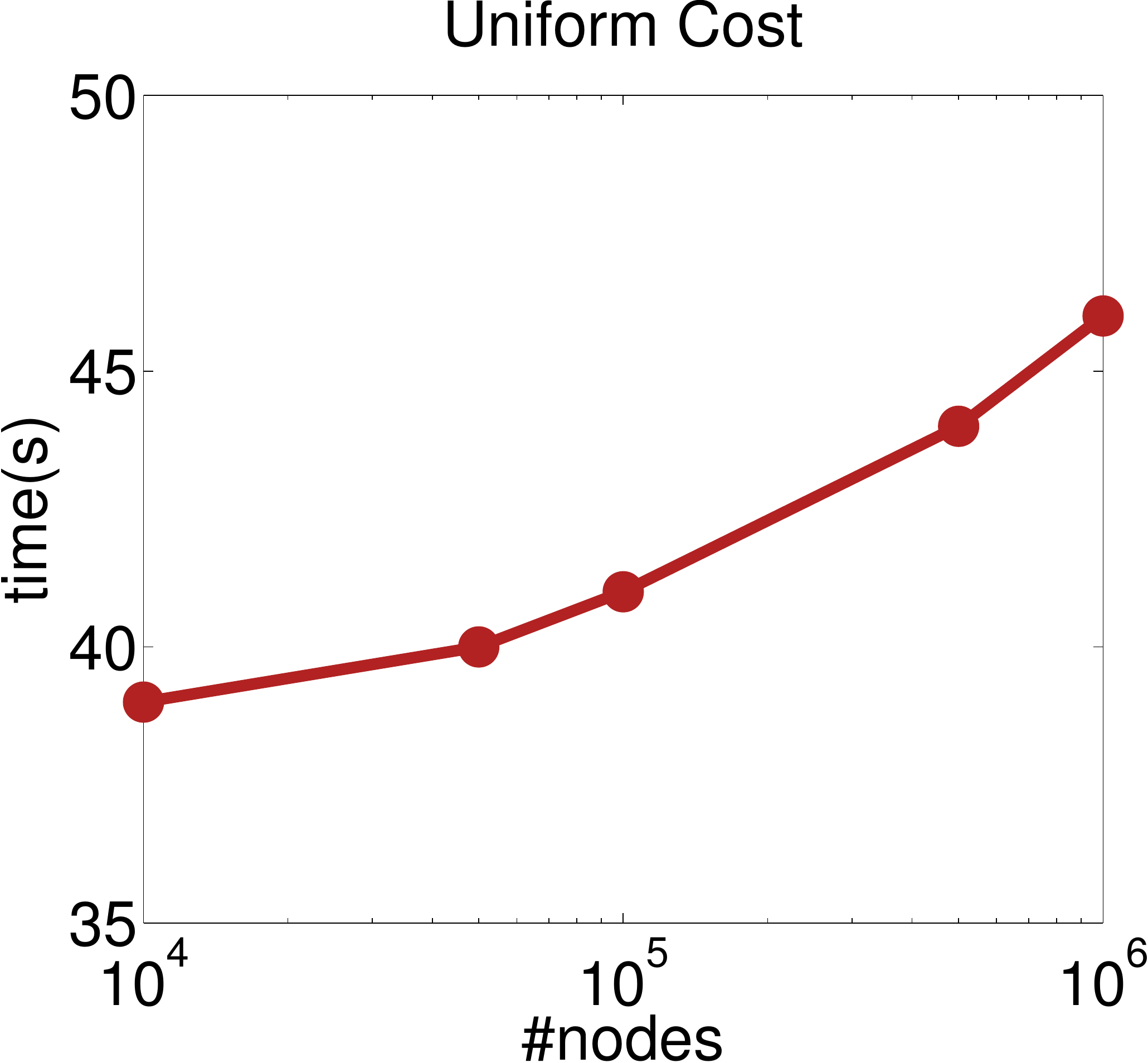} \\
(a) Speed by products & (b) Speed by nodes
\end{tabular}
 \caption{\label{allocation-speed} Over the 64 product-specific diffusion networks, each of which has 1,048,576 nodes, the runtime (a) of allocating increasing number of products and (b) of allocating 64 products to 512 users on networks of varying size. or all experiments, we have $T=5$ time window and fix product-constraint at 8 and user-constraint at 2.}
\end{figure}

In this section, we start with evaluating the scalability of the proposed algorithms on the classic influence maximization problem where we only have one product with the cardinality constraint on the users. 

We compare it to the state-of-the-art method \influmax~\citep{GomSch12} and the Naive Sampling (NS) method in terms of runtime for the continuous-time influence estimation and maximization. For \continmax, we draw 10,000 samples in the outer loop, each having 5 random labels in the inner loop. We plug \continmax as a subroutine into the classic greedy algorithm by~\citep{NemWolFis78}. For NS, we also draw 10,000 samples. The first two experiments are carried out in a single 2.4GHz processor.

Figure~\ref{source_speed} compares the performance of increasingly selecting sources (from 1 to 10) on small Kronecker networks. When the number of selected sources is 1, different algorithms essentially spend time estimating the influence for each node. \continmax outperforms other methods by order of magnitude and for the number of sources larger than 1, it can efficiently reuse computations for estimating influence for individual nodes. Dashed lines mean that a method did not finish in 24 hours, and the estimated run time is plotted.

Next, we compare the run time for selecting 10 sources with increasing densities (or the number of edges) in Figure~\ref{density_speed}. Again, \influmax and NS are order of magnitude slower due to their respective exponential and quadratic computational complexity in network density. In contrast, the run time of \continmax only increases slightly with the increasing density since its computational complexity is linear in the number of edges. We evaluate the speed on large core-periphery networks, ranging from 100 to 1,000,000 nodes with density 1.5 in Figure~\ref{continest_speed}. We report the parallel run time only for \continmax and NS (both are implemented by MPI running on 192 cores of 2.4Ghz) since \influmax is not scalable. In contrast to NS, the performance of \continmax increases linearly with the network size and can easily scale up to one million nodes.

Finally, we investigate the performance of \budgetmax in terms of runtime when using~\continmax as subroutine to estimate the influence. We can precompute the data structures and store the samples needed to estimate the influence function in advance. Therefore, we focus only on the runtime for the constrained influence maximization algorithm. \budgetmax runs on 64 cores of 2.4Ghz by using OpenMP to accelerate the first round of the optimization.
We report the allocation time for increasing number of products in Figure~\ref{allocation-speed}(a), which clearly shows a linear time complexity with respect to the size of the ground set. Figure~\ref{allocation-speed}(b) evaluates the runtime of allocation by varying the size of the network from 16,384 to 1,048,576 nodes.

\subsection{Experiments on Real Data}
In this section, we first quantify how well our proposed algorithm can estimate the true influence in the real-world dataset. Then, we evaluate the solution quality of the selected sources for influence maximization under different constraints. We have used the public MemeTracker datasets~\citep{LesBacKle09}, which contains more than 172 million news articles and blog posts from 1 million mainstream media sites and blogs.

\subsubsection{Influence Estimation}
\begin{figure}
  \centering
  \renewcommand{\tabcolsep}{0pt}
  {\small
  \begin{tabular}{c}
    \includegraphics[width=0.5\columnwidth]{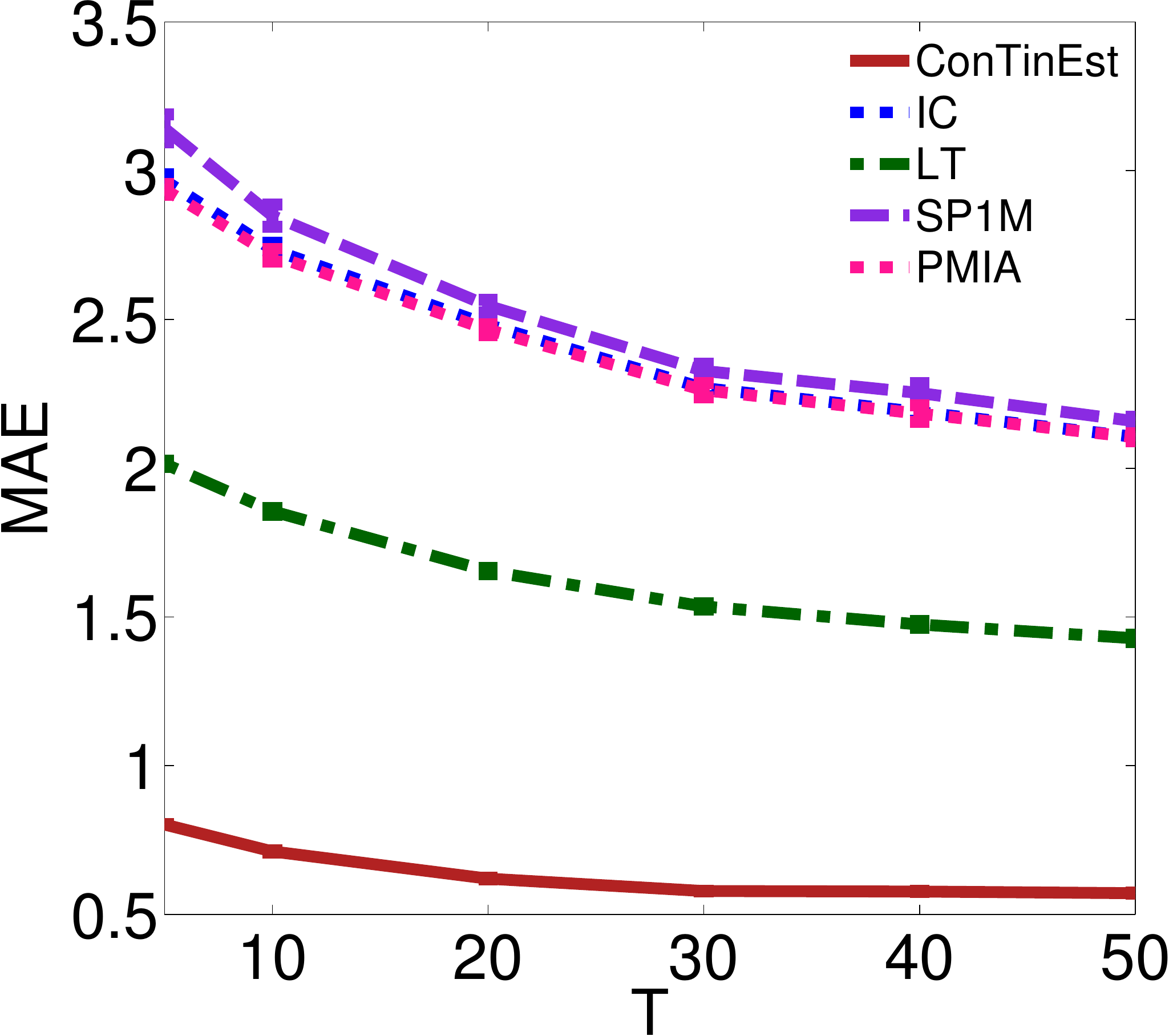}
  \end{tabular}
  }
  \caption{ \label{real_estimation_influence} In MemeTracker dataset, comparison of the accuracy of the estimated influence in terms of mean absolute error.}
 \end{figure}
We first trace the flow of information from one site to another by using the hyperlinks among articles and posts as in the work of~\citet{GomBalSch11, DuSonSmoYua12}. 
In detail, we extracted 10,967 hyperlink cascades among top 600 media sites. We then evaluate the accuracy of \continmax as follows.
First, we repeatedly split all cascades into a 80\% training set and a 20\% test set at random for five times. On each training set, we learn one continuous-time model, which we use for \continmax, and a discrete-time model, which we use for the competitive methods: IC, \spm, \pmia and MIAM-M.
For the continuous-time model, we opt for \netrate~\citep{GomBalSch11} with exponential transmission functions (fixing the shape parameter of the Weibull family to be one) to learn the diffusion networks by maximizing the likelihood of the observed cascades. 
For the discrete-time model, we learn the infection probabilities using the method by~\cite{NetPraSanSuj12}. 

Second, let $\Ccal(u)$ be the set of all cascades where $u$ was the source node. By counting the total number of distinct nodes infected before $T$ in $\Ccal(u)$, we can quantify the real influence of node 
$u$ up to time $T$.
Thus, we can evaluate the quality of the influence estimation by computing the average (across nodes) Mean Absolute Error (MAE) between the real and the estimated influence on the test set,
which we show in Figure~\ref{real_estimation_influence}. 
Clearly, \continmax performs the best statistically. Since the length of real cascades empirically conforms to a power-law distribution, where most cascades are very short (2-4 nodes), 
the gap of the estimation error is not too large. However, we emphasize that such accuracy improvement is critical for maximizing long-term influence since the estimation error for individuals 
will accumulate along the spreading paths. 
Hence, any consistent improvement in influence estimation can lead to significant improvement to the overall influence estimation and maximization task, which is further confirmed in the following 
sections.

\begin{figure}
  \centering
  \renewcommand{\tabcolsep}{5pt}
  {\small
  \begin{tabular}{cc}
    \includegraphics[width=0.45\columnwidth]{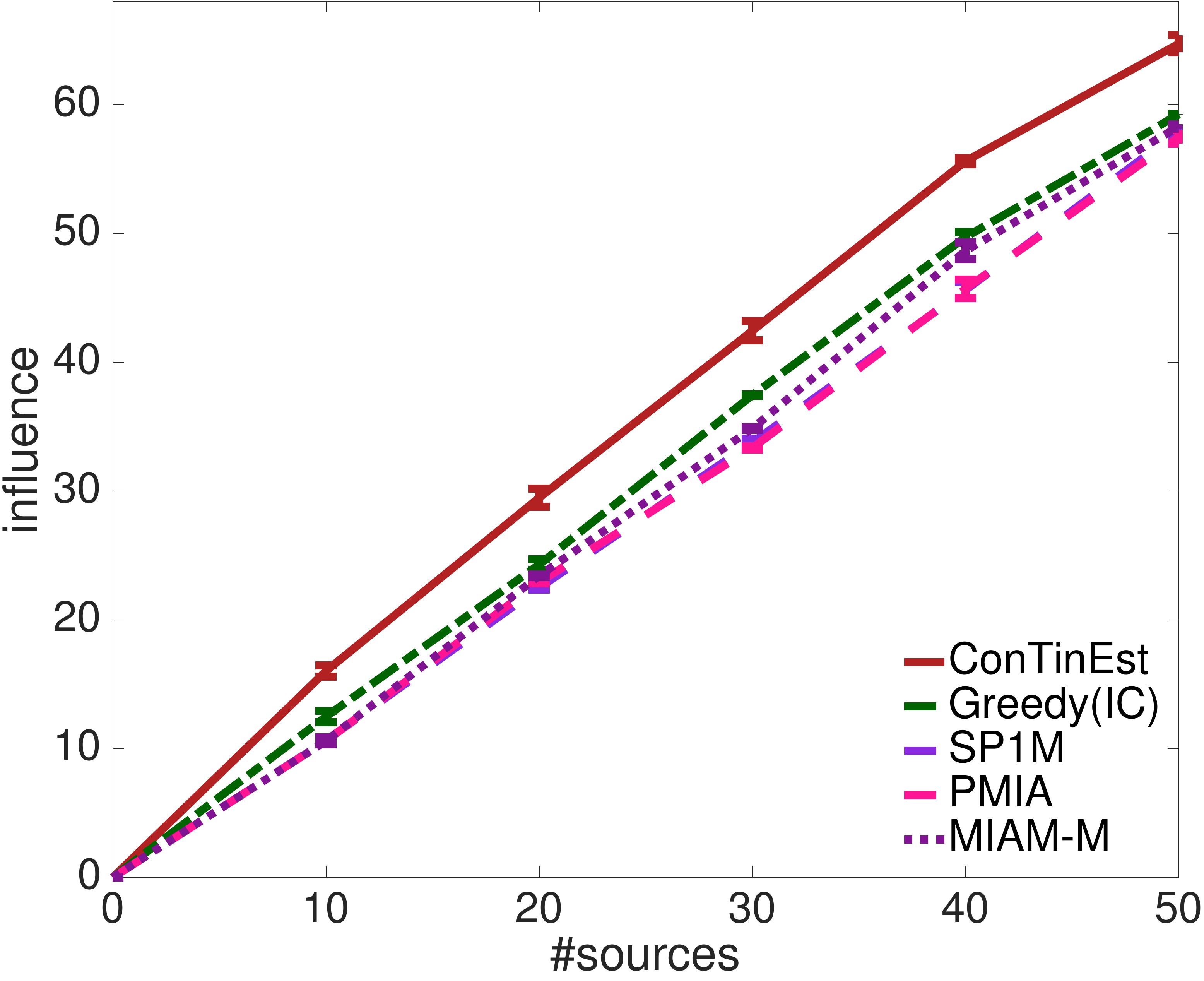} &
    ~\includegraphics[width=0.45\columnwidth]{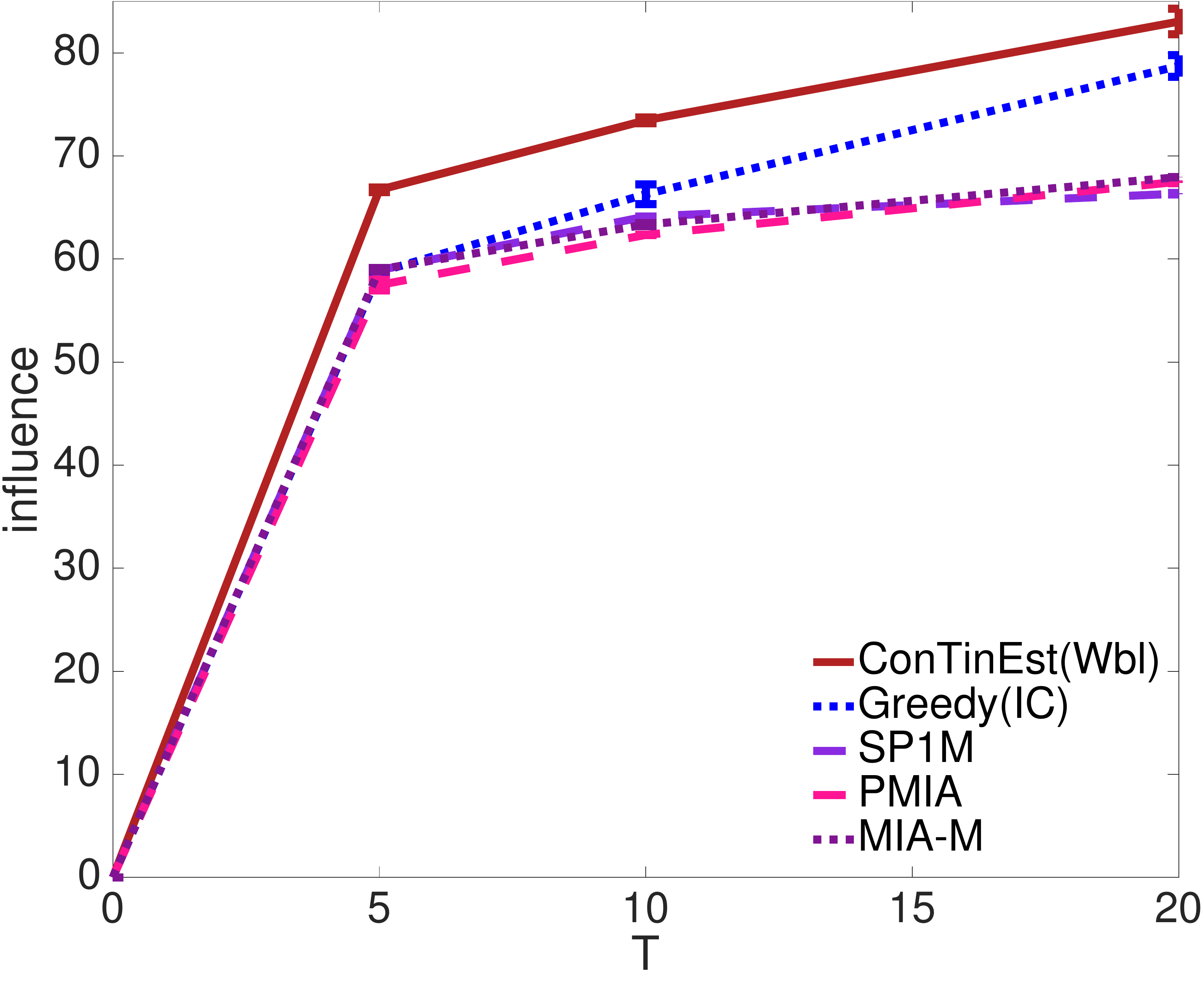} \\
    (a) Influence vs. \#sources  & (b) Influence vs. time\\

  \end{tabular}
  }
  \caption{ \label{real_influence_cardinality} In MemeTracker dataset, (a) comparison of the influence of the selected nodes by fixing the observation window $T=5$ and varying the number sources, and (b) comparison of the influence of the selected nodes by fixing the number of sources to 50 and varying the time window.}
 \end{figure}

%

\subsubsection{Influence Maximization with Uniform Cost}
We first apply \continmax to the continuous-time influence maximization task with the simple cardinality constraint on the users. We evaluate the influence of the selected nodes in the same spirit as influence estimation: the true influence is calculated as the total number of distinct nodes infected before $T$ based on $\Ccal(u)$ of the selected nodes. Figure~\ref{real_influence_cardinality} shows that the selected sources given by \continmax achieve the best performance as we vary the number of selected sources and the observation time window.

Next, we evaluate the performance of \budgetmax on cascades from Memetracker traced from quotes which are short textual phrases spreading through the websites. Because all published documents containing a particular quote are time-stamped, a cascade induced by the same quote is a collection of times when the media site first mentioned it.
In detail, we use the public dataset released by~\cite{GomLesSch2013a}, which splits the original Memetracker dataset into groups, each associated to a topic or real-world event. Each group 
consists of cascades built from quotes which were mentioned in posts containing particular keywords. 
We considered 64 groups, with at least 100,000 cascades, which play the role of products. Therein, we distinguish well-known topics, such as ``Apple" and ``Occupy Wall-Street", or real-world events, 
such as the Fukushima nuclear disaster in 2013 and the marriage between Kate Middleton and Prince William in 2011.

We then evaluate the accuracy of \budgetmax in the following way. First, we evenly split each group into a training and a test set and then learn one continuous-time model and a discrete-time model per group using the training sets.
As previously, for the continuous-time model, we opt for \netrate~\citep{GomBalSch11} with exponential transmission functions, and for the discrete-time model, we learn the infection 
probabilities using the method by~\cite{NetPraSanSuj12}, where the step-length is set to one.
Second, we run \budgetmax using both the continuous-time model and the discrete-time model. We refer to \budgetmax with the discrete-time model as the Greedy(discrete) 
method. 
Since we do have no ground-truth information about cost of each node, we focus our experiments using a uniform cost.
Third, once we have found an allocation over the learned networks, we evaluate the performance of the two methods using the cascades in the test set as follows: 
given a group-node pair $(i,j)$, let $\Ccal(j)$ denote the set of cascades induced by group $i$ that contains node $j$. Then, we take the average number of nodes coming after $j$ for 
all the cascades in $\Ccal(j)$ as a proxy of the average influence induced by assigning group $i$ to node $j$. 
Finally, the influence of an allocation is just the sum of the average influence of each group-node pair in the solution. In our experiments, we randomly select 128 nodes as our target users. 



\begin{figure}[t]
 \centering
 \renewcommand{\tabcolsep}{5pt}
 \begin{tabular}{cc}
\includegraphics[width=0.45\textwidth]{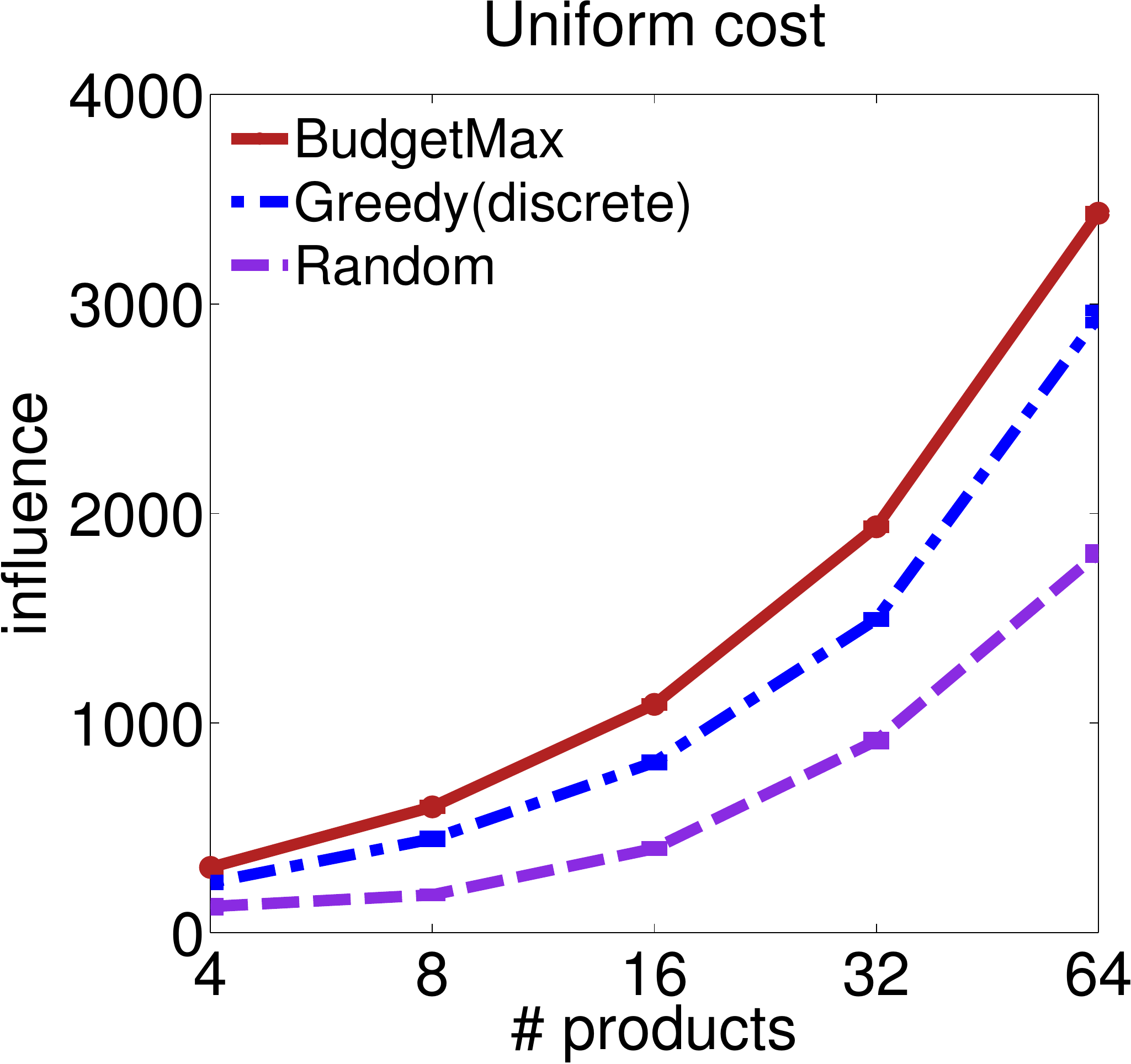}
& \includegraphics[width=0.45\textwidth]{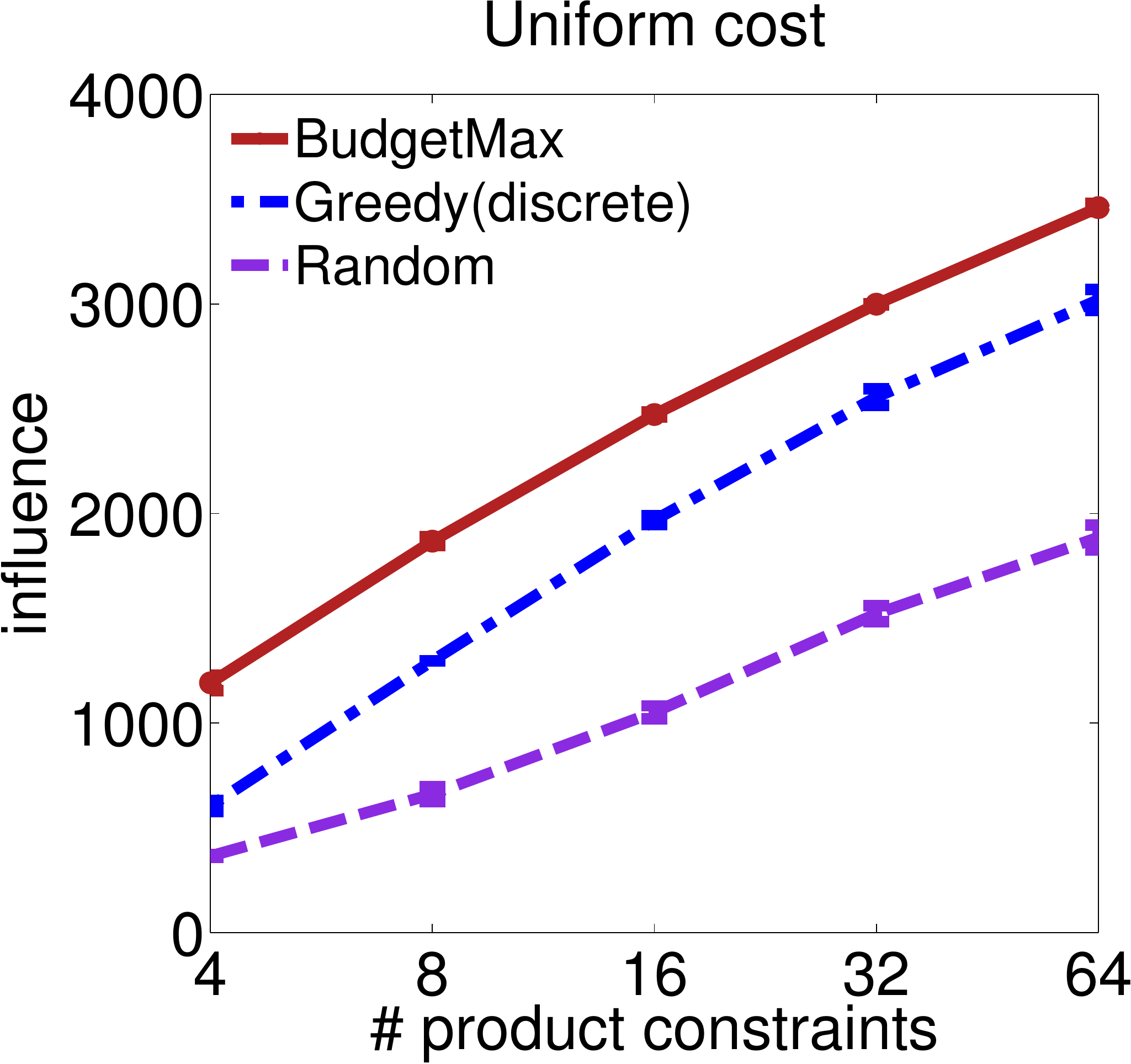}\\
(a) By products & (b)  By product constraints \\[5mm]
\includegraphics[width=0.45\textwidth]{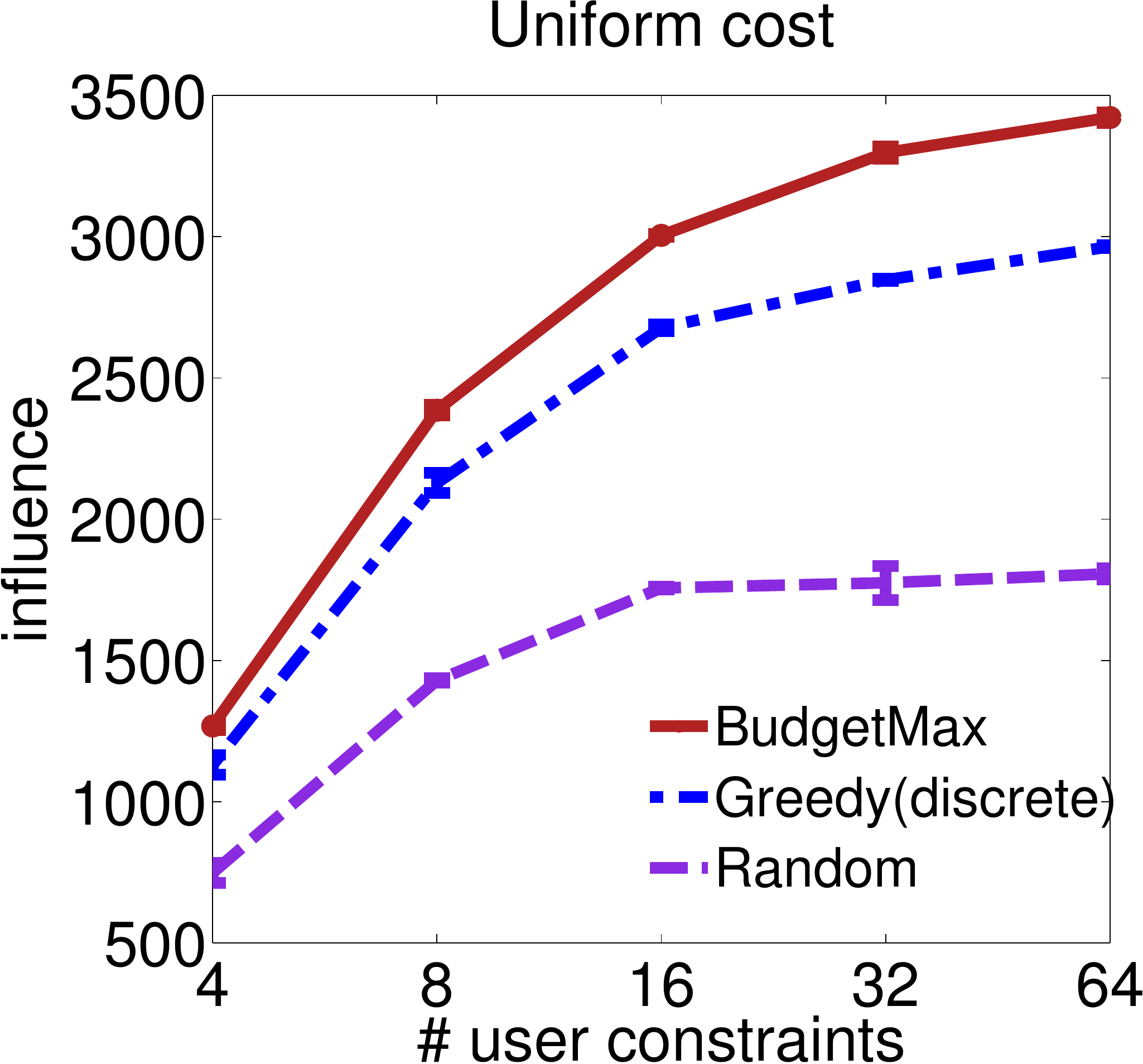}
&\includegraphics[width=0.45\textwidth]{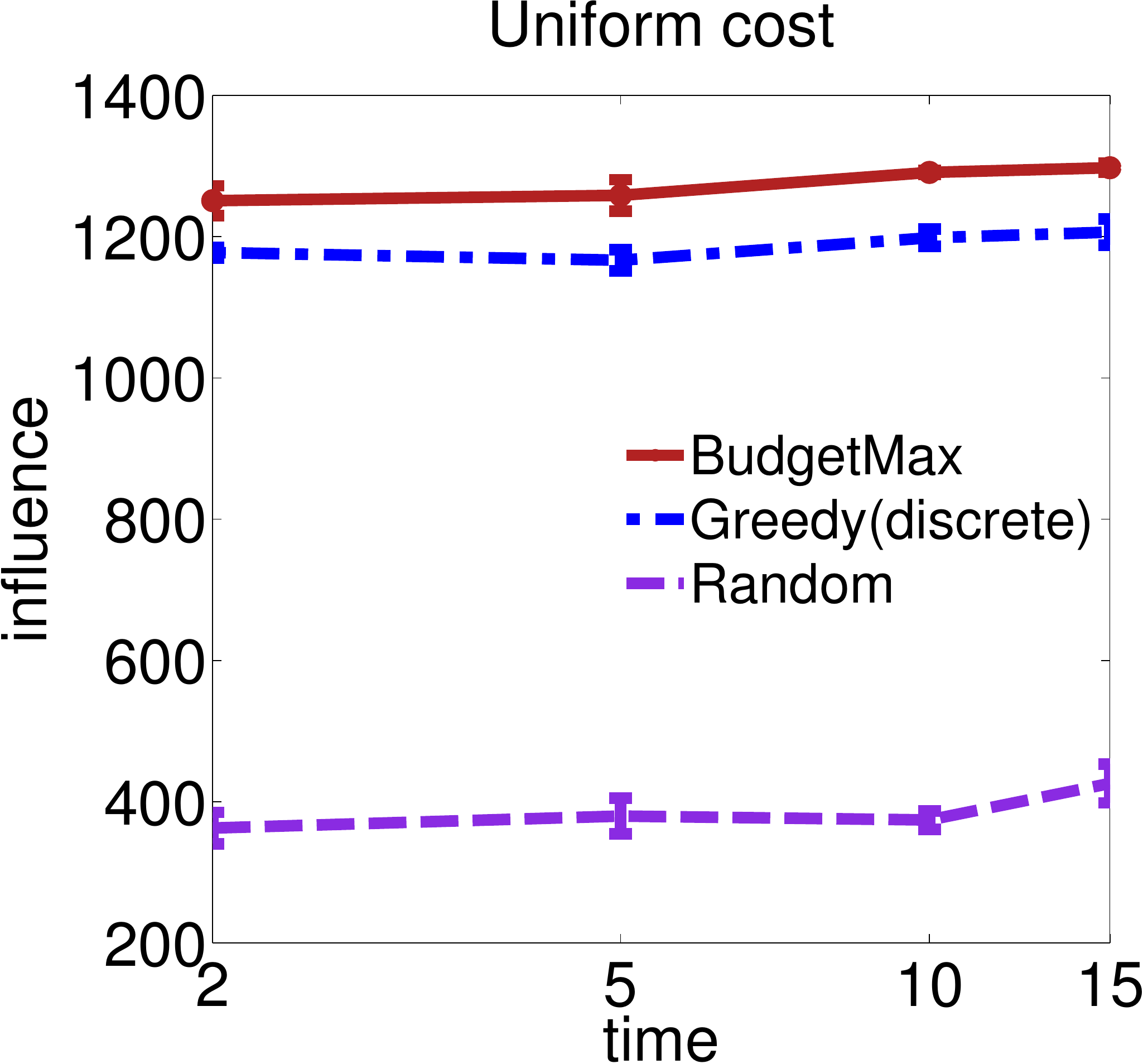} \\
(c) By user constraints & (d) By time
\end{tabular}
 \caption{\label{real_influence_uniform}  Over the inferred 64 product-specific diffusion networks, the true influence estimated from separated testing data (a) for increasing the number of products by fixing the product-constraint at 8 and user-constraint at 2; (b) for increasing product-constraint by fixing user-constraint at 2; (c) for increasing user-constraint by fixing product-constraint at 8; (d) for different time window T.}
\end{figure}


Figure~\ref{real_influence_uniform} summarizes the achieved influence against four factors: (a) the number of products, (b) the budget per product, (c) the budget per user and (d) the time 
window T, while fixing the other factors.
In comparison with the Greedy(IC) and a random allocation, \budgetmax finds an allocation that indeed induces the largest diffusion in the test data, with an average $20$-percent improvement overall.

In the end, Figure~\ref{demo-result} investigates qualitatively the actual allocations of groups (topics or real-world events; in red) and sites (in black). Here, we find examples that intuitively one could
expect: ``japantoday.com" is assigned to Fukushima Nuclear disaster or ``finance.yahoo.com" is assigned to ``Occupy Wall-street". 
Moreover, because we consider several topics and real-world events with different underlying diffusion networks, the selected nodes are not only very popular media sites such as nytimes.com or 
cnn.com but also several modest sites~\citep{BakHofMasWat11}, often specialized or local, such as freep.com or localnews8.com.

\begin{figure}[t]
 \centering
 \begin{tabular}{c}
\includegraphics[width=0.8\columnwidth]{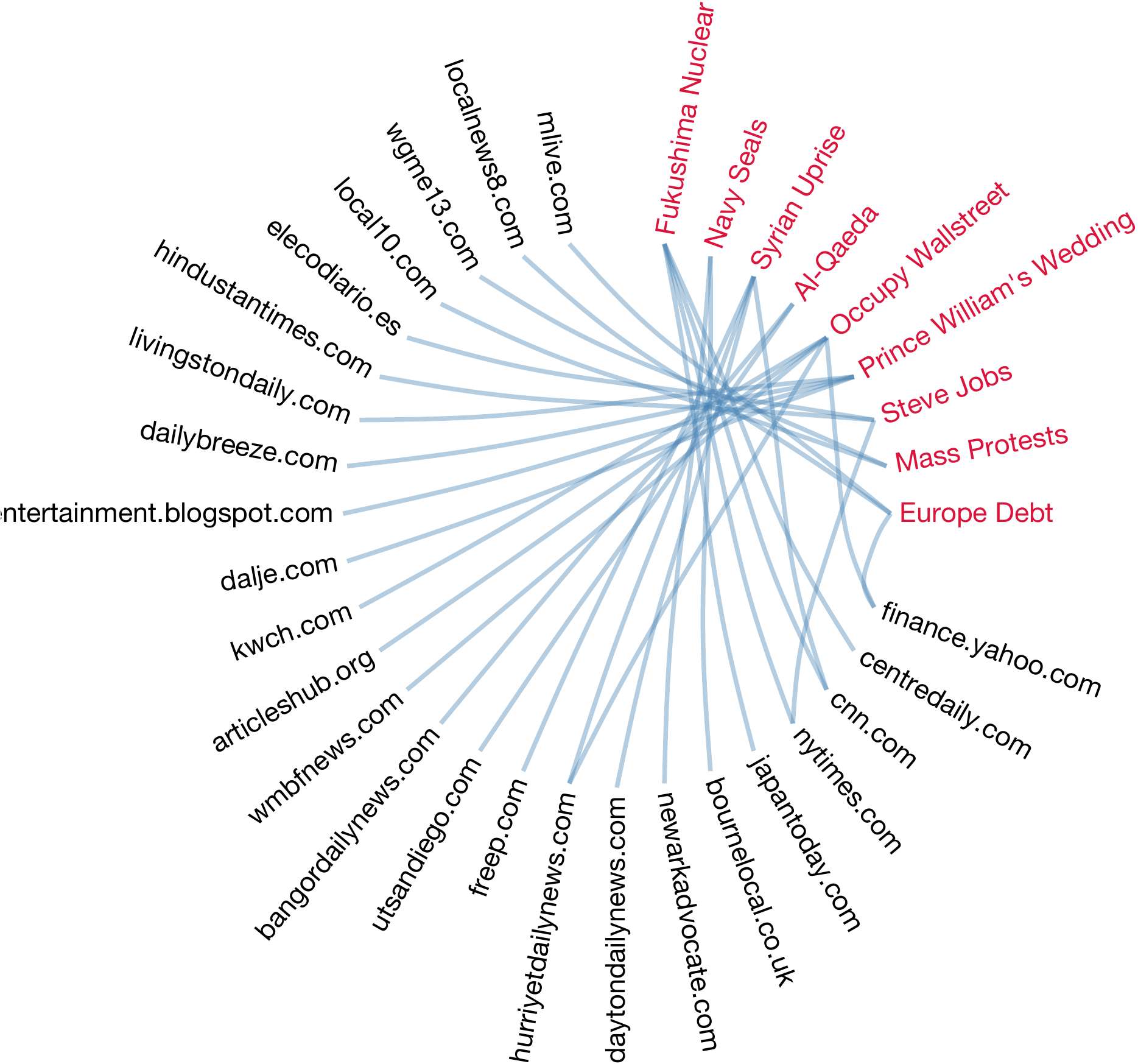}
\end{tabular}
 \caption{\label{demo-result}
The allocation of memes to media sites.}
\end{figure}

\section{Conclusion\label{sec:cl}}
We have studied the influence estimation and maximization problems in the continuous-time diffusion model. 
We first propose a randomized influence estimation algorithm \continmax, which can scale up to networks of millions of nodes while significantly improves over previous state of the art methods in terms of the accuracy of the estimated influence. 
Once we have a subroutine for efficient influence estimation in large networks, we then tackle the problem of maximizing the influence of multiple types of products (or information) in realistic continuous-time diffusion networks subject to various practical constraints: different products can have different diffusion structures; only influence within a given time window is considered; each user can only be recommended a small number of products; and each product has a limited campaign budget, and assigning it to users incurs costs. 
We provide a novel formulation as a submodular maximization under an intersection of matroid constraints and group-knapsack constraints, and then design an efficient adaptive threshold greedy algorithm with provable approximation guarantees, which we call \budgetmax. 
Experimental results show that the proposed algorithm performs remarkably better than other scalable alternatives in both synthetic and real-world datasets. There are also a few interesting open problems. For example, when the influence is estimated using \continmax, its error is a random variable. How does this affect the submodularity of the influence function? Is there an influence maximization algorithm that has better tolerance to the random error? These questions are left for future work.

\newpage

\acks{Nan Du is supported by the Facebook Graduate Fellowship 2014-2015. Maria-Florina Balcan and Yingyu Liang are supported in part by NSF grant CCF-1101283 and CCF-0953192, AFOSR grant FA9550-09-1-0538, ONR grant N00014-09-1-075, a Microsoft Faculty Fellowship, and a Raytheon Faculty Fellowship. Le Song is supported in part by NSF/NIH BIGDATA 1R01GM108341, ONR N00014-15-1-2340, NSF IIS-1218749, NSF CAREER IIS-1350983, Nvidia and Intel. Hongyuan Zha is supported in part by NSF/NIH BIGDATA 1R01GM108341, NSF DMS-1317424 and NSF CNS-1409635.}

%
%
%
%
%
%
%
%

\vskip 0.2in

\begin{thebibliography}{0}
\providecommand{\natexlab}[1]{#1}
\providecommand{\url}[1]{\texttt{#1}}
\expandafter\ifx\csname urlstyle\endcsname\relax
  \providecommand{\doi}[1]{doi: #1}\else
  \providecommand{\doi}{doi: \begingroup \urlstyle{rm}\Url}\fi

\end{thebibliography}


\begin{thebibliography}{36}
\providecommand{\natexlab}[1]{#1}
\providecommand{\url}[1]{\texttt{#1}}
\expandafter\ifx\csname urlstyle\endcsname\relax
  \providecommand{\doi}[1]{doi: #1}\else
  \providecommand{\doi}{doi: \begingroup \urlstyle{rm}\Url}\fi

\bibitem[Aalen et~al.(2008)Aalen, Borgan, and Gjessing]{AalBorGje08}
Odd Aalen, Oernulf Borgan, and H{\aa}kon~K Gjessing.
\newblock \emph{Survival and event history analysis: a process point of view}.
\newblock Springer, 2008.

\bibitem[Badanidiyuru and Vondr{\'a}k(2014)]{BadVon14}
Ashwinkumar Badanidiyuru and Jan Vondr{\'a}k.
\newblock Fast algorithms for maximizing submodular functions.
\newblock In \emph{SODA}. SIAM, 2014.

\bibitem[Bakshy et~al.(2011)Bakshy, Hofman, Mason, and Watts]{BakHofMasWat11}
Eytan Bakshy, Jake~M. Hofman, Winter~A. Mason, and Duncan~J. Watts.
\newblock Everyone's an influencer: Quantifying influence on twitter.
\newblock In \emph{WSDM}, pages 65--74, 2011.

\bibitem[Borgs et~al.(2012)Borgs, Brautbar, Chayes, and Lucier]{BorBraChaLuc12}
Christian Borgs, Michael Brautbar, Jennifer Chayes, and Brendan Lucier.
\newblock Influence maximization in social networks: Towards an optimal
  algorithmic solution.
\newblock \emph{arXiv preprint arXiv:1212.0884}, 2012.

\bibitem[Chen et~al.(2009)Chen, Wang, and Yang]{CheWanYan09}
Wei Chen, Yajun Wang, and Siyu Yang.
\newblock Efficient influence maximization in social networks.
\newblock In \emph{Proceedings of the 15th ACM SIGKDD international conference
  on Knowledge discovery and data mining}, pages 199--208. ACM, 2009.

\bibitem[Chen et~al.(2010{\natexlab{a}})Chen, Wang, and Wang]{CheWanWan2010}
Wei Chen, Chi Wang, and Yajun Wang.
\newblock Scalable influence maximization for prevalent viral marketing in
  large-scale social networks.
\newblock In \emph{Proceedings of the 16th ACM SIGKDD international conference
  on Knowledge discovery and data mining}, pages 1029--1038. ACM,
  2010{\natexlab{a}}.

\bibitem[Chen et~al.(2010{\natexlab{b}})Chen, Yuan, and
  Zhang]{CheYuaYifZha2010}
Wei Chen, Yifei Yuan, and Li~Zhang.
\newblock Scalable influence maximization in social networks under the linear
  threshold model.
\newblock In \emph{Data Mining (ICDM), 2010 IEEE 10th International Conference
  on}, pages 88--97. IEEE, 2010{\natexlab{b}}.

\bibitem[Chen et~al.(2011)Chen, Collins, Cummings, Ke, Liu, Rincon, Sun, Wang,
  Wei, and Yuan]{CheColCumKeetal2011}
Wei Chen, Alex Collins, Rachel Cummings, Te~Ke, Zhenming Liu, David Rincon,
  Xiaorui Sun, Yajun Wang, Wei Wei, and Yifei Yuan.
\newblock Influence maximization in social networks when negative opinions may
  emerge and propagate.
\newblock In \emph{SDM}, pages 379--390. SIAM, 2011.

\bibitem[Chen et~al.(2012)Chen, Lu, and Zhang]{CheLuZha2012}
Wei Chen, Wei Lu, and Ning Zhang.
\newblock Time-critical influence maximization in social networks with
  time-delayed diffusion process.
\newblock In \emph{AAAI}, 2012.

\bibitem[Clauset et~al.(2008)Clauset, Moore, and Newman]{ClaMooNew08}
Aaron Clauset, Cristopher Moore, and M.E.J. Newman.
\newblock Hierarchical structure and the prediction of missing links in
  networks.
\newblock \emph{Nature}, 453\penalty0 (7191):\penalty0 98--101, 2008.

\bibitem[Cohen(1997)]{Cohen1997}
Edith Cohen.
\newblock Size-estimation framework with applications to transitive closure and
  reachability.
\newblock \emph{Journal of Computer and System Sciences}, 55\penalty0
  (3):\penalty0 441--453, 1997.

\bibitem[Datta et~al.(2010)Datta, Majumder, and Shrivastava]{SamAniNis2010}
Samik Datta, Anirban Majumder, and Nisheeth Shrivastava.
\newblock Viral marketing for multiple products.
\newblock In \emph{Proceedings of the 2010 IEEE International Conference on
  Data Mining}, pages 118--127, 2010.

\bibitem[Du et~al.(2012)Du, Song, Smola, and Yuan]{DuSonSmoYua12}
Nan Du, Le~Song, Alex Smola, and Ming Yuan.
\newblock Learning networks of heterogeneous influence.
\newblock In \emph{Advances in Neural Information Processing Systems 25}, pages
  2789--2797, 2012.

\bibitem[Du et~al.(2013{\natexlab{a}})Du, Song, Gomez-Rodriguez, and
  Zha]{DuSonZhaGom13}
Nan Du, Le~Song, Manuel Gomez-Rodriguez, and Hongyuhan Zha.
\newblock Scalable influence estimation in continuous time diffusion networks.
\newblock In \emph{Advances in Neural Information Processing Systems 26},
  2013{\natexlab{a}}.

\bibitem[Du et~al.(2013{\natexlab{b}})Du, Song, Woo, and Zha]{DuSonWooZha13}
Nan Du, Le~Song, Hyenkyun Woo, and Hongyuan Zha.
\newblock Uncover topic-sensitive information diffusion networks.
\newblock In \emph{Artificial Intelligence and Statistics (AISTATS)},
  2013{\natexlab{b}}.

\bibitem[Easley and Kleinberg(2010)]{EasKle10}
David Easley and Jon Kleinberg.
\newblock \emph{Networks, Crowds, and Markets: Reasoning About a Highly
  Connected World}.
\newblock Cambridge University Press, 2010.

\bibitem[Fujishige(2005)]{Fujishige05}
Satoru Fujishige.
\newblock \emph{Submodular functions and optimization}, volume~58.
\newblock Elsevier Science Limited, 2005.

\bibitem[Gomez-Rodriguez and Sch{\"o}lkopf(2012)]{GomSch12}
Manuel Gomez-Rodriguez and Bernhard Sch{\"o}lkopf.
\newblock Influence maximization in continuous time diffusion networks.
\newblock In \emph{Proceedings of the 29th International Conference on Machine
  Learning}, pages 313--320, 2012.

\bibitem[Gomez-Rodriguez et~al.(2011)Gomez-Rodriguez, Balduzzi, and
  Sch{\"o}lkopf]{GomBalSch11}
Manuel Gomez-Rodriguez, David Balduzzi, and Bernhard Sch{\"o}lkopf.
\newblock Uncovering the temporal dynamics of diffusion networks.
\newblock In \emph{Proceedings of the 28th International Conference on Machine
  Learning}, 2011.

\bibitem[Gomez-Rodriguez et~al.(2013)Gomez-Rodriguez, Leskovec, and
  Sch\"{o}lkopf]{GomLesSch2013a}
Manuel Gomez-Rodriguez, Jure Leskovec, and Bernhard Sch\"{o}lkopf.
\newblock {Structure and Dynamics of Information Pathways in On-line Media}.
\newblock In \emph{Proceedings of the 6th International Conference on Web
  Search and Web Data Mining}, 2013.

\bibitem[Goyal et~al.(2011{\natexlab{a}})Goyal, Lu, and
  Lakshmanan]{GoyLuLak2011a}
Amit Goyal, Wei Lu, and Laks V.~S. Lakshmanan.
\newblock Celf++: optimizing the greedy algorithm for influence maximization in
  social networks.
\newblock In \emph{WWW (Companion Volume)}, pages 47--48, 2011{\natexlab{a}}.

\bibitem[Goyal et~al.(2011{\natexlab{b}})Goyal, Lu, and
  Lakshmanan]{GoyLuLak2011b}
Amit Goyal, Wei Lu, and Laks V.~S. Lakshmanan.
\newblock Simpath: An efficient algorithm for influence maximization under the
  linear threshold model.
\newblock In \emph{ICDM}, pages 211--220, 2011{\natexlab{b}}.

\bibitem[Ienco et~al.(2010)Ienco, Bonchi, and Castillo]{LenBonCas10}
Dino Ienco, Francesco Bonchi, and Carlos Castillo.
\newblock The meme ranking problem: Maximizing microblogging virality.
\newblock In \emph{ICDM Workshops}, 2010.

\bibitem[Kempe et~al.(2003)Kempe, Kleinberg, and Tardos]{KemKleTar03}
David Kempe, Jon Kleinberg, and {\'E}va Tardos.
\newblock Maximizing the spread of influence through a social network.
\newblock In \emph{Proceedings of the ninth ACM SIGKDD international conference
  on Knowledge discovery and data mining}, pages 137--146. ACM, 2003.

\bibitem[Leskovec et~al.(2007)Leskovec, Krause, Guestrin, Faloutsos,
  VanBriesen, and Glance]{Leskovecetal07}
Jure Leskovec, Andreas Krause, Carlos Guestrin, Christos Faloutsos, Jeanne
  VanBriesen, and Natalie Glance.
\newblock Cost-effective outbreak detection in networks.
\newblock In P.~Berkhin, R.~Caruana, and X.~Wu, editors, \emph{Conference on
  Knowledge Discovery and Data Mining}, pages 420--429. ACM, 2007.
\newblock URL \url{http://doi.acm.org/10.1145/1281192.1281239}.

\bibitem[Leskovec et~al.(2009)Leskovec, Backstrom, and Kleinberg]{LesBacKle09}
Jure Leskovec, Lars Backstrom, and Jon Kleinberg.
\newblock Meme-tracking and the dynamics of the news cycle.
\newblock In \emph{Proceedings of the 15th ACM SIGKDD international conference
  on Knowledge discovery and data mining}, pages 497--506. ACM, 2009.

\bibitem[Leskovec et~al.(2010)Leskovec, Chakrabarti, Kleinberg, Faloutsos, and
  Ghahramani]{LesChaKleFaletal10}
Jure Leskovec, Deepayan Chakrabarti, Jon Kleinberg, Christos Faloutsos, and
  Zoubin Ghahramani.
\newblock Kronecker graphs: An approach to modeling networks.
\newblock \emph{Journal of Machine Learning Research}, 11\penalty0
  (Feb):\penalty0 985--1042, 2010.

\bibitem[Lu et~al.(2013)Lu, Bonchi, Amit, and Lakshmanan]{LuBonAmiLak13}
Wei Lu, Francesco Bonchi, Goyal Amit, and Laks V.~S. Lakshmanan.
\newblock The bang for the buck: fair competitive viral marketing from the host
  perspective.
\newblock In \emph{KDD}, pages 928--936, 2013.

\bibitem[Narayanam and Nanavati(2012)]{NarNan12}
Ramasuri Narayanam and Amit~A Nanavati.
\newblock Viral marketing for product cross-sell through social networks.
\newblock In \emph{Machine Learning and Knowledge Discovery in Databases}.
  2012.

\bibitem[Nemhauser et~al.(1978)Nemhauser, Wolsey, and Fisher]{NemWolFis78}
George Nemhauser, Laurence Wolsey, and Marshall Fisher.
\newblock An analysis of the approximations for maximizing submodular set
  functions.
\newblock \emph{Mathematical Programming}, 14:\penalty0 265--294, 1978.

\bibitem[Netrapalli and Sanghavi(2012)]{NetPraSanSuj12}
Praneeth Netrapalli and Sujay Sanghavi.
\newblock Learning the graph of epidemic cascades.
\newblock In \emph{SIGMETRICS/PERFORMANCE}, pages 211--222. ACM, 2012.
\newblock ISBN 978-1-4503-1097-0.

\bibitem[Richardson and Domingos(2002)]{RicDom02}
Matthew Richardson and Pedro Domingos.
\newblock Mining knowledge-sharing sites for viral marketing.
\newblock In \emph{Proceedings of the eighth ACM SIGKDD international
  conference on Knowledge discovery and data mining}, pages 61--70. ACM, 2002.

\bibitem[Schrijver(2003)]{Schrijver03}
Alexander Schrijver.
\newblock \emph{Combinatorial Optimization}, volume~24 of \emph{Algorithms and
  Combinatorics}.
\newblock Springer, 2003.

\bibitem[Soma et~al.(2014)Soma, Kakimura, Inaba, and
  Kawarabayashi]{SomKakInaKaw14}
Tasuku Soma, Naonori Kakimura, Kazuhiro Inaba, and Ken-ichi Kawarabayashi.
\newblock Optimal budget allocation: Theoretical guarantee and efficient
  algorithm.
\newblock In \emph{Proceedings of The 31st International Conference on Machine
  Learning}, pages 351--359, 2014.

\bibitem[Sun et~al.(2011)Sun, Chen, Liu, Wang, Sun, Zhang, and
  Lin]{SunCheLiuWanetal11}
Tao Sun, Wei Chen, Zhenming Liu, Yajun Wang, Xiaorui Sun, Ming Zhang, and
  Chin-Yew Lin.
\newblock Participation maximization based on social influence in online
  discussion forums.
\newblock In \emph{Proceedings of the International AAAI Conference on Weblogs
  and Social Media}, 2011.

\bibitem[Sviridenko(2004)]{Sviridenko04}
Maxim Sviridenko.
\newblock A note on maximizing a submodular set function subject to knapsack
  constraint.
\newblock \emph{Operations Research Letters}, 32:\penalty0 41--43, 2004.

\end{thebibliography}

\clearpage
\appendix

\section{Naive Sampling Algorithm}
\label{app:ns}

The graphical model perspective described in Section~\ref{sec:graphicalmodel} suggests a naive sampling (NS) algorithm for approximating $\sigma(\Acal,T)$:
\begin{itemize}
  \item[1.] Draw $n$ samples, $\cbr{\cbr{\tau_{ji}^l}_{(j,i)\in\Ecal}}_{l=1}^n$,~\iid~from the waiting time product distribution $\prod_{(j,i) \in \Ecal} f_{ji}(\tau_{ji})$;
  \item[2.] For each sample $\cbr{\tau_{ji}^l}_{(j,i)\in\Ecal}$ and for each node $i$, find the shortest path from source nodes to node $i$; count the number of nodes with $g_i\rbr{\cbr{\tau_{ji}^l}_{(j,i)\in\Ecal}}\leq T$;
  \item[3.] Average the counts across $n$ samples.
\end{itemize}

Although the naive sampling algorithm can handle arbitrary transmission function, it is not scalable to networks with millions of nodes. We need to compute the shortest path for each node and each sample, which
results in a computational complexity of $O(n |\Ecal| + n |\Vcal|\log|\Vcal|)$ for a single source node. The problem is even more pressing in the influence maximization problem, where we need to estimate the influence
of source nodes at different location and with increasing number of source nodes. To do this, the algorithm needs to be repeated, adding a multiplicative factor of $C |\Vcal|$ to the computational complexity ($C$ is
the number of nodes to select). Then, the algorithm becomes quadratic in the network size. When the network size is in the order of thousands and millions, typical in modern social network analysis, the naive sampling
algorithm become prohibitively expensive. Additionally, we may need to draw thousands of samples ($n$ is large), further making the algorithm impractical for large-scale problems.

\section{Least Label List}
\label{app:leastlabellist}

The notation ``$\text{argsort}((r_1,\ldots,r_{|\Vcal|}),\text{ascend})$'' in line 2 of Algorithm~\ref{a1} means that we sort the collection of random labels in ascending order and return the argument of the sort as an
ordered list.

\begin{algorithm}[h]
  \SetAlgoVlined
  \KwIn{a reversed directed graph $\Gcal=(\Vcal, \Ecal)$ with edge weights $\{\tau_{ji}\}_{(j,i)\in\Ecal}$, a node labeling $\cbr{r_i}_{i\in \Vcal}$}
  \KwOut{A list $r_\ast(s)$ for each $s\in \Vcal$}

  \lFor{each $s\in\Vcal$}{$d_s\leftarrow\infty, r_\ast(s)\leftarrow\emptyset$}

  \For{$i$ in $\text{argsort}((r_1,\ldots,r_{|\Vcal|}),\text{ascend})$}{
    empty heap ${\tt H}\leftarrow \emptyset$\;

    set all nodes except $i$ as unvisited\;

    push $(0,i)$ into heap ${\tt H}$\;

    \While{${\tt H} \ne\emptyset$}{
      pop $(d_\ast, s)$ with the minimum $d_\ast$ from ${\tt H}$\;

      add $(d_\ast, r_i)$ to the end of list $r_\ast(s)$\;

      $d_s\leftarrow d^\ast$\;

      \For{each unvisited out-neighbor $j$ of $s$}{
                set $j$ as visited\;

        \uIf{$(d, j)$ in heap ${\tt H}$}{
          Pop $(d, j)$ from heap ${\tt H}$\;

          Push $(\min\cbr{d,d_\ast + \tau_{js}}, j)$ into heap ${\tt H}$\;
        }
        \ElseIf{$d_\ast + \tau_{js} < d_j$}{
          Push $(d_\ast + \tau_{js}, j)$ into heap ${\tt H}$\;
        }
      }
    }
  }
  \caption{Least Label List}
  \label{a1}
\end{algorithm}

\begin{figure}[h!]
\begin{minipage}[b]{0.45\textwidth}
\includegraphics[width=1\textwidth,height=95pt]{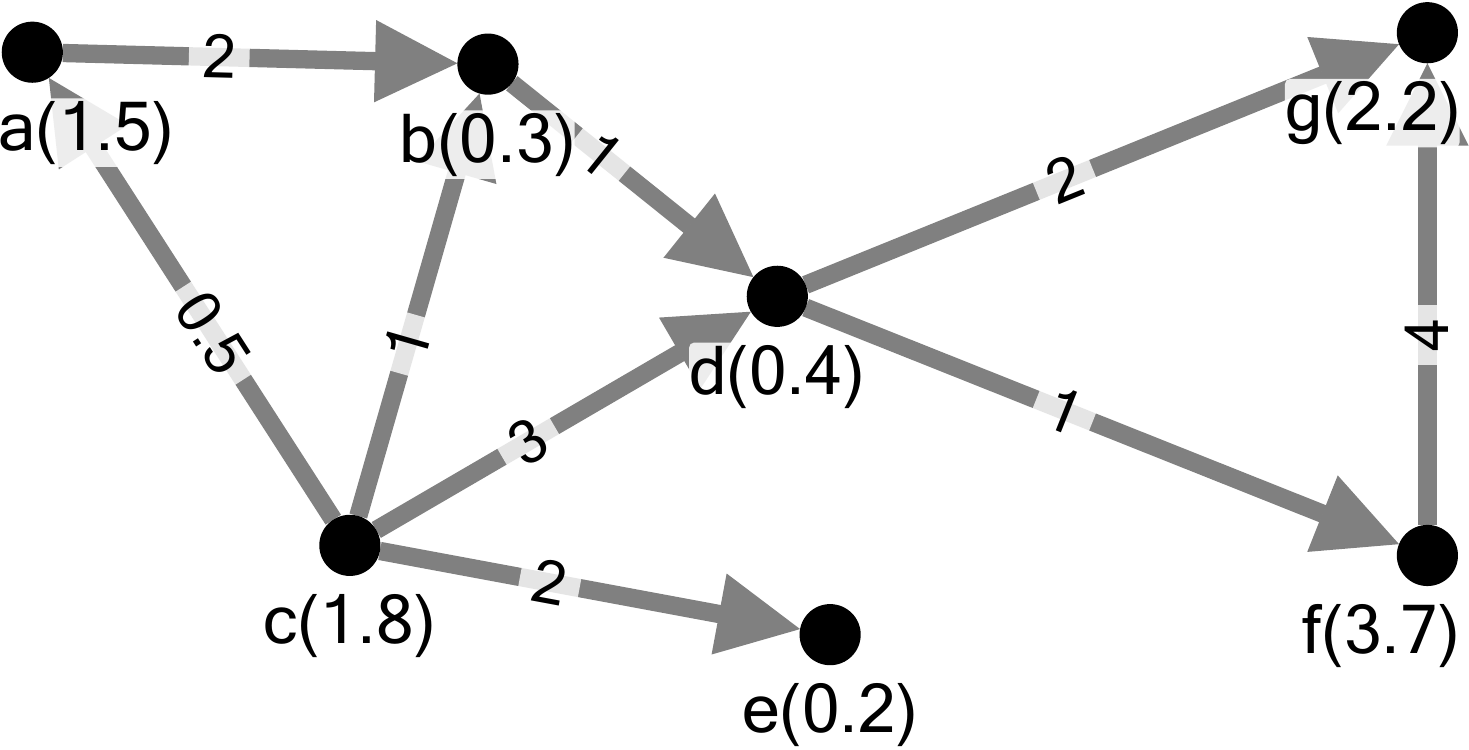}
\end{minipage}
\hspace{0.5cm}
\begin{minipage}[b]{0.45\textwidth}
{\small
  \begin{tabular}{l}
  $\bullet$ Node labeling : \\
  ~~~~$e(0.2) < b(0.3) < d(0.4) < a(1.5) < c(1.8) < g(2.2) < f(3.7)$\\\\
  $\bullet$ Neighborhoods: \\
  ~~~~$\Ncal(c,2)=\{a,b,c,e\};~\Ncal(c,3)=\{a,b,c,d,e,f\};$\\\\
  $\bullet$ Least-label list: \\
  ~~~~$r_\ast(c): (2,0.2),(1,0.3),(0.5,1.5),(0,1.8)$ \\\\
  $\bullet$ Query: $r_\ast(c,0.8) = r(a) = 1.5$ \\

  \end{tabular}
  }
\end{minipage}
\caption{\label{demo}Graph $\Gcal=(\Vcal,\Ecal)$, edge weights $\cbr{\tau_{ji}}_{(j,i)\in\Ecal}$, and node labeling $\cbr{r_i}_{i\in\Vcal}$ with the associated output from Algorithm~\ref{a1}.}
 \end{figure}

Figure~\ref{demo} shows an example of the Least-Label-List. The nodes from $a$ to $g$ are assigned to exponentially distributed labels with mean one shown in each parentheses. Given a query distance 0.8 for node $c$, we can binary-search its Least-label-list $r_*(c)$ to find that node $a$ belongs to this range with the smallest label $r(a) = 1.5$.

\section{Theorem 1}
\label{app:proof1}
\newtheorem{thm2}{Theorem}

\begin{thm2}
  Sample the following number of sets of random transmission times
  \begin{align*}
    n \geqslant \frac{C \Lambda}{\epsilon^2} \log\rbr{\frac{2 |\Vcal|}{\alpha}}
  \end{align*}
  where $\Lambda:= \max_{\Acal:\abr{\Acal}\leq C} 2\sigma(\Acal,T)^2 / (m-2) + 2Var(|\Ncal(\Acal,T)|)(m-1)/(m-2) + 2 a \epsilon /3$, $|\Ncal(\Acal, T)|\leqslant a$, and for each set of random transmission times, sample $m$ set of random labels. Then
  we can guarantee that
  $$
    \abr{\widehat{\sigma}(\Acal,T) - \sigma(\Acal,T)} \leqslant \epsilon
  $$
  simultaneously for all $\Acal$ with $\abr{\Acal}\leqslant C$, with probability at least $1 - \alpha$.
\end{thm2}
\begin{proof}
  Let $S_\tau:=\abr{\Ncal(\Acal,T)}$ for a fixed set of $\{\tau_{ji}\}$ and then $\sigma(\Acal,T) = \EE_{\tau}[S_\tau]$. The randomized algorithm with $m$ randomizations produces an unbiased estimator $\Shat_\tau=(m-1)/(\sum_{u=1}^m r_\ast^u)$ for $S_\tau$,~\ie,~$\EE_{r|\tau}[\Shat_\tau]=S_\tau$, with variance $\EE_{r|\tau}[(\Shat_\tau - S_\tau)^2] = S_\tau^2 / (m-2)$~\citep{Cohen1997}.

  Then $\Shat_\tau$ is also an unbiased estimator for $\sigma(\Acal,T)$, since $\EE_{\tau,r}[\Shat_\tau] = \EE_{\tau} \EE_{r|\tau} [\Shat_\tau] = \EE_{\tau} [S_\tau] = \sigma(\Acal,T)$. Its variance is
  \begin{align*}
    Var(\Shat_\tau)
    &:=\EE_{\tau,r}[(\Shat_\tau - \sigma(\Acal,T))^2] = \EE_{\tau,r}[(\Shat_\tau - S_\tau + S_\tau - \sigma(\Acal,T))^2] \\
    &= \EE_{\tau,r}[(\Shat_\tau - S_\tau)^2] + 2\, \EE_{\tau,r}[(\Shat_\tau - S_\tau)(S_\tau - \sigma(\Acal,T))] + \EE_{\tau,r}[(S_\tau - \sigma(\Acal,T))^2] \\
    &= \EE_{\tau}[S_\tau^2 / (m-2)] + 0\, + Var(S_\tau) \\
    &= \sigma(\Acal,T)^2 / (m-2) + Var(S_\tau)(m-1)/(m-2)
  \end{align*}

  Then using Bernstein's inequality, we have, for our final estimator $\widehat{\sigma}(\Acal,T) = \frac{1}{n}\sum_{l=1}^{n} \Shat_{\tau^l}$, that
  \begin{align}
    \Pr\cbr{\abr{\widehat{\sigma}(\Acal,T) - \sigma(\Acal,T)} \geqslant \epsilon } \leqslant  2 \exp\rbr{-\frac{n \epsilon^2}{2 Var(\Shat_\tau) + 2 a \epsilon / 3}}
    \label{eq:boundeddifference}
  \end{align}
  where $\Shat_\tau < a \leqslant |\Vcal|$.

  Setting the right hand side of relation~\eq{eq:boundeddifference}~to $\alpha$, we have that, with probability $1 - \alpha$, sampling the following number sets of random transmission times
  \begin{align*}
    n & \geqslant \frac{2 Var(\Shat_\tau) + 2 a \epsilon / 3}{\epsilon^2} \log\rbr{\frac{2}{\alpha}} \\
    & = \frac{2\sigma(\Acal,T)^2 / (m-2) + 2Var(S_\tau)(m-1)/(m-2) + 2 a \epsilon /3}{\epsilon^2} \log\rbr{\frac{2}{\alpha}}
  \end{align*}
  we can guarantee that our estimator to have error $\abr{\widehat{\sigma}(\Acal,T) - \sigma(\Acal,T)} \leqslant \epsilon$.


  If we want to insure that $\abr{\widehat{\sigma}(\Acal,T) - \sigma(\Acal,T)} \leqslant \epsilon$ simultaneously hold for all $\Acal$ such that $\abr{\Acal}\leqslant C \ll |\Vcal|$, we can first use union bound with relation~\eq{eq:boundeddifference}. In this case, we have that, with probability $1 - \alpha$, sampling the following number sets of random transmission times
  \begin{align*}
    n \geqslant \frac{C \Lambda}{\epsilon^2} \log\rbr{\frac{2 |\Vcal|}{\alpha}}
  \end{align*}
  we can guarantee that our estimator to have error $\abr{\widehat{\sigma}(\Acal,T) - \sigma(\Acal,T)} \leqslant \epsilon$ for all $\Acal$ with $\abr{\Acal}\leqslant C$.
  Note that we have define the constant $\Lambda:= \max_{\Acal:\abr{\Acal}\leq C} 2\sigma(\Acal,T)^2 / (m-2) + 2Var(S_\tau)(m-1)/(m-2) + 2 a \epsilon /3$.
\end{proof}

\section{Complete Proofs for Section~\ref{sec:influmaximization}}
\subsection{Uniform Cost}
\label{sec:influmaximization:proof:uniform}
In this section, we first prove a theorem for the general problem defined by Equation~\eq{pro:infMax}, considering a normalized monotonic submodular function $f(S)$ and general 
$P$ (Theorem~\ref{thm:dtgreedy}) and $k=0$, 
and then obtain the guarantee for our influence maximization problem (Theorem~\ref{thm:infMax_uni}).

Suppose $G=\cbr{g_1,\dots, g_{|G|}}$ in the order of selection, and let $G^t =\cbr{\g_1, \dots, \g_t}$. Let $C_t$ denote all those elements in $O \setminus G$ that satisfy the 
following: they are still feasible before selecting the $t$-th element $g_t$ but are infeasible after selecting $g_t$. Formally,
$$
  C_t=\cbr{z \in O \setminus G: \cbr{z} \cup G^{t-1} \in \Fcal, \cbr{z} \cup G^t \not\in \Fcal}.
$$

In the following, we will prove three claims and then use them to prove the Theorems~\ref{thm:dtgreedy} and~\ref{thm:infMax_uni}.
Recall that for any $i\in \Ground$ and $S \subseteq \Ground$, the marginal gain of $z$ with respect to $S$ is denoted as
$$
  f(z| S) := f(S\cup\cbr{z}) - f(S)
$$
and its approximation is denoted by
$$
  \widehat f(z| S) = \widehat  f(S\cup\cbr{z}) - \widehat f(S).
$$
Also, when $|f(S)- \widehat{f}(S)| \leq \epsilon$ for any $S \subseteq \Ground$, we have
$$
  |\widehat f(z| S)  - f(z| S) | \leq 2\epsilon
$$ for any $z\in \Ground$ and $S \subseteq \Ground$.

\smallskip
\noindent
\textbf{Claim~\ref{cla:size}. }
{\it
$\sum_{i=1}^t |C_i| \leq P t$, for $t =1, \dots, |G|$.
}

\begin{proof}
We first show the following property about matroids: for any $Q \subseteq \Ground$, the sizes of any two maximal independent subsets $T_1$ and $T_2$ of $Q$
can only differ by a multiplicative factor at most $P$.
Here, $T$ is a maximal independent subset of $Q$ if and only if:
\begin{itemize}
\item $T \subseteq Q$;
\item $T \in \Fcal = \bigcap_{i=1}^P \Ical_p$;
\item $T \cup \cbr{z} \not \in \Fcal$ for any $z \in Q \setminus T$.
\end{itemize}

To prove the property, note that for any element $ z \in T_1 \setminus T_2$,
$\cbr{z} \cup T_2$ violates at least one of the matroid constraints since $T_2$ is maximal.
Let $\{ V_i \}_{1 \leq i \leq P}$ denote all elements in $T_1 \setminus T_2$ that violate the $i$-th matroid,
and then partition $T_1 \cap T_2$ using these $V_i$'s so that they cover $T_1$.
Note that the size of each $V_i$ must be at most that of $T_2$, since otherwise by the Exchange axiom, there 
would exist $z \in V_i \setminus T_2$ that can be added to $T_2$ without violating the $i$-th matroid, leading 
to a contradiction.
Therefore, $|T_1|$ is at most $P$ times $|T_2|$.

Now we apply the property to prove the claim. Let $Q$ be the union of $G^{t}$ and $\bigcup_{i=1}^t C_t$.
On one hand, $G^{t}$ is a maximal independent subset of $Q$, since no element in $\bigcup_{i=1}^t C_t$ can be added to $G^t$ without violating the matroid constraints.
On the other hand, $\bigcup_{i=1}^t C_t$ is an independent subset of $Q$, since it is part of the optimal solution.
Therefore, $\bigcup_{i=1}^t C_t$ has size at most $P$ times $|G^t|$, which is $Pt$.
\end{proof}

\smallskip
\noindent
\textbf{Claim~\ref{cla:gain}. }
{\it
Suppose $g_t$ is selected at the threshold $\tau_t$. Then, 
$f(j|G^{t-1}) \leq (1+\delta) \tau_t + 4\epsilon + \frac{\delta}{\nGround} f(G), \forall j \in C_t$.
}

\begin{proof}
First, consider $\tau_t > w_{L+1} = 0$.
Since $g_t$ is selected at the threshold $\tau_t$, we have that $\widehat f( \g_t | \Greedy^{t-1}) \geq \tau_t$ and thus $f( \g_t | \Greedy^{t-1}) \geq \tau_t - 2\epsilon$.
Any $j \in C_t$ could have been selected at an earlier stage, since adding $j$ to $\Greedy^{t-1}$ would not have violated the constraint.
However, since $j \not\in \Greedy^{t-1}$, that means that $\widehat f( j| \Greedy^{t-1}) \leq (1+\delta) \tau_t$.
Then,
$$
  f(j|\Greedy^{t-1}) \leq (1+\delta) \tau_t + 2\epsilon.
$$

Second, consider $\tau_t = w_{L+1} = 0$. For each $j \in C_t$, we have $\widehat f(j|\Greedy) < \frac{\delta }{\nGround} d$.
Since the greedy algorithm must pick $g_1$ with $\widehat f(g_1) = d$ and $d \leq f(g_1)+\epsilon$, then
$$
  f(j|\Greedy) < \frac{\delta }{\nGround} f(G) + 4\epsilon.
$$
The claim follows by combining the two cases.
\end{proof}

\smallskip
\noindent
\textbf{Claim~\ref{cla:com}. }
{\it
The marginal gain of $O\setminus G$ satisfies
$$\sum_{j \in O\setminus G} f(j|\Greedy) \leq [(1+\delta) P + \delta] f(G)  + (6+2\delta) \epsilon P |G|. $$
}

\begin{proof}
Combining Claim~\ref{cla:size} and Claim~\ref{cla:gain}, we have:
\begin{align*}
 \sum_{j \in O\setminus G} f(j|\Greedy) = \sum_{t=1}^{|G|} \sum_{j \in C_t} f(j|\Greedy)
& \leq (1+\delta) \sum_{t=1}^{|G|} |C_t| \tau_t + \delta f(G)  + 4\epsilon \sum_{t=1}^{|G|} |C_t|\\
& \leq (1+\delta) \sum_{t=1}^{|G|} |C_t| \tau_t + \delta f(G)  + 4\epsilon P |G|.
\end{align*}
Further, $\sum_{t=1}^{|G|} |C_t| \tau_t \leq P \sum_{t=1}^{|G|} \tau_t$ by Claim~\ref{cla:size} and a technical lemma (Lemma~\ref{lem:seqsum}).
Finally, the claim follows from the fact that $f(G) = \sum_t f(g_t|G^{t-1}) \geq \sum_t (\tau_t - 2\epsilon)$.
\end{proof}

\begin{lemma}\label{lem:seqsum}
If $\sum_{i=1}^t \sigma_{i-1} \leq t$ for $t=1,\dots, K$ and $\rho_{i-1} \geq \rho_i$ for $i=1,\dots,K-1$ with $\rho_i, \sigma_i\geq 0$,
then $\sum_{i=1}^K \rho_i\sigma_i \leq \sum_{i=1}^K \rho_{i-1}$.
\end{lemma}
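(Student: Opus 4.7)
The plan is to prove this by summation by parts (Abel summation), converting the weighted sum $\sum_i \rho_i \sigma_i$ into a sum of non-negative contributions that can be bounded using the prefix-sum hypothesis $\sum_{i=1}^t \sigma_{i-1} \leq t$. The monotonicity of $\rho$ is what makes all the coefficients line up with the correct sign, so the reasoning is essentially a one-line manipulation once partial sums are introduced.

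First I would introduce the partial sums $S_t := \sum_{i=1}^t \sigma_{i-1}$ (with the convention $S_0 = 0$), so that the hypothesis reads $S_t \leq t$ for $t = 1, \ldots, K$, and each increment $\sigma_{i-1}$ can be recovered as $S_i - S_{i-1}$. Rewriting the right-hand indices of the target sum to match, I would then apply summation by parts in the form
\begin{equation*}
\sum_{i=1}^{K} \rho_{i-1}\,(S_i - S_{i-1}) \;=\; \rho_{K-1} S_K \;+\; \sum_{i=1}^{K-1} S_i \,(\rho_{i-1} - \rho_i),
\end{equation*}
where the boundary term uses $S_0 = 0$. The key point is that both $\rho_{i-1} - \rho_i \geq 0$ (by the monotonicity hypothesis) and $S_i \geq 0$, so each term on the right is non-negative and I can safely upper-bound $S_i$ by $i$ using the hypothesis.

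Next I would substitute $S_i \leq i$ and $S_K \leq K$ to obtain an upper bound of $K \rho_{K-1} + \sum_{i=1}^{K-1} i(\rho_{i-1} - \rho_i)$. Expanding the telescoping difference and re-indexing yields
\begin{equation*}
\sum_{i=1}^{K-1} i(\rho_{i-1} - \rho_i) \;=\; \sum_{i=1}^{K-1} \rho_{i-1} \;-\; (K-1)\rho_{K-1},
\end{equation*}
so adding the boundary term gives exactly $\sum_{i=0}^{K-1} \rho_i$, which is the desired right-hand side. I would then observe that this is precisely the form used in the proof of Claim~\ref{cla:com}, with the role of $\sigma_{i-1}$ played by $|C_i|/P$ and the role of $\rho_{i-1}$ played by the decreasing threshold sequence $\tau_i$.

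I do not anticipate a real obstacle here: once partial sums and Abel summation are in place, the rest is mechanical, and the only mild subtlety is matching the shifted indexing conventions between the hypothesis (with $\sigma_{i-1}$) and the conclusion (with $\rho_{i-1}$) so that the boundary terms from summation by parts line up correctly.
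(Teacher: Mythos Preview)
Your Abel summation argument is correct and yields the bound cleanly. The paper takes a different route: it casts the inequality as a linear program, maximizing $\sum \rho_{i-1}\sigma_{i-1}$ subject to the prefix-sum constraints, writes down the dual, exhibits the feasible dual solution $u_i=\rho_i-\rho_{i+1}$ (with the convention $\rho_K:=0$), and invokes weak LP duality; the dual objective then telescopes to $\sum_{i=1}^K \rho_{i-1}$. Your approach is more elementary and self-contained --- no dual needs to be written down --- while the paper's LP formulation makes the tightness of the bound transparent. Structurally the two are the same computation in disguise: your telescoping of $\sum_{i} i(\rho_{i-1}-\rho_i)$ is exactly the evaluation of the dual objective at the solution $u_i=\rho_i-\rho_{i+1}$, so the summation-by-parts identity is effectively the weak-duality inequality specialized to that dual point.
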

\begin{proof}
Consider the linear program
\begin{eqnarray*}
V&=&\max_{\sigma} \sum_{i=1}^K \rho_i\sigma_i\\
& \textrm{s.t.} & \sum_{i=1}^t \sigma_{i-1} \leq t, \ \ t=1,\dots,K,\\
&& \sigma_i\geq 0, \ \ i=1,\dots,K-1
\end{eqnarray*}
with dual
\begin{eqnarray*}
W&=&\min_{u} \sum_{i=1}^K t u_{t-1}\\
& \textrm{s.t.} & \sum_{t=i}^{K-1} u_t \geq \rho_i, \ \ i=0,\dots,K-1,\\
&& u_t\geq 0, \ \ t=0,\dots,K-1.
\end{eqnarray*}
As $\rho_i \geq \rho_{i+1}$, the solution $u_i = \rho_i - \rho_{i+1},i=0,\dots,K-1$ (where $\rho_K=0$)
is dual feasible with value $\sum_{t=1}^K t (\rho_{t-1}-\rho_t) = \sum_{i=1}^{K} \rho_{i-1}$.
By weak linear programming duality,  $\sum_{i=1}^K \rho_i\sigma_i \leq V \leq W \leq \sum_{i=1}^K \rho_{i-1}$.
\end{proof}

\begin{theorem}\label{thm:dtgreedy}
Suppose we use Algorithm~\ref{alg:greedyFixedDensity} to solve the problem defined by Equation~\eq{pro:infMax} with $k=0$, using $\rho=0$ and $\widehat f$ to estimate the 
function $f$, where $|\widehat f(S) - f(S)| \leq \epsilon$ for all $S \subseteq \Ground$.
It holds that the algorithm returns a greedy solution $\Greedy$ with
$$f(\Greedy) \geq \frac{1}{(1+2\delta)(P+1)} f(\Optimal) - \frac{4 P |\Greedy|}{P+\cur_f}\epsilon$$
where $O$ is the optimal solution, using $\Ocal(\frac{\nGround}{\delta}\log\frac{\nGround}{\delta})$ evaluations of $\widehat f$.
\end{theorem}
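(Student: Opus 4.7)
The plan is to combine Claim~\ref{cla:com} with the submodularity of $f$ to derive the approximation ratio, and then analyze the number of threshold iterations in Algorithm~\ref{alg:greedyFixedDensity} to obtain the runtime bound.

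For the approximation ratio, I would first observe that by monotonicity and submodularity of $f$,
\begin{align*}
f(\Optimal) \;\leq\; f(\Optimal \cup \Greedy) \;\leq\; f(\Greedy) + \sum_{j \in \Optimal\setminus \Greedy} f(j|\Greedy).
\end{align*}
Plugging in Claim~\ref{cla:com}, this gives $f(\Optimal) \leq [(1+\delta)P + \delta + 1] f(\Greedy) + (6+2\delta)\epsilon P |\Greedy|$. The key arithmetic identity is that $(1+\delta)P + \delta + 1 = (P+1)(1+\delta)$, so rearranging yields
\begin{align*}
f(\Greedy) \;\geq\; \frac{f(\Optimal)}{(1+\delta)(P+1)} - \frac{(6+2\delta)\,P\,|\Greedy|}{(1+\delta)(P+1)}\,\epsilon,
\end{align*}
which, after loosening $(1+\delta)$ to $(1+2\delta)$ in the denominator of the first term and simplifying the error coefficient (using that $P+\cur_f \leq (1+\delta)(P+1)$ for the relevant constant $\cur_f$), gives exactly the bound stated in the theorem.

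For the runtime, I would count the evaluations of $\widehat f$ performed by Algorithm~\ref{alg:greedyFixedDensity}. The thresholds $w_t = d_\rho/(1+\delta)^t$ decrease geometrically until $w_L \leq \delta d/\nGround$, so $(1+\delta)^L \geq \nGround d_\rho/(\delta d)$, and since $d_\rho \leq d$, we obtain $L = \Ocal\bigl(\tfrac{1}{\delta}\log \tfrac{\nGround}{\delta}\bigr)$ thresholds (plus the final $w_{L+1}=0$ level). At each threshold the algorithm scans the at most $\nGround$ candidate elements $z\notin \Greedy$ and evaluates $\widehat f(z|\Greedy)$ once for each, so the total number of evaluations is $\Ocal\bigl(\tfrac{\nGround}{\delta}\log \tfrac{\nGround}{\delta}\bigr)$, as claimed.

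The main obstacle is not any single step but rather bookkeeping: matching the constants and the factor $(1+2\delta)$ versus $(1+\delta)$, and tracking how the additive error from $\widehat f$ propagates correctly through Claim~\ref{cla:gain} (where the factor $4\epsilon$ appears) and through the telescoping in Claim~\ref{cla:com} (where it gets multiplied by $P|G|$). Once these constants are handled, the proof is a direct substitution of Claim~\ref{cla:com} into the submodularity inequality, followed by the geometric-series runtime count described above.
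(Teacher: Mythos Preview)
Your proposal is correct and follows essentially the same route as the paper: apply monotonicity and submodularity to get $f(O)\le f(G)+\sum_{j\in O\setminus G}f(j\mid G)$, invoke Claim~\ref{cla:com}, use the identity $(1+\delta)P+\delta+1=(1+\delta)(P+1)$, and rearrange; then count the $\Ocal(\frac{1}{\delta}\log\frac{N}{\delta})$ geometric thresholds times $\Ocal(N)$ evaluations per threshold. The paper's own proof is in fact terser than yours on both the arithmetic and the runtime count, so your level of detail is more than sufficient.
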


\begin{proof}
By submodularity and Claim~\ref{cla:com}, we have:
\begin{align*}
f(O) & \leq f(O\cup G) \leq f(G) + \sum_{j \in O\setminus G} f(j|\Greedy) \leq (1+\delta)(P + 1) f(G)  + (6+2\delta) \epsilon P |G|,
\end{align*}
which leads to the bound in the theorem.

Since there are $\Ocal(\frac{1}{\delta}\log \frac{\nGround}{\delta})$ thresholds, and there are $\Ocal(\nGround)$ evaluations at each threshold, the number of evaluations 
is bounded by $\Ocal(\frac{\nGround}{\delta}\log\frac{\nGround}{\delta})$.
\end{proof}

Theorem~\ref{thm:dtgreedy} essentially shows $f(G)$ is close to $f(O)$ up to a factor roughly $(1+P)$, which then leads to the following guarantee for our influence maximization problem. Suppose product $i \in \Item$ spreads according to diffusion network $\Graph_i = (\Node, \Edge_i)$, and let $i^*=\argmax_{i\in\Item}|\Edge_i|$.

\smallskip
\noindent
\textbf{Theorem~\ref{thm:infMax_uni}. }
{\it
In the influence maximization problem with uniform cost,  Algorithm~\ref{alg:greedyFixedDensity} (with $\rho=0$) is able to output a solution $G$ that satisfies
$
  f(\Greedy) \geq \frac{1-2\delta}{3} f(\Optimal)
$
in expected time $\widetilde\Ocal\left(\frac{|\Edge_{i^*}|+|\Node|}{\delta^2}  + \frac{|\Item||\Node|}{\delta^3} \right).$
}

\begin{proof}
In the influence maximization problem, the number of matroids is $P=2$. 
Also note that $|G| \leq f(G) \leq f(O)$, which leads to $4 |\Greedy|\epsilon \leq 4\epsilon f(O)$.
The approximation guarantee then follows from setting $\epsilon\leq \delta/16$ when using \continmax~\citep{DuSonZhaGom13} to estimate the influence.

The runtime is bounded as follows.
In Algorithm~\ref{alg:greedyFixedDensity}, we need to estimate the marginal gain of adding one more product to the current solution.
In \continmax~\citep{DuSonZhaGom13}, building the initial data structure takes time
$$
    \Ocal\left((|\Edge_{i^*}|\log|\Node| + |\Node|\log^2 |\Node|) \frac{1}{\delta^2} \log \frac{|\Node|}{\delta} \right)
$$
and afterwards each function evaluation takes time $$
    \Ocal\left(\frac{1}{\delta^2} \log \frac{|\Node|}{\delta} \log\log|\Node|\right).
$$
As there are $\Ocal\left(\frac{\nGround}{\delta}\log\frac{\nGround}{\delta}\right)$
evaluations where $N=|\Item||\Node|$, the runtime of our algorithm follows.
\end{proof}

\subsection{General case}
\label{sec:influmaximization:proof:general}
As in the previous section, we first prove that a theorem for the problem defined by Equation~\eq{pro:infMax} with general normalized monotonic submodular function $f(S)$ and general $P$ 
(Theorem~\ref{thm:densityEnu}), and then obtain the guarantee for our influence maximization problem (Theorem~\ref{thm:infMax}).

\begin{theorem}\label{thm:densityEnu}
Suppose Algorithm~\ref{alg:densityEnu} uses $\widehat f$ to estimate the function $f$
which satisfies $|\widehat f(S) - f(S)| \leq \epsilon$ for all $S \subseteq \Ground$.
Then, there exists a $\rho$ such that
\begin{align*}
f(S_\rho) \geq \frac{\max\cbr{1,|A_\rho|} }{(P+2k+1)(1+2\delta)} f(O) - 8\epsilon |S_\rho|
\end{align*}
where $A_\rho$ is the set of active knapsack constraints:
$$
 A_\rho = \{i: S_\rho  \cup \{z\} \not\in \Fcal,  \forall z \in \Ground_{i*}\} .
$$
\end{theorem}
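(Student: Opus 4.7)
The plan is to pick a target threshold $\rho^* = \frac{2 f(O)}{P+2k+1}$ and argue that the enumeration in Algorithm~\ref{alg:densityEnu} contains some $\rho$ with $\rho \leq \rho^* \leq (1+\delta)\rho$, so losing at most a $(1+\delta)$ factor everywhere. For this $\rho$, I would partition the ``missing'' optimal elements as in the sketch: let $O_- := \{z \in O \setminus S_\rho : f(z|S_\rho) < c(z)\rho + 2\epsilon\}$ be the optimal elements whose marginal density is too low (these are precisely the ones the algorithm can legitimately abandon), and let $O_+ := (O \setminus S_\rho) \setminus O_-$ be the rest. By submodularity, $f(O) \leq f(S_\rho) + f(O_-|S_\rho) + f(O_+|S_\rho)$, so the task reduces to bounding each of the two marginal terms.

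For $O_-$, submodularity plus the definition give $f(O_-|S_\rho) \leq \sum_{z\in O_-} f(z|S_\rho) \leq \rho\, c(O_-) + 2\epsilon |O_-| \leq k\rho + 2\epsilon |O|$, where I use that the total cost across the $k$ group-knapsack constraints is at most $k$. For $O_+$, the density filter never prevented these elements from being considered, so the algorithm's behavior on them mirrors the uniform-cost analysis. I would replay the uniform-cost argument (Claims~\ref{cla:size}, \ref{cla:gain}, and \ref{cla:com}) verbatim, substituting $O_+$ for $O \setminus G$. The matroid-based counting argument (Claim~\ref{cla:size}) is unaffected, and Claim~\ref{cla:gain} still holds because every element of $O_+$ has density ratio at least $\rho$ and therefore competes under the same adaptive thresholds as $S_\rho$. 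This yields $f(O_+|S_\rho) \leq [(1+\delta)P + \delta]\, f(S_\rho) + (6+2\delta)\epsilon P |S_\rho|$.

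To handle the case $k_a = |A_\rho| \geq 1$, I would lower-bound $f(S_\rho)$ directly. For each active constraint $i \in A_\rho$, consider the moment the algorithm first encountered an element $z \in \Ground_{i*}$ whose addition would have exceeded budget $1$. All elements of $G_i := S_\rho \cap \Ground_{i*}$ plus $z$ have density at least $\rho$, and their total cost exceeds $1$, so $\widehat f(G_i \cup \{z\}) \geq \rho$; since $z$ is the last element attempted and the adaptive thresholds are geometric, the marginal gain of $z$ is at most a $(1+\delta)$-factor above that of the largest element already in $G_i$, hence $f(G_i) \geq \frac{1}{2(1+\delta)}\rho - O(\epsilon)$. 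Summing over the $k_a$ disjoint groups $\Ground_{i*}$ and using that $f = \sum_i f_i$ splits additively over the partition, I obtain $f(S_\rho) \geq \frac{k_a}{2(1+\delta)}\rho - O(\epsilon |S_\rho|)$. Plugging $\rho \geq \rho^*/(1+\delta) = \frac{2f(O)}{(P+2k+1)(1+\delta)}$ then yields the $k_a$-dependent bound.

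Combining the two cases by taking the better bound, and absorbing the $(1+\delta)$ losses from the enumeration and from Claim~\ref{cla:gain} into a single $(1+2\delta)$ factor using $(1+\delta)^2 \leq 1+2\delta + \delta^2$ for small $\delta$, gives the stated approximation with the $8\epsilon |S_\rho|$ additive error (the constant $8$ being slack to absorb the $O(\epsilon P |S_\rho|)$ terms from Claim~\ref{cla:com} and the $2\epsilon|O|$ term from $O_-$, noting $P=1$ in the applications of interest). The main obstacle I anticipate is the active-constraint case: the argument that ``the last attempted element has marginal gain bounded by what is already in $G_i$'' needs care because it relies on the adaptive threshold schedule and on the fact that $G_i$ is nonempty (if $G_i$ were empty, a single element $z$ with $c(z) \leq 1$ could not by itself exhaust the budget, giving a contradiction). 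Handling the inexact evaluation $\widehat f$ cleanly through this step — so that the $\epsilon$ losses remain linear in $|S_\rho|$ rather than in the total number of threshold attempts — is the delicate bookkeeping piece.
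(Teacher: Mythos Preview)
Your proposal is correct and follows essentially the same approach as the paper: the same choice $\rho^* = \frac{2f(O)}{P+2k+1}$, the same $O_+/O_-$ decomposition with the $O_+$ term handled by replaying the matroid-only analysis (Claims~\ref{cla:size}--\ref{cla:com}) and the $O_-$ term bounded via total knapsack cost, and the same direct density-based lower bound $f_i(G_i) \gtrsim \rho/2$ in the active-knapsack case. The only minor differences are bookkeeping: the paper bounds $|O_-| \leq P|S_\rho|$ via matroid maximality rather than using $|O|$, and it makes explicit that the $O_+$ analysis applies only in the case $|A_\rho|=0$ (so that no element of $O_+$ was ever blocked by a knapsack constraint), which you leave implicit but clearly intend.
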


\begin{proof}
Consider the optimal solution $O$ and set $\rho^* = \frac{2}{P+ 2k+1} f(O)$.
By submodularity, we have $d \leq f(O) \leq |\Ground| d$, so $\rho \in \sbr{ \frac{2d}{P+2k+1}, \frac{2|\Ground|d}{P+2k+1} }$,
and there is a run of Algorithm~\ref{alg:greedyFixedDensity} with $\rho$ such that $\rho^* \in [\rho, (1+\delta) \rho]$.
In the following we consider this run.

\smallskip
\noindent
{\bf Case 1} Suppose $|A_\rho| = 0 $. The key observation in this case is that since no knapsack constraints are active,
the algorithm runs as if there were only matroid constraints. Then, the argument for matroid constraints can be applied.
More precisely, let
$$
    O_+ := \cbr{z \in O\setminus S_\rho :  f(z|S_\rho) \geq c(z)\rho + 2\epsilon}
$$
$$
    O_- := \cbr{z \in O\setminus S_\rho : z \not\in O_+ }.
$$
Note that all elements in $O_+$ are feasible.
Following the argument of Claim~\ref{cla:com} in Theorem~\ref{thm:dtgreedy}, we have:
\begin{align}
    f(O_+|S_\rho) \leq ((1+\delta)P + \delta) f(S_\rho) + (4+2\delta) \epsilon P|S_\rho|.
\label{eqn:matroidcase1}
\end{align}
Also, by definition, the marginal gain of $O_-$ is:
\begin{align}
    f(O_-|S_\rho) \leq k \rho + 2\epsilon |O_-| \leq k \rho + 2\epsilon P |S_\rho|,
\label{eqn:matroidcase2}
\end{align}
where the last inequality follows from the fact that $S_\rho$ is a maximal independent subset, $O_-$ is an independent subset of $O\cup S_\rho$, and 
the sizes of any two maximal independent subsets in the intersection of $P$ matroids can differ by a factor of at most $P$.
Plugging (\ref{eqn:matroidcase1})(\ref{eqn:matroidcase2}) into $f(O) \leq f(O_+ |S_\rho) + f(O_- |S_\rho) + f(S_\rho)$, we obtain the bound
$$
    f(S_\rho) \geq \frac{f(O)}{(P+ 2 k + 1)(1+\delta)}  - \frac{(6+2\delta) \epsilon P|S_\rho| }{(P+ 1)(1+\delta)}.
$$

\smallskip
\noindent
{\bf Case 2}
Suppose $|A_\rho| > 0 $.
For any $i \in A_\rho$ (\ie, the $i$-th knapsack constraint is active), consider the step when $i$ is added to $A_\rho$.
Let $\Greedy_i = \Greedy \cap \Ground_{i*}$, and we have $c(\Greedy_i) + c(z) > 1$.
Since every element $g$ we include in $\Greedy_i$ satisfies $\widehat{f}(g|\Greedy)  \geq c(g) \rho$ with respect to the solution $\Greedy_i$ when $g$ is added.
Then $f(g|\Greedy) = f_i(g|\Greedy_i) \geq c(g) \rho - 2\epsilon$, and we have:
\begin{align}
f_i(\Greedy_i \cup \cbr{z}) & \geq \rho [c(\Greedy_i) + c(z)] - 2 \epsilon (|G_i| + 1) > \rho - 2 \epsilon (|G_i| + 1).\label{eqn:knapcase1}
\end{align}
Note that $\Greedy_i$ is non-empty since otherwise the knapsack constraint will not be active.
Any element in $\Greedy_i$ is selected before or at $w_{t}$, so $f_i(G_i) \geq w_{t} - 2\epsilon$.
Also, note that $z$ is not selected in previous thresholds before $w_{t}$, so $f_i( \cbr{z} | \Greedy_i) \leq (1+\delta) w_t + 2\epsilon$ and thus,
\begin{align}
f_i( \cbr{z} | \Greedy_i) \leq (1+\delta) f_i(\Greedy_i) + 2\epsilon (2+\delta). \label{eqn:knapcase2}
\end{align}
Combining Eqs. \ref{eqn:knapcase1} and \ref{eqn:knapcase2} into $ f_i(\Greedy_i \cup \cbr{z}) = f_i(\Greedy_i) + f_i( \cbr{z}  | \Greedy_i)$ leads to
\begin{align*}
f_i(\Greedy_i)  \geq \frac{\rho}{(2+\delta)} - \frac{2\epsilon(|G_i| + 3 + \delta) }{(2+\delta)}  & \geq \frac{1}{2(1+2\delta)} \rho^* - \frac{2\epsilon(|G_i| + 3 + \delta) }{(2+\delta)} \\
& \geq \frac{f(O)}{(P+2k+1)(1+2\delta)} - 5 \epsilon |G_i|.
\end{align*}
Summing up over all $i \in A_\rho$ leads to the desired bound.
\end{proof}

Suppose item $i \in \Item$ spreads according to
the diffusion network $\Graph_i = (\Node, \Edge_i)$. Let $i^*=\argmax_{i\in\Item}|\Edge_i|$.
By setting $\epsilon=\delta/16$ in Theorem~\ref{thm:densityEnu}, we have:

\smallskip
\noindent
\textbf{Theorem~\ref{thm:infMax}. }
{\it
In Algorithm~\ref{alg:densityEnu}, there exists a $\rho$ such that
$$
    f(S_\rho) \geq  \frac{\max\cbr{k_a, 1} }{(2|\Item|+2) (1+3\delta)} f(O)
$$
where $k_a$ is the number of active knapsack constraints.
The expected runtime to obtain the solution is $\widetilde\Ocal\left(\frac{|\Edge_{i^*}|+|\Node|}{\delta^2}  + \frac{|\Item||\Node|}{\delta^4} \right).$
}

\end{document}